\newtheorem{theorem}{Theorem}
\newtheorem{lemma}{Lemma}
\newtheorem{proposition}{Proposition}
\newtheorem{corollary}{Corollary}
\newtheorem{definition}{Definition}
\newtheorem{assumption}{Assumption}
\def\EMAIL#1{\href{mailto:#1}{#1}}
\begin{document}


\RUNAUTHOR{Gao et al.}

\RUNTITLE{Finite-Sample Analysis of Decentralized Q-Learning for Stochastic Games}

\TITLE{Finite-Sample Analysis of Decentralized Q-Learning for Stochastic Games}

\ARTICLEAUTHORS{%
\AUTHOR{Zuguang Gao}
\AFF{The University of Chicago Booth School of Business, Chicago, Illinois 60637, \EMAIL{zuguang.gao@chicagobooth.edu}}
\AUTHOR{Qianqian Ma}
\AFF{Department of Electrical and Computer Engineering, Boston University, Boston, Massachusetts 02215, \EMAIL{maqq@bu.edu}}
\AUTHOR{Tamer Ba\c{s}ar}
\AFF{Coordinated Science Laboratory, University of Illinois at Urbana-Champaign, Urbana, Illinois 61801, \EMAIL{basar1@illinois.edu}}
\AUTHOR{John R. Birge}
\AFF{The University of Chicago Booth School of Business, Chicago, Illinois 60637, \EMAIL{john.birge@chicagobooth.edu}}
} 

\ABSTRACT{%
{Learning in stochastic games is arguably the most standard and fundamental setting in multi-agent reinforcement learning (MARL). In this paper, we consider decentralized MARL in stochastic games in the non-asymptotic regime.}  {In particular, we establish the finite-sample complexity of fully decentralized Q-learning algorithms in a significant class of general-sum stochastic games (SGs) -- weakly acyclic SGs, which includes the common cooperative MARL setting with an identical reward to all agents (a Markov team problem) as a special case. We focus on the practical while challenging setting of \emph{fully decentralized} MARL, where neither the rewards nor the actions of other agents can be observed by each agent.  In fact, each agent is completely oblivious to the presence of other decision makers. Both the tabular and the linear function approximation cases have been considered. In the tabular setting, we analyze the sample complexity for the decentralized Q-learning algorithm to converge to a Markov perfect equilibrium (Nash equilibrium).} With linear function approximation, the results are for convergence to a linear approximated equilibrium -- a new notion of equilibrium that we propose -- which describes that each agent's policy is a best reply (to other agents) within a linear space. Numerical experiments are also provided for both settings to demonstrate the results.
}%


\KEYWORDS{stochastic games, sample complexity}
\MSCCLASS{91A06, 91A10, 91A15, 91A68}
\ORMSCLASS{Primary: Games/group decisions: stochastic; secondary: Analysis of algorithms
}
\HISTORY{}

\maketitle
%
\section{Introduction.}\label{intro}

Multi-agent learning has received extensive attention in recent years, and has achieved considerable success in application areas including traffic control, network routing, energy distribution, robotic systems, and social economic problems, where multiple agents learn concurrently how to solve a task by interacting with the same environment~(\citet{stone2000multiagent}). 
The canonical model for dynamic multi-agent interactions is \emph{stochastic games}, also known as Markov games~(\citet{littman1994markov}), which were  first introduced by~\citet{shapley1953stochastic}; we refer the interested reader to~\citet{busoniu2008comprehensive} and~\citet{zhang2021multi} for comprehensive surveys. Compared to repeated games, i.e., repeated play of static games~(\citet{wu2021eliciting}), stochastic games are more general in the sense that each stage game is affected by the previous joint actions of all agents, who may or may not be in a network~(\citet{correa2021network,kempe2020inducing,qu2020scalable,lin2021multi}) through the system state evolution, and thus are applicable to a broader set of problem settings~(\citet{stern2020dynamic,birge2021interfere}). 

There are different attributes that may be associated with a stochastic game. Depending on the number of agents and their reward functions, a stochastic game can be a two-agent zero-sum game and/or a multi-agent general-sum game, where the former games have two agents, and the reward of one agent is always the negation of the reward of the other agent, representing a fully competitive relationship, and the latter games may have $N$~agents for any~$N\ge 2$, with no restrictions on their reward functions~(\citet{basar1999dynamic}). Depending on the length of the game, a stochastic game can have either finite horizon or infinite horizon, where the objective of each agent is to choose a policy (that maps from a state to an action) to maximize her total reward over the length of the horizon (if finite), or to maximize her total discounted reward or time-averaged reward over an infinite horizon.

In all of these different types of stochastic games, the most commonly studied notion on the agents' joint policy is the so-called \emph{Markov perfect equilibrium}~(\citep{maskin1988theory}), or simply \emph{Nash equilibrium}~(\citet{nash1950equilibrium}) in some literature~(\citet{das2015pure}). Roughly speaking, in a Markov perfect equilibrium, each agent's policy is a \emph{best reply} (i.e., maximizes her own total (discounted) reward) to all other agents’ joint policy. Mainstreams of research include analyzing the hardness to compute the equilibria~(e.g., \citet{daskalakis2009complexity,daskalakis2013complexity,garg2018r}), approximating and analyzing the equilibria~(e.g., \citet{branzei2021nash,adsul2021fast,boodaghians2021polynomial}), designing algorithms to find the equilibria with the knowledge of the transitions and rewards~(e.g., \citet{hu2003nash,hansen2013strategy}) or without such knowledge~(e.g. \citet{arslan2016decentralized}).

With the recent boom of reinforcement learning (RL), there is a growing interest in applying  the methods of RL to stochastic games; see~\citet{shoham2007if} and the references therein. The adaptive decision-making framework of RL, together with the context of multiple interacting learners, lead to multi-agent RL (MARL). MARL corresponds to the learning problem in a multi-agent system in which multiple RL agents learn simultaneously by interacting with the stochastic environment via a trial-and-error approach, from which they receive rewards for their actions. 
These MARL algorithms can be either centralized (meaning that there is a central controller who has full access to the game set up as well as each agent's actions and rewards, and who provides coordination among these agents) or decentralized (meaning that each agent makes decisions based on local information without a coordinator). Development convergent decentralized MARL algorithms, however, is well known to be challenging. In contrast to the single-agent scenario, such as bandit problems~(\citet{frazier2014incentivizing,emamjomeh2021adversarial}), the state evolution and the rewards earned by each agent depend on not only the current state and this agent's action, but also the actions taken by other agents. As a consequence, the existing single-agent learning algorithms cannot be directly extended to multi-agent settings, due to the fact that the environment is now non-stationary from each agent’s perspective, resulting in potential non-convergence, as shown in~\citet{tan1993multi,claus1998dynamics}. Such nonstationarity issue is the key challenge to address in order to develop a converging multi-agent learning algorithm. 

To address the nonstationarity issue, \citet{arslan2016decentralized} proposed a decentralized Q-learning algorithm and proved the asymptotic convergence of the algorithm for a subclass of stochastic games -- weakly acyclic games~(\citet{young2004strategic}). Roughly speaking, a weakly acyclic game is a stochastic game such that best-reply dynamics cannot enter inescapable oscillations (see Definition~\ref{def:best} later for a precise definition). The standard Q-learning, a widely used model-free, value-based RL algorithm, has been applied to specific multiple-agent systems~(\citet{tan1993multi, sen1994learning}), but no analytical results exist regarding the convergent properties of Q-learning in a stochastic game setting. In the decentralized Q-learning algorithm by~\citet{arslan2016decentralized}, agents do not update their policies for an extended period of time, which is called an \emph{exploration phase}, during which the environment becomes stationary from each agent's perspective. During the exploration phases, the Q-functions of each agent are still being updated. The policy of each agent is then updated at the end of each exploration phase, according to the current values of the Q-functions. This ``explore-then-update" procedure is repeated, and~\citet{arslan2016decentralized} have shown that the joint policy asymptotically converges to a Markov perfect equilibrium as the length of each exploration phase and the number of exploration phases go to infinity.

\subsection{Contributions.}

In this paper, we use the algorithm from~\citet{arslan2016decentralized} as a starting point, and build upon it with several new developments. We summarize our main contributions as follows.

\begin{itemize}
    \item We study the non-asymptotic convergence guarantee, namely, sample complexity, of the decentralized Q-learning algorithm in~\citet{arslan2016decentralized}~(Theorem~\ref{thm:1} of this paper). We note that many of the generalizations from asymptotic convergence to finite-sample analysis turn out to be nontrivial. A brief overview of this is provided in Section~\ref{sec:con}. To the best of our knowledge, this is the first sample complexity result for convergence to Markov perfect equilibrium on multi-player general-sum stochastic games with {infinite horizon}. Moreover, the sample complexity result is expressed explicitly in the game parameters~(Corollary~\ref{cor:1}), which is made possible by developing new bounds~(Proposition~\ref{prop:bounds}) on some implicit parameters~(e.g., the minimum probability of stationary distribution~$\mu_{\min}$ and the mixing time~$t_{\rm mix}$).
    \item We apply linear function approximation to approximate the Q-functions in this general-sum stochastic game. Instead of maintaining a Q-function on all state-action pairs, each agent now restricts her attention to a linear space with smaller dimensions, compared to the large state/action space. Under the restriction of a smaller-dimensional linear space, the original Q-functions will not be fully recovered. As a result, the original Markov perfect equilibria might not be reachable. To this end, we define a new notion of equilibrium -- \emph{linear approximated equilibrium}~(Definiton~\ref{def:laequi}). Roughly speaking, in a linear approximated equilibrium, each agent's policy is a best reply, according to the linearly approximated Q-functions (instead of the original ones) to all other agents' joint policy. The algorithm with linear function approximation~(Algorithm~\ref{al:2}) is shown in Section~\ref{sec:linear}. We again prove the sample complexity result for the algorithm, i.e., we provide finite lower bounds on the length of exploration phases and the number of exploration phases needed for the joint policy of all agents to converge to a linear approximated equilibrium with high probability. This work also appears to be the first to apply function approximation to general-sum stochastic games.
    \item We provide numerical experiments of both algorithms~(with and without linear function approximation) on the classical Grid World game~(\citet{sutton2018reinforcement}) with minor modifications. Specifically, we have two agents, one of whom moves in the vertical direction and the other one moves in the horizontal direction. At each time step, the state moves in the direction determined by the joint actions of both agents, and both agents receive a negative reward of~$-1$, except when the current state is at a ``terminal" state, where both agents receive a reward of~$0$ and the state does not change for all joint actions. This cooperative game belongs to a common subclass of stochastic games with identical reward functions for all agents, namely, the Markov team problems~(\citet{ho1980team,yuksel2013stochastic}). The experimental results confirm that, as the length and the number of exploration phases increase, the joint policy adopted by both agents converges to a Markov perfect equilibrium (in the tabular setting) or a linear approximated equilibrium (in the function approximation setting) with higher probability.
\end{itemize}

\subsection{Related Work.}
This paper is related to several sets of previous work. We mention below some of the most relevant ones.

\subsubsection{Stochastic Games (SGs).}
Stochastic games were proposed by~\citet{shapley1953stochastic} and can be viewed as a generalization of the Markov Decision Process~(MDP) to the multi-agent setting. Since then, this framework has become a classical model for multi-agent learning, and there is a long line of work in finding the Markov perfect equilibrium (Nash equilibrium) of different types of stochastic games under various assumptions. The works of \citet{littman1994markov,littman2001friend,hu2003nash,hansen2013strategy,wei2020linear} assumed full knowledge of the transition kernel and the reward functions, while \citet{wei2017online,jia2019feature,sidford2020solving,zhang2020model,wei2021last} assumed certain reachability conditions, e.g., access to some simulators that allow each agent to directly sample transitions and rewards for each state-action pair. Most of these works were aimed at designing algorithms that asymptotically converge to Markov perfect equilibria.

Another set of recent works focused on the non-asymptotic sample complexity/regret guarantees for learning in SGs. For two-player zero-sum games, \citet{bai2020provable,xie2020learning} developed the first provably-efficient
learning algorithms based on optimistic value iteration. \citet{liu2021sharp} improved upon these works with a model-based ``Optimistic Nash Value Iteration" algorithm
and achieves the best-known sample complexity for finding an $\epsilon$-Nash equilibrium. For multi-player general-sum games, \citet{liu2021sharp} provided the first sample complexity
guarantees for finding Markov perfect (Nash) equilibria, correlated equilibria~(CE), or coarse correlated equilibria~(CCE), where CE and CCE are some other notions of equilibria which can be viewed as relaxations of Markov perfect equilibria. 

It is worth noting that the algorithms that appeared in all of the aforementioned works are centralized algorithms. In contrast to those, \citet{daskalakis2020independent} established the sample complexity of independent policy gradient methods in zero-sum SGs. \citet{tian2020provably} proved sublinear regret in finite-horizon SGs, under the name of \emph{online agnostic learning}. \citet{sayin2021decentralized} provided a decentralized Q-learning algorithm for two-player zero-sum stochastic games and showed its asymptotic convergence. \citet{bai2020near} proposed a (decentralized) V-learning algorithm and proved sample complexity results for two-player zero-sum games. Concurrent works of~\citet{jin2021v,song2021can,mao2021provably} developed finite-sample convergence results~(to CE and CCE) of the V-learning algorithm for multi-player general-sum stochastic games with finite horizons (in episodic setting), while in this work, we study the finite-sample convergence to Markov perfect equilibrium for multi-player general-sum stochastic games with infinite horizon, along with the use of function approximation.


\subsubsection{Best Reply Process.} Best reply (BR) processes, also known as best response dynamics~(\citet{hopkins1999note}) or best response schemes~(\citet{lei2018distributed}), describe a policy update scheme for the agents, where each agent selects the policy that maximizes her payoff given other agents' joint policy~(\citet{fudenberg1998theory,basar1999dynamic}). Work on BR processes mostly falls into one of two directions: The first one studies whether (under certain assumptions or no assumption) the BR processes converge  to a Nash Equilibrium~(NE) if NE exists~(e.g., \citet{harks2012existence,milchtaich1996congestion}). The second one considers how fast it takes for the BR processes to converge to a NE~(e.g., \citet{even2005fast,fabrikant2004complexity,ieong2005fast,syrgkanis2010complexity}). It is well known that BR processes do not necessarily always converge to a NE, even if one exists. However, for the class of \emph{weakly acyclic games}~(\citet{fabrikant2010structure,apt2015classification}), which includes  all \emph{potential games} as special cases, BR processes are guaranteed to converge to one of the equilibria of the game~(\citet{monderer1996potential,rosenthal1973class}). 

Recently, there has been a growing interest in developing different variants of the BR process that may be applied to different classes of games, e.g., the proximal BR processes~(\citet{facchinei201012,pang2017two}), the Gauss–Seidel BR processes~(\citet{facchinei2011decomposition}), and the BR processes with a deviator~(\citet{feldman2017efficiency}). As for stochastic games, \citet{lei2020synchronous} proposed several generalizations of the proximal BR processes and showed their convergence. The asynchronous BR processes and their connections to block-coordinate descent~(BCD) schemes were further investigated in~\citet{lei2020asynchronous}. \citet{swenson2018best} and~\citet{leslie2020best} studied the convergence of \emph{continuous-time} BR processes in potential games and zero-sum stochastic games, respectively. 

In the standard (discrete-time) BR process, at each time when the joint policy is not an equilibrium, \emph{one} arbitrary agent is chosen to improve her best policy given the policies of others, i.e., most of the aforementioned works consider the \emph{asynchronous} BR process. In the weakly acyclic game that we study in this paper, all agents may update their policies synchronously. Moreover, it is sometimes unrealistic for each agent to compute the best reply policy given all other agents' joint policies. To accommodate the synchronous policy updates and to alleviate the computation of the best reply policies, \citet{young2004strategic} proposed the BR process with \emph{inertia}, which lets each agent  keep her current policy if it is a best reply. If her current policy is not a best reply, she still keeps the current policy with a certain probability~(inertia), but updates to another random policy in other cases. In~\citet{arslan2016decentralized} as well as in this work, an agent cannot even assess \emph{whether} her current policy is a best reply, which is the reason why we adopt Q-learning to mimic the BR process with inertia.

\subsubsection{Finite-sample Analysis for Q-learning.} Q-learning~(\citet{watkins1992q}) has been  recognized as one of the workhorses of RL. Beyond asymptotic convergence analysis, a  considerable amount of prior work has studied its finite-sample performance~(\citet{even2003learning,beck2012error,wainwright2019stochastic,qu2020finite,li2020sample}), with the sharpest results so far by~\citet{li2020sample}. To address the setting with massive state-action spaces, Q-learning has also been blended with linear function approximation~(\citet{melo2008analysis}), with its finite-sample analysis being investigated in \citet{chen2019finite}. It is not clear yet whether similar results  can be established for decentralized Q-learning in infinite horizon multi-agent general-sum SGs, which is the focus of our work.

\subsection{Organization.}
The rest of this paper is organized as follows. In Section~\ref{sec:pre}, we formally introduce stochastic games, weakly acyclic games, and Best Reply Process with Inertia~(\citet{young2004strategic}). Some useful properties of the Best Reply Process with Inertia are also developed. Then, in Section~\ref{sec:tabu}, we introduce the algorithm from~\citet{arslan2016decentralized}, called Algorithm~\ref{al:1}, and provide a finite-sample analysis, through which some bounds on the convergent measures are also established. We then move to Section~\ref{sec:linear}, and introduce the algorithm with linear function approximation~(Algorithm~\ref{al:2}), define the linear approximated equilibrium, and derive sample complexity results on  convergence of Algorithm~\ref{al:2} (either to a linear approximated equilibrium or to a Markov perfect equilibrium, under different assumptions). Full numerical studies on the classical Grid World game are provided at the end of Section~\ref{sec:tabu} as well as in Section~\ref{sec:linear}. Finally, we conclude the paper in Section~\ref{sec:con}, where we also highlight some of the technical difficulties in establishing our results. Proofs of two of the main results are provided in Appendices~\ref{sec:appa} and~\ref{sec:appb}.

\section{Preliminaries.}\label{sec:pre}
	
%
%

	\subsection{Stochastic games.}
	
	A (finite) discounted stochastic game has the following ingredients~(\citet{fink1964equilibrium}).
	\begin{itemize}
		\item A finite set of agents, with the $i$-th agent referred to as agent~$i$ for $i\in\{1,\dots,N\}=:[N]$;
		\item a finite set $\mathcal{S}$ of states;
		\item a finite set $\mathcal{A}^i$ of actions for each agent $i$;
		\item a nonnegative deterministic reward function $r^i$ for each agent~$i$, which determines agent~$i$'s reward, i.e., $r^i\left(s,a^1,\dots,a^N\right)\in[0,r^i_{\max}]$ at each state $s\in\mathcal{S}$ and for each joint action $\left(a^1,\dots,a^N\right)
		\in\mathcal{A}:=\mathcal{A}^1\times\cdots\times\mathcal{A}^N$;
		\item a discount factor $\gamma^i\in(0,1)$ for each agent~$i$;
		\item a random initial state $s_0\in\mathcal{S}$;
		\item a transition kernel  for the probability 
		$P[ s^{\prime} |s,a^1,\dots,a^N]$ of each state transition from $s\in\mathcal{S}$ to $s^{\prime}\in\mathcal{S}$ for each joint $N$-tuple of actions $(a^1,\dots,a^N)\in\mathcal{A}^1\times\cdots\times\mathcal{A}^N.$
	\end{itemize}
	The dynamic evolution of the game can be described as follows. At each time $t\ge 0$, each agent~$i$ observes the state $s_t$, and chooses an action $a_t^i\in \mathcal{A}^i$. The agent then receives a reward $r^i(s_t,a_t)\in[0,r^i_{\max}]$ where $a_t:=\left(a_t^1,\ldots,a_t^N\right)$, i.e., the reward of each agent, is determined by the state as well as the joint action selected by all agents. The system then transits to the next state $s_{t+1}$ according to the transition kernel $P[\cdot |s_t,a^1,\dots,a^N]$. We note that the {information structure} we consider here is \emph{fully decentralized},  in the sense that each agent, when choosing its action, has access to only the current (and past) states, as well as its own history of actions and rewards, while the rewards and the state transitions are determined by the joint actions of all agents. Each agent does not have access to other agents' actions and rewards. Although the reward $r^i$ that an agent receives depends on the state and the joint actions of all agents, the agents do \emph{not} have full knowledge of their own reward functions, but only observe the reward they receive. In fact, an agent can be completely oblivious to the presence of other agents. 
	
	A policy for an agent is a rule of choosing an action at any time, based on the agent's history of observations. While an agent may use any function of the available information as its policy, without loss of optimality, we focus on stationary (i.e., time-invariant) policies where an agent's action at time $t$ is  based solely on the current state $s_t$, i.e., a stationary policy of agent~$i$, denoted by $\pi^i$, is a mapping from the state space $\mathcal{S}$ to $\mathcal{P}\left(\mathcal{A}^i\right)$, the set of probability distributions on $\mathcal{A}^i$. The set of such stationary policies of agent~$i$ is denoted by $\Delta^i:=\left\{\pi^i: \mathcal{S}\to \mathcal{P}\left(\mathcal{A}^i\right)\right\}$. The set of \emph{deterministic} stationary policies of agent~$i$ is denoted by $\Pi^i:=\left\{\pi^i: \mathcal{S}\to \mathcal{A}^i\right\}\subset \Delta^i$. We let $\Delta:=\times_{i=1}^N\Delta^i$, $\Delta^{-i}:= \times_{j\ne i}\Delta^j$, and $\Pi:= \times_{i=1}^N\Pi^i$, $\Pi^{-i}:= \times_{j\ne i}\Pi^j$. Further, we denote the joint actions and joint policies by $a:=\left(a^1,\ldots,a^N\right)$ and $\pi:=\left(\pi^1,\ldots,\pi^N\right)$, respectively. The joint actions and policies of all agents except agent $i$ are denoted by $a^{-i}:=\left(a^1,\ldots,a^{i-1},a^{i+1},\ldots,a^N\right)$ and $\pi^{-i}:=\left(\pi^1,\ldots,\pi^{i-1},\pi^{i+1},\ldots,\pi^N\right)$, respectively. The joint actions and policies may also be written as $a = \left(a^i,a^{-i}\right)$ and $\pi = \left(\pi^i,\pi^{-i}\right)$, respectively.
	
	The objective of each agent~$i$ is to find a policy $\pi^i\in\Delta^i$ that maximizes the total expected discounted reward, or the \emph{value function}:
	\begin{align}
		V_{\pi}^i(s) = \E\left[\sum_{t=0}^\infty\left(\gamma^i\right)^tr^i\left(s_t,a_t\right) \ \bigg|\ s_0 = s\right],\quad \forall s\in\mathcal{S},
	\end{align}
	where the expectation is taken over the joint distribution of~$a$ given by $\pi(s)$, as well as the random state~$s$ given by the transition kernel at each step. The \emph{$Q$-function} (or \emph{action-value function}) of agent $i$, $Q_\pi^i:\mathcal{S}\times\mathcal{A}^i\to \mathbb{R}$ of a joint policy~$\pi$ is defined by
	\begin{align}
		Q_{\pi}^i(s,a^i) =\E\left[\sum_{t=0}^\infty\left(\gamma^i\right)^tr^i\left(s_t,a^i_t, a^{-i}_t\right) \ \bigg|\ s_0 = s, a_0^i = a^i\right],\quad \forall s\in\mathcal{S},
	\end{align}
	where the actions of~$i$ are taken according to the policy $\pi^i$ except the initial action $a^i_0=a^i$, and the joint actions $a^{-i}$ are taken according to the joint policy $\pi^{-i}$. In addition, for any $\pi^{-i}\in\Delta^{-i}$, we define the \emph{Bellman operator} of agent~$i$ as a self-mapping of $\mathcal{S}\times \mathcal{A}^i$:
	\begin{align}\label{eq:bellman}
		\mathcal{T}^i_{\pi^{-i}}(Q^i)(s,a^i) := \E_{\pi^{-i}(s)}\left[r^i\left(s,a^i,a^{-i}\right) + \gamma^i\sum_{s^{\prime}\in\mathcal{S}} P\left[s^{\prime}\mid s,a^i,a^{-i}\right]\max_{\hat{a}^i\in\mathcal{A}^i} Q^i(s^{\prime},\hat{a}^i)\right],\quad\forall (s,a^i),
	\end{align}
	where the expectation is taken over the joint distribution of $a^{-i}$ given by $\pi^{-i}(s)$. 
	
	We next define the \emph{Markov perfect equilibrium} of a stochastic game.
	\begin{definition}\label{def:equi}
		A joint policy $\pi^*=\left(\pi^{*1},\ldots,\pi^{*N}\right)\in\Delta$ is a (Markov perfect) equilibrium if 
		$$
		V_{\left(\pi^{*i},\pi^{*-i}\right)}^i(s) = \max_{\pi^i\in\Delta^i}V_{\left(\pi^{i},\pi^{*-i}\right)}^i(s),\quad \forall s\in\mathcal{S},\ i\in\left\{1,\ldots,N\right\}.
		$$
	\end{definition}
	We denote by $\Pi_{\rm eq}$ the set of all equilibrium joint policies. It is well known that any finite discounted stochastic game admits at least one equilibrium joint policy~(\citet{fudenberg1991game}). However, a deterministic equilibrium joint policy may not exist in general. In the following, we introduce a set of games, termed \emph{weakly acyclic games}, for which a deterministic joint equilibrium policy always exists.

	\subsection{Weakly acyclic games.}
	
	\begin{definition}
		A policy $\pi^{*i}\in \Delta^i$ is called a best reply to $\pi^{-i}\in\Delta^{-i}$ (for agent $i$) if 
		\begin{align*}
			V^i_{\left(\pi^{*i},\pi^{-i}\right)}(s) = \max_{\pi^i\in\Delta^i}V^i_{\left(\pi^{i},\pi^{-i}\right)}(s),\quad\forall s\in\mathcal{S}.
		\end{align*}
		A best reply $\pi^{*i}\in \Delta^i$ to $\pi^{-i}\in\Delta^{-i}$ is called a strict best reply to $\left(\pi^i,\pi^{-i}\right)$ if 
		\begin{align*}
			V^i_{\left(\pi^{*i},\pi^{-i}\right)}(s) > V^i_{\left(\pi^{i},\pi^{-i}\right)}(s),\quad\text{for some } s\in\mathcal{S}.
		\end{align*}
	\end{definition}
	
	We denote by $\Pi^i_{\pi^{-i}}$ the agent~$i$'s set of deterministic best replies to any $\pi^{-i}\in\Delta^{-i}$, i.e.,
	\begin{align*}
		\Pi_{\pi^{-i}}^i  := \left\{{\pi}^{*i}\in\Pi^i  : \ V^i_{\left(\pi^{*i},\pi^{-i}\right)}(s) = \max_{\pi^i\in\Delta^i}V^i_{\left(\pi^{i},\pi^{-i}\right)}(s),\quad\forall s\in\mathcal{S} \right\}.
	\end{align*}
	
	Agent $i$'s best replies to any $\pi^{-i}\in\Delta^{-i}$ can be characterized by the optimal $Q$-functions $Q_{(\pi^{*i},\pi^{-i})}^i$. With some abuse of notation, we simply write $Q_{\pi^{-i}}^i$ in place of $Q_{(\pi^{*i},\pi^{-i})}^i$. This optimal $Q$-function satisfies the fixed point equation of the Bellman operator:
	\begin{align}
		Q_{\pi^{-i}}^i(s,a^i) = \E_{\pi^{-i}(s)} \left[r^i\left(s,a^i,a^{-i}\right)   +\gamma^i \sum_{s^{\prime}\in\mathcal{S}} P\left[s^{\prime}\mid s,a^i,a^{-i}\right]\max_{\hat{a}^i\in\mathcal{A}^i} Q_{\pi^{-i}}^i(s^{\prime},\hat{a}^i) \right],\ \forall \left(s,a^i\right).
		\label{eq:Qfp}
	\end{align}
	The optimal $Q$-function $Q_{\pi^{-i}}^i(s,a^i)$ is agent~$i$'s expected discounted value-to-go from the initial state $s$, assuming that $i$ initially chooses action $a^i$ and uses an optimal policy thereafter while all other agents use the joint policy $\pi^{-i}$.
	Then, we can rewrite agent~$i$'s set of deterministic best replies to $\pi^{-i}$ as
	\begin{align}\label{eq:pi-i}
		\Pi_{\pi^{-i}}^i = \left\{{\pi}^{*i}\in\Pi^i: \ Q_{\pi^{-i}}^i\left(s,{\pi}^{*i}(s)\right)=\max_{a^i\in\mathcal{A}^i} Q_{\pi^{-i}}^i(s,a^i),   \quad \forall s\in\mathcal{S} \right\}.
	\end{align}
	From~\eqref{eq:pi-i} and Definition~\ref{def:equi}, we have that a deterministic joint policy $\pi^*\in\Pi_{\rm eq}$ if $\pi^{*i}\in\Pi^i_{(\pi^*)^{-i}}$ for all $i\in[N]$.
	
	We next define the \emph{best reply graph} on the set of deterministic joint policies $\Pi$. Specifically, each node (vertex) in the graph is a deterministic joint policy $\pi\in\Pi$, and there is a directed edge from $\pi_k$ to $\pi_{l}$ if for some $i\in[N]$, $\pi_{l}^i\ne \pi_{k}^i$, $\pi_{l}^j=\pi_{k}^j,\forall j\ne i$, and $\pi_{l}^i\in\Pi_{\pi_k^{-i}}^i$. When there is a directed edge from $\pi_k$ to $\pi_l$, we also say that $\pi_l$ is an \emph{out-neighbor} of $\pi_k$ in the strict best reply graph. We then define the \emph{strict best reply path} and the \emph{weakly acyclic game}.
	
	\begin{definition}\label{def:path}
		A sequence of deterministic joint policies $\pi_0,\pi_1,\ldots$ is called a strict best reply path if for each $k$, $\pi_k$ and $\pi_{k+1}$ differ for exactly one agent, say agent $i$, and $\pi_{k+1}^i$ is a strict best reply with respect to $\pi_k$.
	\end{definition}
	
	\begin{definition}\label{def:best}
		A discounted stochastic game is called \textit{weakly acyclic} under strict best replies if there is a strict best  reply path starting from each deterministic joint policy and ending at a deterministic equilibrium policy.
	\end{definition}

	\begin{figure}[ht]
		\centering
		\includegraphics[scale=0.5]{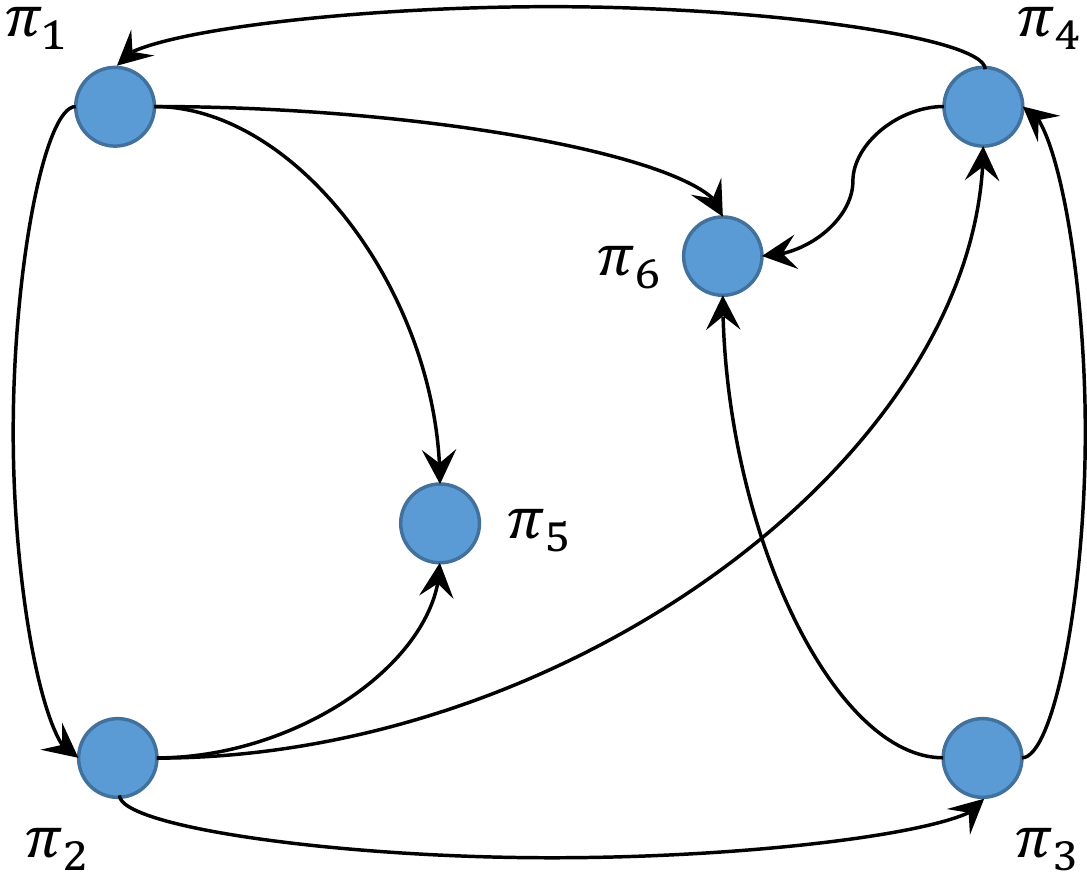}
		\caption[]{The strict best reply graph of a stochastic game.} \label{fig3}
	\end{figure}
	
	Figure~\ref{fig3} shows the strict best reply graph of a weakly acyclic game.
	A node with no outgoing edges is an equilibrium policy ($\pi_5$ and $\pi_6$ in this graph). The game illustrated in Figure~\ref{fig3} is weakly acyclic under strict best replies since there is a path from every node to an equilibrium. Also note that a weakly acyclic game may have cycles in its strict best reply graph, for example, $\pi_1\rightarrow\pi_2\rightarrow\pi_{3}\rightarrow\pi_4\rightarrow\pi_1$ in Figure~\ref{fig3}.

	
	If the game has no cycles in its strict best reply graph, we may consider the process of letting only one agent switch to one of its best replies at each step, and such a process will continue until no agent has strict best replies, at which time the joint policy of all agents is a deterministic equilibrium joint policy. However, as described above, the strict best reply graph of a weakly acyclic game may contain cycles. We next introduce the Best Reply Process with Inertia~(\citet{young2004strategic}) as Algorithm~\ref{al:0}, which assigns to each agent a strict positive probability of choosing each of its strict best replies. 
	
	\begin{algorithm}[ht]
		\addtocounter{algorithm}{-1}
		\caption{Best Reply Process with Inertia (for agent~$i$)}
		\label{al:0}
		\algsetblock[Name]{Parameters}{}{0}{}
		\algsetblock[Name]{Initialize}{}{0}{}
		\algsetblock[Name]{Define}{}{0}{}
		\begin{algorithmic}[1]
			\Statex {Set parameters}
			\Statex \hspace*{5mm} $\lambda^i\in(0,1)$: inertia
			\vspace{2mm}
			\State Initialize  $\pi_0^i \in \Pi^i$ (arbitrary)
			\Iterate {$k\ge 0$}
			\If {$\pi_k^i\in \Pi_{\pi_k^{-i}}^i$}
			\State $\pi_{k+1}^i = \pi_k^i$
			\Else
			\State $\pi_{k+1}^i = \left\{ \begin{array}{cl}   \pi_{k}^i & \mbox{ w.p. } \lambda^i \\ \mbox{any} \ \pi^i\in\Pi_{\pi_k^{-i}}^i & \mbox{ w.p. } (1-\lambda^i)/|\Pi_{\pi_k^{-i}}^i|\end{array}\right. $
			\EndIf
			\EndIterate
		\end{algorithmic}
	\end{algorithm}
	Under the Best Reply Process with Inertia (BRPI), if the joint policy $\pi_k$ is an equilibrium policy at step $k$, the policy will never change in the following steps; otherwise, the joint policy at step $k+1$, denoted by $\pi_{k+1}$, can be any joint policy that is an out-neighbor of $\pi_k$ in the strict best reply graph with strict positive probability. For each $\pi\in\Pi$, there exists a strict best reply path of minimum length $L_\pi$, and letting $L:=\max_{\pi\in\Pi}L_\pi$ be the maximum length of the shortest strict best reply path from any policy to an equilibrium policy. Then, starting from an arbitrary joint policy $\pi_0$ and letting all agents update their policies following the BRPI, the joint policy $\pi_{L}$ in $L$ steps later will be an equilibrium policy with some positive probability $p_{\min}$, i.e., $p_{\min}:= \min_{\pi}P\left[\pi_L\in\Pi_{\rm eq}\mid \pi_0=\pi\right]$. We then have the following lemma, which provides a lower bound on $p_{\min}$. 
	\begin{lemma}\label{lm:pmin}
	Let all agents update their policies by the BRPI. We have that
	\begin{align}\label{eq:ppp}
		p_{\min}
		\ge \left(\min_{j\in \{1,\ldots,N\}}\left\{ \frac{1-\lambda^j}{\left|\Pi^{j}\right|}\cdot \prod_{i\ne j}\lambda^i\right\}\right)^L =: \hat{p}.
	\end{align}
\end{lemma}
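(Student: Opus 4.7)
The plan is to fix any starting policy $\pi_0\in\Pi$, exhibit a specific strict best reply path of length at most $L$ from $\pi_0$ to an equilibrium, lower bound the probability that Algorithm \ref{al:0} realizes this exact sequence of transitions one step at a time, and then use the absorbing property of equilibria to pad the analysis out to step $L$.

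Existence of the path is handed to us by the weak acyclicity assumption together with the definition of $L$: there is a strict best reply path $\pi_0, \pi_1, \ldots, \pi_{L_{\pi_0}}$ with $L_{\pi_0}\le L$ and $\pi_{L_{\pi_0}}\in\Pi_{\rm eq}$. By Definition \ref{def:path}, at each step $k$ exactly one agent $j_k$ switches, with $\pi_{k+1}^{j_k}$ a strict best reply to $(\pi_k^{j_k},\pi_k^{-j_k})$. The mere existence of a strict best reply forces $\pi_k^{j_k}\notin \Pi_{\pi_k^{-j_k}}^{j_k}$ (otherwise $\pi_k^{j_k}$ would already attain $\max_{\pi^{j_k}\in\Delta^{j_k}} V^{j_k}_{(\pi^{j_k},\pi_k^{-j_k})}(s)$ at every $s$), so agent $j_k$ takes the ``else'' branch of Algorithm \ref{al:0} and assigns mass at least $(1-\lambda^{j_k})/|\Pi^{j_k}|$ to the specific target $\pi_{k+1}^{j_k}\in\Pi_{\pi_k^{-j_k}}^{j_k}$.

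Next I would bound the probability that every other agent $i\ne j_k$ keeps $\pi_k^i$. In the ``if'' branch retention is deterministic, while in the ``else'' branch it occurs with inertia $\lambda^i$; in both cases the retention probability is at least $\lambda^i$. Since agents randomize independently given $\pi_k$, multiplying gives the one-step bound
\begin{align*}
P\bigl[\pi_{k+1}\mid \pi_k\bigr]\;\ge\;\frac{1-\lambda^{j_k}}{|\Pi^{j_k}|}\prod_{i\ne j_k}\lambda^{i}\;\ge\;\min_{j\in[N]}\left\{\frac{1-\lambda^{j}}{|\Pi^{j}|}\prod_{i\ne j}\lambda^{i}\right\}.
\end{align*}
Chaining this estimate over $k=0,\ldots,L_{\pi_0}-1$, and using that any $\pi^{*}\in\Pi_{\rm eq}$ satisfies $\pi^{*i}\in\Pi_{\pi^{*-i}}^{i}$ for every $i$ (so Algorithm \ref{al:0} keeps $\pi_{k+1}=\pi_k$ with probability $1$ once in $\Pi_{\rm eq}$), I obtain
\[
P\bigl[\pi_L\in\Pi_{\rm eq}\mid \pi_0\bigr]\;\ge\;\left(\min_{j}\left\{\tfrac{1-\lambda^{j}}{|\Pi^{j}|}\prod_{i\ne j}\lambda^{i}\right\}\right)^{L_{\pi_0}}\;\ge\;\left(\min_{j}\left\{\tfrac{1-\lambda^{j}}{|\Pi^{j}|}\prod_{i\ne j}\lambda^{i}\right\}\right)^{L}=\hat p,
\]
where the second inequality uses $L_{\pi_0}\le L$ and the fact that the base lies in $[0,1]$. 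Taking the infimum over $\pi_0\in\Pi$ yields \eqref{eq:ppp}.

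There is no deep obstacle here; the only point requiring a bit of care is the bookkeeping in step two, namely checking that $\lambda^{i}$ is a uniform lower bound on agent $i$'s retention probability across both branches of Algorithm \ref{al:0} and that the per-agent randomizations factor independently conditional on $\pi_k$. Once these are verified, the remainder is a product estimate followed by the elementary monotonicity $x^{L_{\pi_0}}\ge x^{L}$ for $x\in[0,1]$ and $L_{\pi_0}\le L$.
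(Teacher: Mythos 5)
Your proposal is correct and follows essentially the same route as the paper's proof: fix an arbitrary initial policy, take the shortest strict best reply path of length $l\le L$ to an equilibrium, lower bound the probability of realizing each transition by $\frac{1-\lambda^{j}}{|\Pi^{j}|}\prod_{i\ne j}\lambda^{i}$, chain the product, and use that equilibria are absorbing. Your explicit check that the switching agent must land in the ``else'' branch and that $\lambda^i$ lower bounds retention in both branches is a slightly more careful writeup of the same bookkeeping the paper leaves implicit.
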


\begin{proof}[Proof of Lemma~\ref{lm:pmin}]\ 
	Let $\pi_0 = \hat{\pi}_0\in \Pi$ be an arbitrary initial joint policy. If $\hat{\pi}_0\in\Pi_{\rm eq}$, then it holds that $P\left[\pi_L=\hat{\pi}_0\in\Pi_{\rm eq}\mid \pi_0 = \hat{\pi}_0\in\Pi_{\rm eq}\right]=1$; otherwise, let~$l$ be the length of the shortest strict best reply path from $\hat{\pi}_0$ to an equilibrium policy, where $l\le L$. Let the sequence of policies along the path be $\hat{\pi}_0,\hat{\pi}_1,\ldots, \hat{\pi}_l$, with $\hat{\pi}_l\in\Pi_{\rm eq}$. Further, let $i_1,\ldots,i_l$ be the agent that changes its policy at each update, i.e., $\hat{\pi}_{n-1}$ and $\hat{\pi}_{n}$ differ only at agent $i_n$, for all $n=1,\ldots,l$. 
	Then, by the policy updates in the BRPI, we have that
	\begin{align*}
		&\quad P\left[\pi_L\in\Pi_{\rm eq}\mid \pi_0=\hat{\pi}_0\right]\\
		&\ge P\left[\pi_1=\hat{\pi}_1,\ldots,\pi_l = \hat{\pi}_l, \pi_{l+1} = \hat{\pi}_l,\ldots,\pi_L = \hat{\pi}_l\mid \pi_0=\hat{\pi}_0\right]\\
		&= P\left[\pi_1 = \hat{\pi}_1\mid \pi_0 = \hat{\pi}_0\right] P\left[\pi_2 = \hat{\pi}_2\mid \pi_0 = \hat{\pi}_0, \pi_1 = \hat{\pi}_1\right] \cdots P\left[\pi_l = \hat{\pi}_l\mid \pi_0 = \hat{\pi}_0, \ldots \pi_{l-1} = \hat{\pi}_{l-1}\right]\\
		&\qquad \cdot P\left[\pi_{l+1} =\cdots= \pi_L = \hat{\pi}_l\mid \pi_0 = \hat{\pi}_0, \ldots \pi_{l} = \hat{\pi}_{l}\right]\\
		&\ge \left(\frac{1-\lambda^{i_1}}{\left|\Pi^{i_1}\right|}\cdot \prod_{i\ne i_1}\lambda^i\right)
		\left(\frac{1-\lambda^{i_2}}{\left|\Pi^{i_2}\right|}\cdot \prod_{i\ne i_2}\lambda^i\right)\cdots
		\left(\frac{1-\lambda^{i_l}}{\left|\Pi^{i_l}\right|}\cdot \prod_{i\ne i_l}\lambda^i\right)\cdot 1 = \prod_{j\in\{i_1,\ldots,i_l\}}\left(\frac{1-\lambda^j}{\left|\Pi^{j}\right|}\cdot \prod_{i\ne j}\lambda^i\right)\\
		&\ge \left(\min_{j\in \{1,\ldots,N\}} \left\{\frac{1-\lambda^j}{\left|\Pi^{j}\right|}\cdot \prod_{i\ne j}\lambda^i\right\}\right)^l \ge \left(\min_{j\in \{1,\ldots,N\}}\left\{ \frac{1-\lambda^j}{\left|\Pi^{j}\right|}\cdot \prod_{i\ne j}\lambda^i\right\}\right)^L.
	\end{align*}
	Note that since the above holds for any arbitrary initial joint policy $\hat{\pi}_0$, we conclude that it is a lower bound for $p_{\min}$.
\end{proof}

	This implies that the BRPI will reach an equilibrium policy in a finite number of steps w.p.~$1$. We further have the following result.
	
	\begin{proposition}\label{prop:brp}
		Let all agents update their deterministic policies according to the BRPI. We have that
		\begin{align*}
			P \left[\pi_k\in\Pi_{\rm eq}\right]\ge 1-\delta,
		\end{align*}
		provided that
		\begin{align*}
			k\ge \frac{L\cdot \log\delta}{\log\left(1-\left(\min_{j\in \{1,\ldots,N\}}\left\{ \frac{1-\lambda^j}{\left|\Pi^{j}\right|}\cdot \prod_{i\ne j}\lambda^i\right\}\right)^L\right)} + L.
		\end{align*}
	\end{proposition}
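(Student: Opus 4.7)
\bigskip
\noindent\textbf{Proof proposal for Proposition~\ref{prop:brp}.} The plan is to exploit two structural facts about the Best Reply Process with Inertia and then iterate the lower bound of Lemma~\ref{lm:pmin}. First, I would observe that $\Pi_{\rm eq}$ is \emph{absorbing} under the BRPI: if $\pi_k\in \Pi_{\rm eq}$, then for every agent $i$ we have $\pi_k^i\in \Pi^i_{\pi_k^{-i}}$, so line~3 of Algorithm~\ref{al:0} applies and each agent sets $\pi_{k+1}^i = \pi_k^i$; hence $\pi_{k+1}=\pi_k\in \Pi_{\rm eq}$. By induction, once the joint policy enters $\Pi_{\rm eq}$ it remains there forever, so $\{\pi_t \in \Pi_{\rm eq}\}$ is a monotone event in $t$.

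Second, I would use the Markov property of the joint-policy chain together with Lemma~\ref{lm:pmin}. Write $k = mL + r$ with $0 \le r < L$, and define the events $E_n := \{\pi_{nL}\in \Pi_{\rm eq}\}$ for $n = 0,1,\ldots,m$. Conditioning on $\pi_{(n-1)L}$ and applying Lemma~\ref{lm:pmin} to the next $L$ steps gives
\begin{align*}
P\bigl[\pi_{nL}\notin \Pi_{\rm eq} \,\big|\, \pi_{(n-1)L} = \pi\bigr] \;\le\; 1 - p_{\min} \;\le\; 1 - \hat p,\qquad \forall \pi \in \Pi.
\end{align*}
Iterating this $m$ times and using the absorbing property from the previous paragraph yields
\begin{align*}
P[\pi_k \notin \Pi_{\rm eq}] \;\le\; P[\pi_{mL}\notin \Pi_{\rm eq}] \;\le\; (1-\hat p)^m.
\end{align*}

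To finish, I would force the right-hand side to be at most $\delta$. Since $\log(1-\hat p) < 0$ and $\log\delta < 0$, the inequality $(1-\hat p)^m \le \delta$ is equivalent to $m \ge \log\delta / \log(1-\hat p)$. Because $k \ge L\log\delta/\log(1-\hat p) + L$ implies $k/L \ge \log\delta/\log(1-\hat p) + 1$, the integer $m = \lfloor k/L \rfloor$ satisfies $m > \log\delta/\log(1-\hat p)$, and the desired bound $P[\pi_k\in\Pi_{\rm eq}] \ge 1-\delta$ follows after substituting the explicit expression of $\hat p$ from \eqref{eq:ppp}.

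The argument is essentially a geometric-trials bound, so there is no real technical obstacle; the only subtlety I would be careful about is the floor/ceiling rounding in the last step (which is exactly why the statement carries the additive $+L$ term), and the verification that $\Pi_{\rm eq}$ is absorbing, which must be checked from the definition of BRPI rather than assumed.
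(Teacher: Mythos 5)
Your proposal is correct and follows essentially the same route as the paper's proof: both iterate the $L$-step bound $\hat p$ from Lemma~\ref{lm:pmin} over blocks of length $L$, use the absorbing property of $\Pi_{\rm eq}$ under the BRPI (which the paper invokes implicitly in its induction step via $P[\pi_{nL}\in\Pi_{\rm eq}\mid\pi_{(n-1)L}\in\Pi_{\rm eq}]=1$), and finish with the same floor computation on $\lfloor k/L\rfloor$. Your phrasing in terms of the complement event $P[\pi_{nL}\notin\Pi_{\rm eq}]\le(1-\hat p)^n$ is just the contrapositive of the paper's induction $P[\pi_{nL}\in\Pi_{\rm eq}]\ge 1-(1-p_{\min})^n$, so there is no substantive difference.
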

	\begin{proof}[Proof of Proposition~\ref{prop:brp}]\
	For any initial joint policy $\pi_0 = \hat{\pi}_0$, we have that
	$
		P\left[\pi_{L}\in\Pi_{\rm eq}\mid \pi_0 = \hat{\pi}_0\right]\ge p_{\min}.
	$
	We first show that
	\begin{align}\label{eq:Ppi}
		P\left[\pi_{nL}\in\Pi_{\rm eq}\right]\ge p_{\min}\left[1 + (1-p_{\min}) + (1-p_{\min})^2 +\cdots +(1-p_{\min})^{n-1}\right] = 1-(1-p_{\min})^n.
	\end{align}
	We show~\eqref{eq:Ppi} by induction on $n$. The base case holds since $P\left[\pi_L\in\Pi_{\rm eq}\right]\ge p_{\min}$. As for the induction step, assuming that~\eqref{eq:Ppi} holds for $\pi_{(n-1)L}$, we have that
	\begin{align*}
		&\quad P\left[\pi_{nL}\in\Pi_{\rm eq}\right]\\
		&= P\left[\pi_{nL}\in\Pi_{\rm eq}\mid \pi_{(n-1)L}\in\Pi_{\rm eq}\right]P\left[\pi_{(n-1)L}\in\Pi_{\rm eq}\right] + P\left[\pi_{nL}\in\Pi_{\rm eq}\mid \pi_{(n-1)L}\notin\Pi_{\rm eq}\right]P\left[\pi_{(n-1)L}\notin\Pi_{\rm eq}\right]\\
		&\ge P\left[\pi_{(n-1)L}\in\Pi_{\rm eq}\right] + p_{\min}P\left[\pi_{(n-1)L}\notin\Pi_{\rm eq}\right]
		= P\left[\pi_{(n-1)L}\in\Pi_{\rm eq}\right] + p_{\min}\left(1-P\left[\pi_{(n-1)L}\in\Pi_{\rm eq}\right]\right)\\
		&= (1-p_{\min})P\left[\pi_{(n-1)L}\in\Pi_{\rm eq}\right] + p_{\min}\ge (1-p_{\min})\left(1-(1-p_{\min})^{n-1}\right)+p_{\min} = 1-(1-p_{\min})^n,
	\end{align*}
	which completes the induction step. Therefore, if~$k$ satisfies
	\begin{align}\label{eq:k}
		k\ge \frac{L\log \delta}{\log(1-p_{\min})}+L,
	\end{align}
	then, we have that
	\begin{align*}
		P\left[\pi_k\in\Pi_{\rm eq}\right] &\ge P\left[\pi_{\left\lfloor\frac{k}{L}\right\rfloor L}\in\Pi_{\rm eq}\right] \ge 1-(1-p_{\min})^{\left\lfloor \frac{k}{L}\right\rfloor}\ge 1-(1-p_{\min})^{\left\lfloor \frac{\log\delta}{\log(1-p_{\min})}+1\right\rfloor}\\
		&\ge 1-(1-p_{\min})^{\frac{\log\delta}{\log(1-p_{\min})}} = 1-(1-p_{\min})^{\log_{1-p_{\min}}\delta} = 1-\delta.
	\end{align*}
	The proof is completed by taking the lower bound of $p_{\min}$ from Lemma~\ref{lm:pmin} to~\eqref{eq:k}.
\end{proof}
	
	We note that in applying the BRPI, each agent~$i$ needs to construct $\Pi_{\pi_k^{-i}}^i$ at step $k$, which can be done according to~\eqref{eq:pi-i} by first computing $Q_{\pi_k^{-i}}^i$ by solving the fixed point equation~\eqref{eq:Qfp}. However, since we assume that agents do not have access to the state transition probabilities $P$, neither do they know the joint policy $\pi_k^{-i}$ of other agents, they would not be able to compute~\eqref{eq:Qfp} directly. In the next section, we introduce and analyze the sample complexity of the Q-learning algorithm for stochastic games, where agents would be able to approximate their best replies and adjust their policies accordingly.

	\section{Decentralized Q-learning in tabular setting.}\label{sec:tabu}
	
	Recall that in the decentralized setting,  at any time~$t$, each agent has access to the history of state realizations up to time~$t$, its own set of actions~$\mathcal{A}^i$ and discount factor $\gamma^i$, as well as its own history of actions. Since each agent is completely oblivious to the existence of other agents, agent~$i$ may view the decision making problem as a stationary Markov decision process, and use the standard Q-learning algorithm:
	\begin{subequations}\label{eq:agentQ}
		\begin{align}
			Q_{t+1}^i(s_t,a_t^i)  &=   (1-\eta_t^{i})Q_t^i(s_t,a_t^i) + \eta_t^{i} \left[ r^i(s_t,a_t^i,a_t^{-i})  +  \gamma^i \max_{a^i\in\mathcal{A}^i} Q_t^i(s_{t+1},a^i) \right],\\
			Q_{t+1}^i(s,a^i) &= Q_t(s,a^i),\quad \forall (s,a^i)\ne (s_t,a^i_t),
		\end{align}
	\end{subequations}
	where $\eta_t^i$ is agent~$i$'s step size at time~$t$. A common approach for agent~$i$ to select its actions is the so-called $\epsilon$-greedy method, i.e., by exploiting the learned $Q$-functions with high probability and randomly exploring any action with some small probability. If agent~$i$ uses Q-learning~\eqref{eq:agentQ} with the $\epsilon$-greedy method while all other agents use a fixed joint policy $\pi^{-i}$, then, agent~$i$ solves a stationary MDP and $P\left[Q_t^i\to Q_{\pi^{-i}}^i\right]= 1$ following the convergence result of Q-learning on stationary MDPs~(\citet{tsitsiklis1994asynchronous}). However, when all agents use Q-learning~\eqref{eq:agentQ} with the $\epsilon$-greedy method, then the MDP becomes nonstationary and convergence of the $Q$~functions is not guaranteed~(\citet{leslie2005individual}). To overcome this difficulty, \citet{arslan2016decentralized} proposed a fully decentralized Q-learning {algorithm}  where all agents use constant policies for extended periods of time, termed \emph{exploration phases}. The $k$th exploration phase runs through time $t=t_k,\ldots,t_{k+1}-1$, where $t_{k+1} = t_k+T_k$ for some positive integer $T_k$ (with $t_0=0$). During the $k$th exploration phase, each agent~$i$ has some deterministic \emph{baseline policy} $\pi_k^i$, but uses the same randomized policy~$\bar{\pi}_k^i$ throughout the phase, where 
	$$
	\bar{\pi}_k^i(s_t):=\left\{\begin{array}{cl} \pi_k^i(s_t), & \textrm{ w.p. } 1-\rho^i\\ \textrm{any } a^i\in\mathcal{A}^i, & \textrm{ w.p. } \rho^i/|\mathcal{A}^i|, \end{array} \right.
	$$
	{for some $\rho^i\in(0,1)$.}
	Equivalently, we can write 
	\begin{equation}
		\label{eq:pibar}
		\bar{\pi}_k^i=(1-\rho^i)\pi_k^i+\rho^i\nu^i,
	\end{equation}
	where $\nu^i$ is the random policy that assigns the uniform distribution on $\mathcal{A}^i$ to each $s$.
	In words, agent~$i$ plays the baseline policy with probability $1-\rho^i$, and plays all actions uniformly with probability $\rho^i/|\mathcal{A}^i|$. We denote by $\bar{\Pi}$ the set of joint policies in the form of~\eqref{eq:pibar} for each agent, i.e., $\bar{\Pi}:=\left\{\bar{\pi}\mid \bar{\pi}^i=(1-\rho^i)\pi^i+\rho^i\nu^i, \pi^i\in\Pi^i,\forall i\in[N]\right\}$. Each agent updates its $Q$~function after each step according to~\eqref{eq:agentQ}, but updates its baseline policy only at the end of every exploration phase, by using the BRPI with some estimated $\Pi_{\pi_k^{-i}}^i$. The complete algorithm is presented as Algorithm~\ref{al:1}.
	
	\begin{algorithm}[ht]
		\caption{Q-learning for agent~$i$}
		\label{al:1}
		\algsetblock[Name]{Parameters}{}{0}{}
		\algsetblock[Name]{Initialize}{}{0}{}
		\algsetblock[Name]{Define}{}{0}{}
		\begin{algorithmic}[1]
			\Statex {Set parameters}
			\Statex \hspace*{5mm} $\mathbb{Q}^i$: some  compact subset of the Euclidian space $\mathbb{R}^{|\mathcal{S}\times\mathcal{A}^i|}$
			\Statex \hspace*{5mm} $\{T_k\}_{k\geq0}$: sequence of integers in $[1,\infty)$
			\Statex \hspace*{5mm} $K\in\mathbb{Z}_+$: number of exploration phases
			\Statex \hspace*{5mm} $\rho^i\in(0,1)$: experimentation probability
			\Statex \hspace*{5mm} $\lambda^i\in(0,1)$: inertia
			\Statex \hspace*{5mm} $\zeta^i\in(0,\infty)$: tolerance level for sub-optimality
			\Statex \hspace*{5mm} $\{\eta_{t}^{i}\}_{t\geq0}$:  sequence  of step sizes 
			\vspace{2mm}
			\State Initialize  $\pi_0^i \in \Pi^i$ (arbitrary), $Q_0^i\in\mathbb{Q}^i$ (arbitrary)
			\State Receive $s_0$
			\For {$k=1,2\ldots$}
			\For {$t=t_k,\ldots,t_{k+1}-1$}
			\State $a_t^i = \bar{\pi}_k^i(s_t):=\left\{\begin{array}{cl} \pi_k^i(s_t), & \textrm{ w.p. } 1-\rho^i\\ \textrm{any } a^i\in\mathcal{A}^i, & \textrm{ w.p. } \rho^i/|\mathcal{A}^i| \end{array} \right.$
			\State Receive $r^i(s_t,a_t^i,a_t^{-i})$
			\State Receive $s_{t+1}$ (selected according to  $P[ \ \cdot \ | \ s_t,a_t^i,a_t^{-i}]$)
			\State $Q_{t+1}^i(s_t,a_t^i)  =   (1-\eta_t^{i})Q_t^i(s_t,a_t^i) + \eta_t^{i} \left[ r^i(s_t,a_t^i,a_t^{-i})  +  \gamma^i \max_{a^i\in\mathcal{A}^i} Q_t^i(s_{t+1},a^i) \right]$
			\State $Q_{t+1}^i(s,a^i) =   Q_t^i(s,a^i)$,   for all $(s,a^i)\neq (s_t,a_t^i)$
			\EndFor
			\State $\Pi_{k+1}^i  = \big\{\hat{\pi}^i\in\Pi^i:  Q_{t_{k+1}}^i(s,\hat{\pi}^i(s))\geq \max_{a^i\in\mathcal{A}^i}Q_{t_{k+1}}^i(s,a^i)-\frac{1}{2}\zeta^i, \ \mbox{for all}  \ s\big\}$
			\If{$\pi_k^i\in \Pi^i_{k+1}$}
			\State $\pi_{k+1}^i = \pi_k^i$
			\Else
			\State $\pi_{k+1}^i = \left\{\begin{array}{cl} \pi_{k}^i, & \textrm{ w.p. } \lambda^i\\ \textrm{any } \pi^i\in\Pi_{k+1}^i, & \textrm{ w.p. } (1-\lambda^i)/|\Pi_{k+1}^i|\end{array} \right.  $
			\EndIf
			\State $Q_{t_{k+1}}^i \gets$ projection of $Q_{t_{k+1}}^i$ onto $\mathbb{Q}^i$
			\EndFor
		\end{algorithmic}
	\end{algorithm}
	
	\citet{arslan2016decentralized} proved that the joint policy $\pi_k$ obtained from Algorithm~\ref{al:1} \emph{asymptotically} converges to some equilibrium policy. We will show in this paper the non-asymptotic convergence guarantees of the algorithm. To proceed, we first impose the following two assumptions. 


\begin{assumption}
	\label{as:alpha}
	There exist some $\kappa >0$, and a finite integer $H\geq 1$, such that for any pair of states $(s^{\prime},s)$, there exists a sequence of joint actions $\tilde{a}_0,\dots,\tilde{a}_{H-1}\in\mathcal{A}$ such that $$P[ s_{H}=s^{\prime} \ | \ (s_0,a_0,\dots,a_{H-1})=(s,\tilde{a}_0,\dots,\tilde{a}_{H-1})]\ge \kappa.$$
\end{assumption}
	
	Recall from the definition of~$\bar{\pi}_k$ that each agent has positive probability of choosing any action $a^i\in\mathcal{A}^i$, which implies that the joint actions taken by all agents can be any $a\in\mathcal{A}$ with positive probability. This, together with Assumption~\ref{as:alpha}, implies that all states communicate with each other in the Markov chain induced by the joint policy~$\bar{\pi}_k$, i.e., the Markov chain is irreducible. This is the same assumption as made in~\citet{arslan2016decentralized} except that we denote by $\kappa$ the lower
	bound on the probabilities.
	

\begin{assumption}\label{as:aperiodic}
	For any joint policy $\bar{\pi}_k$, the induced Markov chain is aperiodic.
\end{assumption}


It is common to assume that the Markov chain induced by the behavior policy is ergodic in analyzing the sample complexity of single-agent Q-learing~(\citet{li2020sample}). Assumption~\ref{as:aperiodic}, together with Assumption~\ref{as:alpha}, ensures that the Markov chain is finite, irreducible, and aperiodic, which implies that the chain is uniformly ergodic~(\citet{paulin2015concentration}) and admits a unique stationary distribution.

Let $\mu_{\bar{\pi}_k}$ be the stationary distribution over all states of the  Markov chain induced by $\bar{\pi}_k$, and let $\mu_{\bar{\pi}_k}^i$ be the stationary distribution over all $(s,a^i)\in \mathcal{S}\times\mathcal{A}^i$ pairs. We further define
	\begin{align}\label{eq:mumin}
		\mu_{\min,k} := \min_{i\in[N]}\min_{(s,a^i)\in\mathcal{S}\times \mathcal{A}^i}\mu_{\bar{\pi}_k}^i\left(s,a^i\right).
	\end{align}
	Here, $\min_{(s,a^i)\in\mathcal{S}\times \mathcal{A}^i}\mu_{\bar{\pi}_k}^i\left(s,a^i\right) := \mu_{\min,k}^i$ is the minimum probability of the stationary distribution over all state-action pairs from the perspective of agent~$i$, and $\mu_{\min,k}$ is obtained by taking the minimum over all agents. Intuitively, the smaller $\mu_{\min,k}$ is, the more samples are needed to ensure that all state-action pairs (from the perspective of each agent) are visited sufficiently many times during the $k$th exploration phase. Moreover, we define the mixing time of agent $i$ at the $k$th exploration phase as:
	\begin{align}\label{eq:tmixi}
		t_{{\rm mix},k}^i(\alpha) := \min\left\{t\ \Big|\ \max_{(s_0,a_0^i)\in\mathcal{S}\times\mathcal{A}^i}d_{\rm TV}\left(P^t(\cdot\mid s_0,a_0^i),\mu_{\bar{\pi}_k}^i\right)\le \alpha\right\},
	\end{align}
	where $\alpha\in (0,1)$, $P^t(\cdot\mid s_0,a^i_0)$ is the distribution of $(s_t,a_t^i)$ conditioned on the initial state-action pair $(s_0,a^i_0)$, and $d_{\rm TV}$ measures the total variation between two distributions.  Intuitively, $t_{{\rm mix},k}^i$ describes how fast sample trajectory of the Markov chain converges to the stationary distribution of state-action pairs from the perspective of agent~$i$. Further, let $t_{{\rm mix},k}(\alpha):=\max_{i\in[N]}t_{{\rm mix},k}^i(\alpha)$. Note that the convergence rate of a uniformly ergodic Markov chain to its stationary distribution is exponential~(\citet{haggstrom2002finite}). We therefore do not expect $t_{{\rm mix},k}$ to be excessively large. 
	
	We next define the minimum separation between the agents' optimal Q-functions (with respect to deterministic policies), which is regarded as an upper bound of the tolerence level $\zeta^i$ for all agents.
	
	\begin{align}\label{eq:zetabar}
		\bar{\zeta}:=\min_{\begin{array}{c} \scriptstyle i,s,a^i,\tilde{a}^i,\pi^{-i}\in\Pi^{-i}: \\ \scriptstyle Q_{{\pi^{-i}}}^i(s,a^i)\not=Q_{{\pi^{-i}}}^i(s,\tilde{a}^i) \end{array}} \left|Q_{{\pi^{-i}}}^i(s,a^i)-Q_{{\pi^{-i}}}^i(s,\tilde{a}^i)\right|.
	\end{align}
	
 {For notational convenience,} we let $A:=\max_{i\in[N]}|\Acal^i|$, $\bar{\gamma} := \max_{i\in[N]}\gamma^i$, and $\underline{\gamma} = \min_{i\in[N]}\gamma^i$.
	We now present our main theorem on the sample complexity of Algorithm~\ref{al:1}.

	\begin{theorem}\label{thm:1}
		Consider a discounted stochastic game that is weakly acyclic under strict best replies~\eqref{eq:pi-i}. Suppose that  each agent updates its policies by Algorithm~\ref{al:1}. Let Assumptions~\ref{as:alpha} and~\ref{as:aperiodic} hold. Then, there exist some constants $c_0$ and $c_1$ such that, for any $0<\delta<1$, one has {that} for all $k\ge K$,
		\begin{align*}
			P\left[\pi_k\in\Pi_{\rm eq}\right] \ge 1-\delta,
		\end{align*}
		provided that for all $i\in[N]$ and $k\in[K]$,
		\begin{subequations}
			\begin{align}
				T_k &\ge \frac{c_0}{\mu_{\min,k}}\left\{\frac{1}{(1-\bar{\gamma})^5{\epsilon}^2}+\frac{t_{{\rm mix},k}\left(\frac{1}{4}\right)}{1-\bar{\gamma}}\right\}\log\left(\frac{NL|\mathcal{S}|AT_k}{\tilde{\delta}}\right)\log\left(\frac{1}{(1-\bar{\gamma})^2{\epsilon}}\right)\label{eq:Tk},\\
				K &\ge \frac{\left[\left(1-\tilde{\delta}\right)^2\hat{p}-\tilde{\delta}^2\right]L}{\left[\tilde{\delta}+\left(1-\tilde{\delta}\right)\hat{p}\right]^2\tilde{\delta}},\\
				\eta_t^i &= \frac{c_1}{\log\left(\frac{NL|\mathcal{S}||\mathcal{A}^i|T_k}{\tilde{\delta}}\right)} \min\left\{\frac{(1-\bar{\gamma})^4{\epsilon}^2}{\bar{\gamma}^2},\frac{1}{t_{{\rm mix},k}\left(\frac{1}{4}\right)}\right\}, \quad\forall t=t_{{k}},\ldots,t_{{k}+1}-1,\\
				\rho^i &=  1 - \left(1-\frac{(\bar{\zeta}/8-{\epsilon})(1-\bar{\gamma})}{\Gamma}\right)^{\frac{1}{N-1}}:=\rho,\\
				\zeta^i &= \frac{\bar{\zeta}}{2},
			\end{align}
		\end{subequations}
	where $\bar{\zeta}$ and $\hat{p}$ are as defined in~\eqref{eq:zetabar} and~\eqref{eq:ppp}, respectively, and $\Gamma$ is some absolute constant~(formally defined in~\eqref{eq:Gamma}) which depends only on the game parameters, ${\epsilon}:={\min\left\{\frac{\bar{\zeta}}{16}, \frac{1}{2(1-\underline{\gamma})}\right\}}$, 
	and $\tilde{\delta}$ is such that
	\begin{align*}
		\delta = 1-\left(\frac{\left(1-\tilde{\delta}\right)\hat{p}}{\tilde{\delta}+\left(1-\tilde{\delta}\right)\hat{p}} - \tilde{\delta}\right)\left(1-\tilde{\delta}\right).
	\end{align*}
	\end{theorem}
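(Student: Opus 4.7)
The plan is to reduce each exploration phase of Algorithm~\ref{al:1} to a single-agent stationary MDP, bound the resulting Q-function estimation error with a standard finite-sample analysis, and then couple the algorithm with the idealized BRPI of Algorithm~\ref{al:0} so that Proposition~\ref{prop:brp} can be transferred to the stochastic setting. The per-phase analysis will produce the lower bound on $T_k$ and the step-size prescription, while the coupling argument will produce the lower bound on $K$.

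Throughout the $k$-th exploration phase the opponents play the fixed randomized profile $\bar{\pi}_k^{-i}$, so from agent~$i$'s viewpoint the update~\eqref{eq:agentQ} is asynchronous Q-learning on a stationary, uniformly ergodic MDP (Assumptions~\ref{as:alpha}--\ref{as:aperiodic}) whose optimal action-value function is $Q^i_{\bar{\pi}_k^{-i}}$. I would invoke a sharp finite-sample bound of the form in~\citet{li2020sample} for constant step size; under the choice of $\eta_t^i$ and the length $T_k$ stated in the theorem, one obtains $\|Q^i_{t_{k+1}}-Q^i_{\bar{\pi}_k^{-i}}\|_\infty\le\epsilon$ with probability at least $1-\tilde{\delta}$. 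The quantities $\mu_{\min,k}$ and $t_{{\rm mix},k}$ enter naturally as the worst-case state-visitation frequency and the mixing time of the induced Markov chain, and the projection step onto $\mathbb{Q}^i$ keeps the iterates uniformly bounded across phases.

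To translate an approximation of $Q^i_{\bar{\pi}_k^{-i}}$ into an approximation of $Q^i_{\pi_k^{-i}}$, I would use the decomposition $\bar{\pi}_k^j=(1-\rho)\pi_k^j+\rho\nu^j$: the product policies $\bar{\pi}_k^{-i}$ and $\pi_k^{-i}$ differ by at most $1-(1-\rho)^{N-1}$ in total variation uniformly in $s$, so a Bellman-contraction perturbation argument yields a bound of the form $\|Q^i_{\bar{\pi}_k^{-i}}-Q^i_{\pi_k^{-i}}\|_\infty\le \Gamma(1-(1-\rho)^{N-1})/(1-\bar{\gamma})$ for an absolute constant $\Gamma$ depending only on the game parameters. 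The explicit choice of $\rho$ in the theorem is exactly what is needed to make this perturbation at most $\bar{\zeta}/8-\epsilon$, so the triangle inequality gives $\|Q^i_{t_{k+1}}-Q^i_{\pi_k^{-i}}\|_\infty\le\bar{\zeta}/8$ on a $(1-\tilde{\delta})$-probability event. By the definition~\eqref{eq:zetabar} of $\bar{\zeta}$, distinct values of $Q^i_{\pi_k^{-i}}(s,\cdot)$ are separated by at least $\bar{\zeta}$, which together with the tolerance $\zeta^i=\bar{\zeta}/2$ implies that the set $\Pi_{k+1}^i$ computed in Algorithm~\ref{al:1} coincides with the true deterministic best-reply set $\Pi^i_{\pi_k^{-i}}$ on this event; consequently, after a union bound over the $N$ agents (absorbed in the logarithmic factor of $T_k$), each policy-update step of Algorithm~\ref{al:1} is distributionally identical to one BRPI step with probability at least $1-\tilde{\delta}$.

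The last step is to upgrade Proposition~\ref{prop:brp}, which applies to the pure BRPI, to the stochastic process produced by Algorithm~\ref{al:1}. A naive union bound over the $K$ phases would only give a bound of the form $1-(\text{BRPI failure probability})-K\tilde{\delta}$, which loses the clean rational form of the statement; instead, I would model the joint process as a Markov chain on $\Pi$ in which each transition follows BRPI with probability $1-\tilde{\delta}$ and is corrupted with probability $\tilde{\delta}$ (potentially resetting progress). Analyzing the analogue of the recursion~\eqref{eq:Ppi} for this chain, while tracking how a single corrupted phase may destroy up to $L-1$ phases of progress, produces precisely the displayed relation between $\delta$ and $\tilde{\delta}$ via a geometric-series fixed-point computation, and solving for the number of phases required to come within $\delta$ of the steady-state equilibrium probability yields the stated lower bound on $K$. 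The hardest part of the whole argument is exactly this final coupling and accounting: the particular rational form relating $\delta$, $\tilde{\delta}$, and $\hat{p}$ must emerge from the recursion rather than be postulated, and the initial state of each phase is determined by the previous phase, so the uniform ergodicity exploited in the Q-learning bound must be invoked in an initial-state-agnostic form that remains compatible with the time-varying behavior policy across phases.
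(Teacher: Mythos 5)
Your proposal follows essentially the same route as the paper's proof: a per-phase application of the \citet{li2020sample} asynchronous Q-learning bound (Lemma~\ref{lm:Qappx}), a Bellman-contraction perturbation bound controlling $\bigl|Q^i_{\bar{\pi}_k^{-i}}-Q^i_{\pi_k^{-i}}\bigr|_\infty$ via the choice of $\rho$ (Lemma~\ref{lm:Qexp}), the observation that $\bar{\zeta}$-separation plus $\zeta^i=\bar{\zeta}/2$ makes the estimated best-reply sets coincide with the true ones on the good event so each policy update mimics a BRPI step (Lemmas~\ref{lm:pik1} and~\ref{lm:p0}), and finally the block-of-$L$ recursion $p_{(n+1)L}\ge p_{nL}(1-\tilde{\delta})+(1-p_{nL})\hat{p}(1-\tilde{\delta})$ whose fixed-point analysis yields the stated relation between $\delta$ and $\tilde{\delta}$. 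The only cosmetic difference is where you place the union bound over the $L$ phases of a block, which does not change the argument.
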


Theorem~\ref{thm:1} provides a finite-sample result for Algorithm~\ref{al:1}. To be more explicit on the   results, we further obtain the following bounds for $t_{{\rm mix},k}(\alpha)$ and $\mu_{\min,k}$.
\begin{proposition}\label{prop:bounds}
	For all $k\in[K]$ and $i\in[N]$, we have that
	\begin{subequations}
		\begin{align}
			\mu_{\min,k}^i &\le \left[1-\left(|\Scal|-1\right)\kappa \left(\prod_{i\in[N]}\frac{\rho^i}{|\Acal^i|}\right)^H\right]\cdot \frac{\rho^i}{|\Acal^i|},\\
			\mu_{\min,k}^i &\ge \kappa \frac{\rho^i}{|\Acal^i|} \left(\prod_{i\in[N]}\frac{\rho^i}{|\Acal^i|}\right)^H,\\
			t_{{\rm mix},k}(\alpha)&\le (H+1)\left(\frac{-\log \alpha}{\log \left[\frac{1-(|\Scal|-1)\kappa\left(\prod_{i\in[N]}\frac{\rho^i}{|\Acal^i|}\right)^H}{1-|\Scal|\kappa \left(\prod_{i\in[N]}\frac{\rho^i}{|\Acal^i|}\right)^H}\right]}+1\right).
		\end{align}
	\end{subequations}
	
	With $\rho^i = \rho$ for all $i\in[N]$ as in Theorem~\ref{thm:1}, we deduce that
	\begin{subequations}
		\begin{align}
			\mu_{\min,k} &\le \left[1-\left(|\Scal|-1\right)\kappa \frac{\rho^{NH}}{A^{NH}}\right]\cdot \frac{\rho}{A},\\
			\mu_{\min,k} &\ge \kappa \frac{\rho^{NH+1}}{A^{NH+1}},\label{eq:muminlow}\\
			t_{{\rm mix},k}(\alpha)
			&\le (H+1)\left((-\log \alpha)\frac{A^{NH}}{\kappa\rho^{NH}}+1\right).\label{eq:tmixup}
		\end{align}
	\end{subequations}
\end{proposition}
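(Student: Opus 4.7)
The plan is to derive all three bounds from a single uniform lower bound on the $H$-step transition probabilities of the state chain under $\bar\pi_k$. Under $\bar\pi_k$, every individual action $a^i$ is played at every state with probability at least $\rho^i/|\Acal^i|$, so any prescribed joint $H$-step action sequence is realized with probability at least $\left(\prod_{i\in[N]}\rho^i/|\Acal^i|\right)^H$, irrespective of the intermediate states visited. Combining this with Assumption~\ref{as:alpha}---which for every pair $(s,s')$ supplies such a sequence driving the state from $s$ to $s'$ with conditional probability at least $\kappa$---yields the key estimate $P^H_{\bar\pi_k}(s'\mid s)\ge \kappa''$ for all $s,s'\in\Scal$, where $\kappa'':=\kappa\left(\prod_{i\in[N]}\rho^i/|\Acal^i|\right)^H$.

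The two bounds on $\mu_{\min,k}^i$ then follow quickly. By stationarity of $\mu_{\bar\pi_k}$ under $P^H_{\bar\pi_k}$, we have $\mu_{\bar\pi_k}(s')=\sum_s \mu_{\bar\pi_k}(s)P^H_{\bar\pi_k}(s'\mid s)\ge \kappa''$ for every $s'\in\Scal$; together with the factorization $\mu_{\bar\pi_k}^i(s,a^i)=\mu_{\bar\pi_k}(s)\bar\pi_k^i(a^i\mid s)$ and $\bar\pi_k^i(a^i\mid s)\ge \rho^i/|\Acal^i|$, this produces the lower bound. For the matching upper bound, pick $s^\star\in\arg\min_s\mu_{\bar\pi_k}(s)$; the lower bound applied to the other $|\Scal|-1$ states forces $\mu_{\bar\pi_k}(s^\star)\le 1-(|\Scal|-1)\kappa''$, and pairing $s^\star$ with any non-baseline action $a^{\star i}$ (which exists since $|\Acal^i|\ge 2$ and whose probability under $\bar\pi_k^i$ is exactly $\rho^i/|\Acal^i|$) delivers $\mu_{\min,k}^i\le [1-(|\Scal|-1)\kappa'']\cdot \rho^i/|\Acal^i|$. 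The simplified bounds in the $\rho^i=\rho$ case follow by substitution together with $\prod_i|\Acal^i|\le A^N$.

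For the mixing-time bound I would first reduce to state-chain mixing: for $t\ge 1$ the conditional law of $(s_t,a_t^i)$ given $(s_0,a_0^i)$ factors as $P(s_t=\cdot\mid s_0,a_0^i)\,\bar\pi_k^i(\cdot\mid s_t)$, and $\mu_{\bar\pi_k}^i$ factors the same way, so the $(s,a^i)$-TV distance to stationarity collapses to the state-marginal TV distance. Next, combining one initial step (to move past the prescribed $a_0^i$) with the key $H$-step estimate yields $P^{H+1}((s',a'^i)\mid (s,a^i))\ge \kappa''\,\bar\pi_k^i(a'^i\mid s')$, and a Doeblin-type minorization furnishes a contraction $d_{\rm TV}(\nu P^{H+1},\mu_{\bar\pi_k}^i)\le \xi\cdot d_{\rm TV}(\nu,\mu_{\bar\pi_k}^i)$ with factor $\xi=(1-|\Scal|\kappa'')/(1-(|\Scal|-1)\kappa'')<1$ matching the log-ratio in the proposition. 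Iterating over $m$ blocks of length $H+1$ and solving $\xi^m\le \alpha$ for $m$ gives $m\le (-\log\alpha)/\log(1/\xi)+1$, and taking the max over $i$ produces the stated bound on $t_{{\rm mix},k}$. The simplified $\rho^i=\rho$ form uses the elementary inequality $\log(1/\xi)\ge \kappa''/(1-(|\Scal|-1)\kappa'')\ge \kappa''\ge \kappa\rho^{NH}/A^{NH}$ (via $\log(1+y)\ge y/(1+y)$) to clean up the denominator.

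The technically most delicate step is realizing the contraction in exactly the form $\xi=(1-|\Scal|\kappa'')/(1-(|\Scal|-1)\kappa'')$ displayed in the proposition: a textbook Doeblin argument against the uniform distribution on $\Scal\times\Acal^i$ gives the cleaner (and strictly tighter) factor $1-|\Scal|\kappa''$, so recovering the stated form requires a non-uniform choice of minorizer that tracks the dependence on $\bar\pi_k^i(a'^i\mid s')$ in the $H{+}1$-step lower bound and leaves a residual ``escape mass'' of $1-(|\Scal|-1)\kappa''$ rather than $1-|\Scal|\kappa''$ on each side. Everything else---the stationarity-based bounds on $\mu_{\bar\pi_k}$, the product factorization of $\mu_{\bar\pi_k}^i$, the existence of a non-baseline action when $|\Acal^i|\ge 2$, ceiling the block count, and passing from $t_{{\rm mix},k}^i$ to $t_{{\rm mix},k}$ via a max over agents---is routine bookkeeping.
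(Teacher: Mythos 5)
Your proposal is correct and follows essentially the same route as the paper: the uniform $H$-step lower bound $\kappa\bigl(\prod_{i}\rho^i/|\mathcal{A}^i|\bigr)^H$ on state transitions, stationarity for the lower bound on $\mu_{\bar{\pi}_k}$, the complementary-mass argument for the upper bound, and a Doeblin minorization of the $(H+1)$-step kernel against the stationary distribution (the paper's Lemmas~\ref{lem:temp} and~\ref{lem:c}) yielding the same contraction factor $\frac{1-|\mathcal{S}|\kappa''}{1-(|\mathcal{S}|-1)\kappa''}$. Your observation that the action probability $\bar{\pi}_k^i(a'^i\mid s')$ cancels in the minorization ratio is in fact a cleaner justification of the constant $c$ than the paper's own appeal to the upper bound on $\mu^i_{\min,k}$, but it is the same argument in substance.
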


\begin{corollary}\label{cor:1}
	By applying~\eqref{eq:muminlow} and~\eqref{eq:tmixup} to Theorem~\ref{thm:1}, we may express the sample complexity of each exploration phase~\eqref{eq:Tk} as
	\begin{align}
		T_k\ge \frac{c_0A^{NH+1}}{ \kappa\rho^{NH+1}}\left\{\frac{1}{(1-\bar{\gamma})^5{\epsilon}^2}+\frac{(H+1)\left((\log 4)\frac{A^{NH}}{\kappa\rho^{NH}}+1\right)}{1-\bar{\gamma}}\right\}\log\left(\frac{NL|\mathcal{S}|AT_k}{\tilde{\delta}}\right)\log\left(\frac{1}{(1-\bar{\gamma})^2\hat{\epsilon}_k}\right).
	\end{align}
\end{corollary}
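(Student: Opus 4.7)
The plan is to establish the stated bound by direct substitution of the explicit bounds from Proposition~\ref{prop:bounds} into the sufficient condition on $T_k$ supplied by Theorem~\ref{thm:1}, after checking that the direction of each substitution preserves sufficiency. Since Theorem~\ref{thm:1} prescribes the common experimentation probability $\rho^i = \rho$ across all agents, we are exactly in the regime where the simplified ``single-$\rho$'' inequalities~\eqref{eq:muminlow} and~\eqref{eq:tmixup} of Proposition~\ref{prop:bounds} apply.

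The expression~\eqref{eq:Tk} in Theorem~\ref{thm:1} contains $\mu_{\min,k}$ only as $1/\mu_{\min,k}$ and contains $t_{{\rm mix},k}(1/4)$ only with a positive sign; to turn the ``provided that $T_k \ge \dots$'' clause into a sufficient condition that involves only the game parameters, we therefore need an \emph{upper} bound on $1/\mu_{\min,k}$ and an \emph{upper} bound on $t_{{\rm mix},k}(1/4)$. The first step is to invert~\eqref{eq:muminlow}, which gives
\begin{equation*}
\frac{1}{\mu_{\min,k}} \le \frac{A^{NH+1}}{\kappa\,\rho^{NH+1}}.
\end{equation*}
The second step is to specialize~\eqref{eq:tmixup} to $\alpha = 1/4$, using $-\log(1/4) = \log 4$, to obtain
\begin{equation*}
t_{{\rm mix},k}(1/4) \le (H+1)\!\left( (\log 4)\,\frac{A^{NH}}{\kappa\,\rho^{NH}} + 1 \right).
\end{equation*}
Plugging both of these into the braces and prefactor of~\eqref{eq:Tk}, while leaving the two logarithmic factors untouched (they depend on none of the implicit quantities being replaced), yields exactly the displayed expression in the corollary.

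The final step is to argue that the new right-hand side dominates the original one pointwise, so that any $T_k$ satisfying the corollary's condition automatically satisfies the hypothesis of Theorem~\ref{thm:1}; the conclusion $P[\pi_k\in\Pi_{\mathrm{eq}}]\ge 1-\delta$ then follows verbatim. There is no real obstacle beyond careful bookkeeping of the inequality directions. The only points worth double-checking are: (i) factoring $A^{NH+1}/(\kappa\rho^{NH+1})$ out as a prefactor preserves the overall bound because every remaining term is nonnegative; (ii) the substitution does not affect the $\epsilon$, $\bar\gamma$, $\tilde\delta$, $N$, $L$, $|\mathcal{S}|$, $A$ dependences in the logarithms; and (iii) the symbol $\hat{\epsilon}_k$ appearing in the second logarithm of the displayed bound is to be read as the $\epsilon$ defined in Theorem~\ref{thm:1}, a notational variant that does not affect the derivation.
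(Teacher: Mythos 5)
Your proposal is correct and matches the paper's (implicit) argument exactly: the corollary is obtained by substituting the lower bound~\eqref{eq:muminlow} on $\mu_{\min,k}$ and the upper bound~\eqref{eq:tmixup} on $t_{{\rm mix},k}(1/4)$ into~\eqref{eq:Tk}, with the inequality directions checked precisely as you describe. Your observation that $\hat{\epsilon}_k$ in the displayed bound should be read as the $\epsilon$ of Theorem~\ref{thm:1} is also consistent with the paper's intent.
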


Note that the joint action space of all agents has size $\mathcal{O}(A^N)$. In Proposition~\ref{prop:bounds}, we have eliminated the dependence on~$\mu_{\min, k}$ and~$t_{\rm{mix}, k}$, so that the sample complexity of~$T_k$ is explicitly represented by the parameters of the game. In the following two subsections, we provide complete analyses and proofs for Theorem~\ref{thm:1} and Proposition~\ref{prop:bounds}.


\subsection{Proof of Theorem~\ref{thm:1}.}
We first introduce the following lemma, which is an application of the sample complexity result on single agent Q-learning~(\citet{li2020sample}).

\begin{lemma}
	\label{lm:Qappx}
	Fix any arbitrary $\pi_k\in\Pi$. For any $0<\hat{\delta}<1$ and $0<\epsilon\le \frac{1}{1-{\gamma}_{\min}}$, there exist some constants $c_{0,k}$ and $c_{1,k}^1,\ldots,c_{1,k}^N$ such that
	$$P\left[  \big|Q_{t_{k+1}}^i - Q_{\bar{\pi}_k^{-i}}^{i}\big|_{\infty} \leq \epsilon, \ \forall i\in[N]  \right] \geq 1-\hat{\delta},$$
	provided that the iteration number $T_k$ and the learning rates $\eta_t^i$ obey
	\begin{subequations}\label{eq:Teta}
		\begin{align}
			T_k&\ge\frac{c_{0,k}}{\mu_{\min,k}}\left\{\frac{1}{(1-\bar{\gamma})^5\epsilon^2}+\frac{t_{{\rm mix}, k}\left(\frac{1}{4}\right)}{1-\bar{\gamma}}\right\}\log\left(\frac{N|\mathcal{S}|AT_k}{\hat{\delta}}\right)\log\left(\frac{1}{(1-\bar{\gamma})^2\epsilon}\right),\\
			\eta_t^i &=\frac{c_{1,k}^i}{\log\left(\frac{N|\mathcal{S}||\mathcal{A}^i|T_k}{\hat{\delta}}\right)} \min\left\{\frac{(1-\gamma^i)^4\epsilon^2}{\left({\gamma^i}\right)^2},\frac{1}{t_{{\rm mix},k}^i\left(\frac{1}{4}\right)}\right\},\quad \forall t=t_k,\ldots,t_{k+1}-1, \ i\in [N],\label{eq:etati}
		\end{align}
	\end{subequations}
	where $\bar{\gamma} = \max_i\gamma^i$ and $A=\max_i|\mathcal{A}^i|$.
\end{lemma}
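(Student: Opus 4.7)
The plan is to reduce the problem to $N$ parallel instances of single-agent Q-learning on stationary MDPs and then invoke the sharp sample complexity result of~\citet{li2020sample}.

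First I would fix agent~$i$ and observe that, during the $k$th exploration phase, all other agents hold their randomized policies $\bar{\pi}_k^{-i}$ fixed. Therefore, from agent~$i$'s perspective, the environment is a stationary MDP with state space $\mathcal{S}$, action space $\mathcal{A}^i$, discount factor $\gamma^i$, expected reward $\tilde{r}^i(s,a^i) = \mathbb{E}_{a^{-i}\sim\bar{\pi}_k^{-i}(s)}[r^i(s,a^i,a^{-i})]$, and effective transition kernel $\tilde{P}[s'\mid s,a^i] = \mathbb{E}_{a^{-i}\sim\bar{\pi}_k^{-i}(s)}[P[s'\mid s,a^i,a^{-i}]]$. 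The fixed point equation~\eqref{eq:Qfp} identifies the optimal $Q$-function of this MDP as precisely $Q^i_{\bar{\pi}_k^{-i}}$, and agent~$i$'s own behavior policy in the phase is $\bar{\pi}_k^i$, so the samples $(s_t,a_t^i)$ form a trajectory from the Markov chain induced by the joint policy $\bar{\pi}_k$.

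Next I would verify the hypotheses of the asynchronous Q-learning bound of~\citet{li2020sample}. Assumption~\ref{as:alpha} together with the fact that each $\bar{\pi}_k^i$ places positive probability on every action guarantees irreducibility of the induced chain, and Assumption~\ref{as:aperiodic} yields aperiodicity; hence the chain is uniformly ergodic with a unique stationary distribution $\mu^i_{\bar{\pi}_k}$. The quantities $\mu_{\min,k}^i$ and $t^i_{\mathrm{mix},k}(1/4)$ defined in~\eqref{eq:mumin} and~\eqref{eq:tmixi} are exactly the parameters that appear in that theorem. Applying it to agent~$i$ yields constants $c^i_{0,k}$ and $c^i_{1,k}$ such that, whenever $T_k$ and $\{\eta_t^i\}$ satisfy the bounds in~\eqref{eq:Teta} with $\hat{\delta}$ replaced by $\hat{\delta}/N$ (and with $\mu_{\min,k}^i$, $t^i_{\mathrm{mix},k}$ in place of the global quantities), one has $|Q_{t_{k+1}}^i - Q_{\bar{\pi}_k^{-i}}^i|_\infty \le \epsilon$ with probability at least $1-\hat{\delta}/N$.

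To obtain the joint statement, I would take a union bound over $i\in[N]$, absorbing the extra $\log N$ factor into the logarithmic factor $\log(N|\mathcal{S}|AT_k/\hat{\delta})$ that already appears in the bound. Using $\mu_{\min,k} = \min_i \mu^i_{\min,k}$, $t_{\mathrm{mix},k} = \max_i t^i_{\mathrm{mix},k}$, $\bar{\gamma} = \max_i \gamma^i$, and $A = \max_i |\mathcal{A}^i|$ only makes the required lower bound on $T_k$ larger, so setting $c_{0,k} := \max_i c^i_{0,k}$ absorbs all agent-dependence into a single constant while retaining per-agent step sizes $\eta^i_t$ with $c^i_{1,k}$. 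The hypothesis $\epsilon\le 1/(1-\gamma_{\min})$ ensures we are in the regime covered by the single-agent result for every agent simultaneously.

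The only subtle point, and the one I would be most careful about, is matching the assumptions of~\citet{li2020sample} to our setting: their bound is stated for a trajectory generated by a \emph{stationary} behavior policy in a \emph{single} MDP, and we need to argue that this is exactly what each agent sees within a single exploration phase, even though the global multi-agent system is nonstationary across phases. Provided this reduction is granted, the rest is just bookkeeping of constants and a union bound.
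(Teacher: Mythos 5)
Your proposal is correct and follows essentially the same route as the paper: within each exploration phase the fixed joint behavior policy $\bar{\pi}_k$ renders the environment a stationary, uniformly ergodic MDP for each agent, Theorem~1 of~\citet{li2020sample} is applied per agent with failure probability $\hat{\delta}/N$, and a union bound plus taking worst-case constants ($c_{0,k}=\max_i c_{0,k}^i$, $\bar{\gamma}$, $A$, $\mu_{\min,k}$, $t_{{\rm mix},k}$) yields the joint statement. The paper's proof is just the same bookkeeping with $\delta_0=\hat{\delta}/N$.
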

\begin{proof}[Proof of Lemma~\ref{lm:Qappx}]\
	Note that in the $k$th exploration phase, agents adopt the joint policy $\bar{\pi}_k$ as defined in~\eqref{eq:pibar}. Also by Assumptions~\ref{as:alpha} and~\ref{as:aperiodic}, the (finite) Markov chain induced by the joint policy is irreducible and aperiodic. Theorem~1 of~\citet{li2020sample} implies that for any agent $i$, there exist some constants $c_{0,k}^i$ and $c_{1,k}^i$ such that for any  $0<\delta_0<1$ and $0<\epsilon\le \frac{1}{1-{\gamma^i}}$,
	\begin{align*}
		P\left[  \big|Q_{t_{k+1}}^i - Q_{\bar{\pi}_k^{-i}}^{i}\big|_{\infty} \leq \epsilon\right] \geq 1-\delta_0
	\end{align*}
	provided that the iteration number $T_k$ and the learning rates $\eta_t^i$ obey
	\begin{subequations}
		\begin{align*}
			T_k&\ge\frac{c_{0,k}^i}{\mu_{\min,k}^i}\left\{\frac{1}{(1-\gamma^i)^5\epsilon^2}+\frac{t_{{\rm mix}, k}^i\left(\frac{1}{4}\right)}{1-\gamma^i}\right\}\log\left(\frac{|\mathcal{S}||\mathcal{A}^i|T_k}{\delta_0}\right)\log\left(\frac{1}{(1-\gamma^i)^2\epsilon}\right),\\
			\eta_t^i &=\frac{c_{1,k}^i}{\log\left(\frac{|\mathcal{S}||\mathcal{A}^i|T_k}{\delta_0}\right)} \min\left\{\frac{(1-\gamma^i)^4\epsilon^2}{\left({\gamma^i}\right)^2},\frac{1}{t_{{\rm mix},k}^i\left(\frac{1}{4}\right)}\right\},\quad \forall t=t_k,\ldots,t_{k+1}-1,
		\end{align*}
	\end{subequations}
	where
	$
		\mu_{\min,k}^i:=\min_{(s,a^i)\in\mathcal{S}\times \mathcal{A}^i}\mu_{\pi_k}^i\left(s,a^i\right)
	$,
	and $\mu_{\min,k}$ and $t_{{\rm mix}, k}^i$ are as defined in~\eqref{eq:mumin} and~\eqref{eq:tmixi}, respectively.  Let $c_{0,k}:=\max_{i\in[N]}c_{0,k}^i$. Then, with 
	\begin{subequations}
		\begin{align*}
			T_k&\ge\frac{c_{0,k}}{\mu_{\min,k}}\left\{\frac{1}{(1-\bar{\gamma})^5\epsilon^2}+\frac{t_{{\rm mix}, k}\left(\frac{1}{4}\right)}{1-\bar{\gamma}}\right\}\log\left(\frac{|\mathcal{S}|AT_k}{\delta_0}\right)\log\left(\frac{1}{(1-\bar{\gamma})^2\epsilon}\right),\\
			\eta_t^i&=\frac{c_{1,k}^i}{\log\left(\frac{|\mathcal{S}||\mathcal{A}^i|T_k}{\delta_0}\right)} \min\left\{\frac{(1-\gamma^i)^4\epsilon^2}{\left({\gamma^i}\right)^2},\frac{1}{t_{{\rm mix},k}^i\left(\frac{1}{4}\right)}\right\},\quad \forall t=t_k,\ldots,t_{k+1}-1, \ i\in[N],
		\end{align*}
	\end{subequations}
	we have that
	\begin{align*}
		P\left[  \big|Q_{t_{k+1}}^i - Q_{\bar{\pi}_k^{-i}}^{i}\big|_{\infty} \leq \epsilon\right] \geq 1-\delta_0,\quad \forall i\in[N],
	\end{align*}
	which implies that
	\begin{align*}
		P\left[  \big|Q_{t_{k+1}}^i - Q_{\bar{\pi}_k^{-i}}^{i}\big|_{\infty} > \epsilon\right] \le \delta_0,\quad \forall i\in[N].
	\end{align*}
	From the union bound, 
	\begin{align*}
		P\left[  \big|Q_{t_{k+1}}^i - Q_{\bar{\pi}_k^{-i}}^{i}\big|_{\infty} > \epsilon,\ \exists i\in[N]\right] \le \sum_{i\in[N]} P\left[  \big|Q_{t_{k+1}}^i - Q_{\bar{\pi}_k^{-i}}^{i}\big|_{\infty} > \epsilon\right]  \le   N\delta_0.
	\end{align*}
	Therefore,
	\begin{align*}
		P\left[\big|Q_{t_{k+1}}^i - Q_{\bar{\pi}_k^{-i}}^{i}\big|_{\infty}\le \epsilon,\ \forall i\in[N]\right] &= 1 - P\left[\big|Q_{t_{k+1}}^i - Q_{\bar{\pi}_k^{-i}}^{i}\big|_{\infty}> \epsilon,\ \exists i\in[N]\right]\ge 1-N\delta_0.
	\end{align*}
	The proof is completed by taking $\hat{\delta} = N\delta_0$.
\end{proof}

Lemma~\ref{lm:Qappx} bounds the approximation error of Q-learning for each agent, i.e., the difference of the Q-function obtained at the end of the $k$th exploration phase and the optimal Q-function in the best reply to $\bar{\pi}^{-i}$. Our next goal is to bound the approximation error of policy perturbation. 
Recall the definition of the randomized policy in~\eqref{eq:pibar}, and consider the joint policies of all agents except~$i$. With probability $\prod_{j\ne i}(1-\rho^j)$, all agents $j\ne i$ end up playing their baseline policies, which results in $\left|Q^i_{\pi_k^{-i}} - Q_{\bar{\pi}_k^{-i}}^i\right| = 0$, i.e. the approximation error of policy perturbation becomes zero in this case. When not all agents play their baseline policies, let $\varphi^{-i}\in\Delta^{-i}$ be some convex combination of the policies in $\Delta^{-i}$ of the form where each agent~$j\ne i$ either uses a baseline policy $\pi^j\in\Pi^j$ or the uniform distribution. More precisely, let $J$ denote the subset of agents choosing the baseline policies, and let 
\begin{align}\label{eq:phi-i}
	\varphi^{-i}=\sum_{J\subset \{1,\ldots,N\}\setminus \{i\}}a_J\varphi_J^{-i},
\end{align}
where $a_J:= \frac{\prod_{j\in J}(1-\rho^j)\prod_{j\notin J\cup\{i\}}\rho^j}{1-\prod_{j\ne i}(1-\rho^j)}$ and $\varphi_J\in\Delta^{-i}$ is such that $\varphi_J^j = \pi^j$ for $j\in J$ and $\varphi_J^j=\nu^j$ for $j\notin J\cup\{i\}$. Denote by $\bar{\Delta}^{-i}\subset\Delta^{-i}$ the set of all policies in the form of~\eqref{eq:phi-i}. Note that $\bar{\Delta}^{-i}$ is a finite set. Recall the definition of the Bellman operator from~\eqref{eq:bellman}. We then define 
\begin{align}\label{eq:Gamma}
	\Gamma:=\max_{\left(\pi^{-i},\varphi^{-i}\right)\in \Pi^{-i}\times \bar{\Delta}^{-i}}\left|\mathcal{T}_{\pi^{-i}}^i(Q^i_{\pi^{-i}}) - \mathcal{T}_{\varphi^{-i}}^i(Q^i_{\pi^{-i}}) \right|_\infty.
\end{align}
We next have the following lemma on the approximation error due to policy perturbation.
\begin{lemma}
	\label{lm:Qexp}
	Fix any arbitrary $\pi_k\in\Pi$. For any $\tilde{\epsilon}>0$, if $\rho^i$ satisfies
	\begin{align}\label{eq:rhoi}
		\rho^i\le 1 - \left(1-\frac{\tilde{\epsilon}(1-\bar{\gamma})}{\Gamma}\right)^{\frac{1}{N-1}},\quad\forall i\in[N],
	\end{align}
	then, we have that
	$$\left|Q_{\pi_k^{-i}}^i - Q_{\bar{\pi}_k^{-i}}^{i}\right|_{\infty} \leq \tilde{\epsilon}, \quad \forall i\in[N], k\in[K].$$
\end{lemma}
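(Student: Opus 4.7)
The plan is to exploit two structural facts: (i) the Bellman operator $\mathcal{T}^i_{\pi^{-i}}$ defined in~\eqref{eq:bellman} is linear (more precisely, affine) in the joint distribution $\pi^{-i}(s)$ of the other agents' actions, since the expectation in~\eqref{eq:bellman} is taken with respect to $\pi^{-i}(s)$; and (ii) for any fixed $\pi^{-i}$, $\mathcal{T}^i_{\pi^{-i}}$ is a $\gamma^i$-contraction in $|\cdot|_\infty$, so $Q^i_{\pi^{-i}}$ and $Q^i_{\bar{\pi}_k^{-i}}$ are the unique fixed points of $\mathcal{T}^i_{\pi_k^{-i}}$ and $\mathcal{T}^i_{\bar{\pi}_k^{-i}}$, respectively.

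First, I would write out the perturbed policy as a convex combination. Set $p := \prod_{j\ne i}(1-\rho^j)$. Partitioning over the subset $J\subset[N]\setminus\{i\}$ of agents who end up playing their baseline policy while all the others play the uniform policy gives the representation
\begin{equation*}
\bar{\pi}_k^{-i} \;=\; p\,\pi_k^{-i} \;+\; (1-p)\,\varphi^{-i},
\end{equation*}
where $\varphi^{-i}\in \bar{\Delta}^{-i}$ is defined by the convex combination in~\eqref{eq:phi-i}. By the affine dependence of $\mathcal{T}^i_{(\cdot)}$ on the joint action distribution, this yields the operator decomposition
\begin{equation*}
\mathcal{T}^i_{\bar{\pi}_k^{-i}}(Q) \;=\; p\,\mathcal{T}^i_{\pi_k^{-i}}(Q) \;+\; (1-p)\,\mathcal{T}^i_{\varphi^{-i}}(Q), \qquad \forall Q.
\end{equation*}

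Next, I would use the fixed-point identities to write
\begin{equation*}
Q^i_{\pi_k^{-i}} - Q^i_{\bar{\pi}_k^{-i}} \;=\; \mathcal{T}^i_{\pi_k^{-i}}\!\bigl(Q^i_{\pi_k^{-i}}\bigr) - \mathcal{T}^i_{\bar{\pi}_k^{-i}}\!\bigl(Q^i_{\bar{\pi}_k^{-i}}\bigr),
\end{equation*}
and split the right-hand side by adding and subtracting $\mathcal{T}^i_{\bar{\pi}_k^{-i}}(Q^i_{\pi_k^{-i}})$. The ``same-operator'' piece is bounded via contraction by $\gamma^i \,|Q^i_{\pi_k^{-i}}-Q^i_{\bar{\pi}_k^{-i}}|_\infty \le \bar{\gamma}\,|Q^i_{\pi_k^{-i}}-Q^i_{\bar{\pi}_k^{-i}}|_\infty$. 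The ``same-argument'' piece equals $(1-p)\bigl[\mathcal{T}^i_{\pi_k^{-i}}(Q^i_{\pi_k^{-i}}) - \mathcal{T}^i_{\varphi^{-i}}(Q^i_{\pi_k^{-i}})\bigr]$ by the decomposition above, and is therefore bounded by $(1-p)\,\Gamma$ by the definition of $\Gamma$ in~\eqref{eq:Gamma} (which is taken precisely over pairs $(\pi^{-i},\varphi^{-i})\in\Pi^{-i}\times\bar{\Delta}^{-i}$). Rearranging gives
\begin{equation*}
\bigl|Q^i_{\pi_k^{-i}}-Q^i_{\bar{\pi}_k^{-i}}\bigr|_\infty \;\le\; \frac{(1-p)\,\Gamma}{1-\bar{\gamma}}.
\end{equation*}

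Finally, I would translate this bound into the condition~\eqref{eq:rhoi}. Under~\eqref{eq:rhoi}, each factor satisfies $1-\rho^j \ge \bigl(1-\tilde{\epsilon}(1-\bar{\gamma})/\Gamma\bigr)^{1/(N-1)}$, so $p=\prod_{j\ne i}(1-\rho^j)\ge 1-\tilde{\epsilon}(1-\bar{\gamma})/\Gamma$, i.e.\ $(1-p)\Gamma/(1-\bar{\gamma})\le \tilde{\epsilon}$, which is the desired conclusion. The main subtlety is step one, namely verifying that the mixed policy $\bar{\pi}_k^{-i}$ admits the clean ``baseline-or-uniform'' decomposition with coefficients matching~\eqref{eq:phi-i}; once this is in place, the contraction argument is the standard MDP perturbation bound and the final algebra is routine.
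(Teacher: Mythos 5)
Your proposal is correct and follows essentially the same route as the paper's proof: decompose $Q^i_{\pi_k^{-i}}-Q^i_{\bar{\pi}_k^{-i}}$ via the fixed-point identities and a triangle inequality, bound the policy-perturbation term by $(1-\prod_{j\ne i}(1-\rho^j))\Gamma$ using the baseline-or-uniform decomposition of $\bar{\pi}_k^{-i}$, bound the other term by contraction, and solve. Your explicit affine decomposition $\mathcal{T}^i_{\bar{\pi}_k^{-i}} = p\,\mathcal{T}^i_{\pi_k^{-i}} + (1-p)\,\mathcal{T}^i_{\varphi^{-i}}$ is in fact a slightly cleaner justification of the paper's inequality~\eqref{eq:TQ1}, which the paper phrases more informally in probabilistic language.
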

\begin{proof}[Proof of Lemma~\ref{lm:Qexp}]\
	First note that, for all $i\in[N]$ and $k\in[K]$,
	\begin{align}
		\left| Q_{\pi_k^{-i}}^i - Q_{\bar{\pi}_k^{-i}}^{i}\right|_{\infty}
		&=  \left|\mathcal{T}_{\pi_k^{-i}}^i(Q_{\pi_k^{-i}}^i) - \mathcal{T}_{\bar{\pi}_k^{-i}}^i(Q_{\bar{\pi}_k^{-i}}^{i}) \right|_{\infty}\nonumber\\
		&\leq \left|\mathcal{T}_{\pi_k^{-i}}^i(Q_{\pi_k^{-i}}^i) - \mathcal{T}_{\bar{\pi}_k^{-i}}^i(Q_{\pi_k^{-i}}^{i})\right|_{\infty}   +\left|\mathcal{T}_{\bar{\pi}_k^{-i}}^i(Q_{\pi_k^{-i}}^{i}) - \mathcal{T}_{\bar{\pi}_k^{-i}}^i(Q_{\bar{\pi}_k^{-i}}^{i})\right|_{\infty}.\label{eq:TQ}
	\end{align}
	By definition of $\bar{\pi}_k^{-i}$, we have that $P\left[\bar{\pi}_k^{-i} = \pi_k^{-i}\right] = \prod_{j\ne i}(1-\rho^j)$. With probability $1-\prod_{j\ne i}(1-\rho^j)$, $\bar{\pi}_k^{-i}\ne \pi_k^{-i}$ and $\bar{\pi}_k^{-i}\in \bar{\Delta}^{-i}$. Thus, the first term of~\eqref{eq:TQ} can be bounded by
	\begin{align}\label{eq:TQ1}
		\left|\mathcal{T}_{\pi_k^{-i}}^i(Q_{\pi_k^{-i}}^i) - \mathcal{T}_{\bar{\pi}_k^{-i}}^i(Q_{\pi_k^{-i}}^{i})\right|_{\infty}\le \left(1-\prod_{j\ne i}(1-\rho^j)\right)\times \left| \mathcal{T}_{\pi_k^{-i}}^i(Q_{\pi_k^{-i}}^{i})-\mathcal{T}_{\varphi_k^{-i}}^i(Q_{\pi_k^{-i}}^{i})\right|_{\infty},
	\end{align}
	for some $\varphi_k^{-i}\in\bar{\Delta}^{-i}$.
	On the other hand, by the contraction mapping of the Bellman operator, we have that
	\begin{align}\label{eq:TQ2}
		\left|\mathcal{T}_{\bar{\pi}_k^{-i}}^i(Q_{\pi_k^{-i}}^{i}) - \mathcal{T}_{\bar{\pi}_k^{-i}}^i(Q_{\bar{\pi}_k^{-i}}^{i})\right|_{\infty}\le \gamma^i\left| Q_{\pi_k^{-i}}^i - Q_{\bar{\pi}_k^{-i}}^{i}\right|_{\infty}.
	\end{align}
	Substituting~\eqref{eq:TQ1} and~\eqref{eq:TQ2} back into~\eqref{eq:TQ}, we have that
	\begin{align*}
		\left| Q_{\pi_k^{-i}}^i - Q_{\bar{\pi}_k^{-i}}^{i}\right|_{\infty}&\le \left(1-\prod_{j\ne i}(1-\rho^j)\right)\times \left| \mathcal{T}_{\pi_k^{-i}}^i(Q_{\pi_k^{-i}}^{i})-\mathcal{T}_{\varphi_k^{-i}}^i(Q_{\pi_k^{-i}}^{i})\right|_{\infty} + \gamma^i\left| Q_{\pi_k^{-i}}^i - Q_{\bar{\pi}_k^{-i}}^{i}\right|_{\infty}\\
		&\le \left(1-\prod_{j\ne i}(1-\rho^j)\right) \Gamma + \gamma^i\left| Q_{\pi_k^{-i}}^i - Q_{\bar{\pi}_k^{-i}}^{i}\right|_{\infty},
	\end{align*}
	which implies that
	\begin{align*}
		\left| Q_{\pi_k^{-i}}^i - Q_{\bar{\pi}_k^{-i}}^{i}\right|_{\infty}&\le \frac{\left(1-\prod_{j\ne i}(1-\rho^j)\right) \Gamma}{1-\gamma^i}\le \frac{\left(1-\prod_{j\ne i}(1-\rho^j)\right) \Gamma}{1-\bar{\gamma}}.
	\end{align*}
	If for all~$i\in[N]$, $\rho^i\le 1 - \left(1-\frac{\tilde{\epsilon}(1-\bar{\gamma})}{\Gamma}\right)^{\frac{1}{N-1}}$, then, we have that $1-\rho^j\ge \left(1-\frac{\tilde{\epsilon}(1-\bar{\gamma})}{\Gamma}\right)^{\frac{1}{N-1}}$, which implies that $\prod_{j\ne i}(1-\rho^j) \ge 1-\frac{\tilde{\epsilon}(1-\bar{\gamma})}{\Gamma}$, and thus
	\begin{align*}
		\left| Q_{\pi_k^{-i}}^i - Q_{\bar{\pi}_k^{-i}}^{i}\right|_{\infty}\le \frac{\left(1-\prod_{j\ne i}(1-\rho^j)\right) \Gamma}{1-\bar{\gamma}}\le\tilde{\epsilon}.
\end{align*}
	The above holds for all $i\in[N]$ and $k\in[K]$, which completes the proof.
\end{proof}

Recall from~\eqref{eq:zetabar} that $\bar{\zeta}$ is the minimum separation between the entries of agents' optimal  Q-functions (with respect to the deterministic policies):
\begin{align*}
	\bar{\zeta}:=\min_{\begin{array}{c} \scriptstyle i,s,a^i,\tilde{a}^i,\pi^{-i}\in\Pi^{-i}: \\ \scriptstyle Q_{\pi^{-i}}^i(s,a^i)\not=Q_{\pi^{-i}}^i(s,\tilde{a}^i) \end{array}} \left|Q_{\pi^{-i}}^i(s,a^i)-Q_{\pi^{-i}}^i(s,\tilde{a}^i)\right|.
\end{align*}
We assume that $\bar{\zeta}>0$ to avoid trivial cases, and consider $\bar{\zeta}$ as an upper bound on $\zeta^i$ for all $i$. We next define the following random event for any arbitrary $\pi_k\in\Pi$:
\begin{align*}
	E_k:=\Big\{\omega\in\Omega:  \left|Q_{t_{k+1}}^i - Q_{\pi_k^{-i}}^{i}\right|_{\infty} < & \frac{1}{4}\min\{\zeta^i,\bar{\zeta}-\zeta^i\},  \forall i   \Big\}.
\end{align*}
With this definition of $E_k$, we show that, if $E_k$ is not empty and $\pi_k\in\Pi_{\rm eq}$, then $\pi_{k+1} = \pi_k$ with probability~$1$. 
\begin{lemma}\label{lm:pik1}
	Given any $\pi_k\in\Pi$ and the corresponding $E_k$, for all $k$, we have that
	$$P\left[\pi_{k+1} = \pi_k\mid E_k,\ \pi_k\in\Pi_{\rm eq}\right] = 1.$$
\end{lemma}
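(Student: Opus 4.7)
The plan is to show that on the event $E_k$, the baseline policy $\pi_k^i$ already belongs to the estimated ``near-best-reply'' set $\Pi_{k+1}^i$ for every agent $i$, so that line~13 of Algorithm~\ref{al:1} is executed and $\pi_{k+1}^i=\pi_k^i$ deterministically for each $i$, giving $\pi_{k+1}=\pi_k$ with probability~$1$.

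First, I would unpack the equilibrium hypothesis: since $\pi_k\in\Pi_{\rm eq}$, the characterization~\eqref{eq:pi-i} and Definition~\ref{def:equi} imply that for each $i$ and each state $s$,
\begin{equation*}
Q_{\pi_k^{-i}}^i\bigl(s,\pi_k^i(s)\bigr)=\max_{a^i\in\mathcal{A}^i}Q_{\pi_k^{-i}}^i(s,a^i).
\end{equation*}
Next, I would use the definition of $E_k$, noting the trivial bound $\min\{\zeta^i,\bar\zeta-\zeta^i\}\le \zeta^i$, to conclude that on $E_k$, for every $(s,a^i)$,
\begin{equation*}
\bigl|Q_{t_{k+1}}^i(s,a^i)-Q_{\pi_k^{-i}}^i(s,a^i)\bigr|<\tfrac14\zeta^i.
\end{equation*}

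The key chain of inequalities is then routine. For any state $s$ and any $a^i\in\mathcal{A}^i$, combining the two displays above yields
\begin{equation*}
Q_{t_{k+1}}^i\bigl(s,\pi_k^i(s)\bigr)
>Q_{\pi_k^{-i}}^i\bigl(s,\pi_k^i(s)\bigr)-\tfrac14\zeta^i
=\max_{\tilde a^i}Q_{\pi_k^{-i}}^i(s,\tilde a^i)-\tfrac14\zeta^i
\ge Q_{\pi_k^{-i}}^i(s,a^i)-\tfrac14\zeta^i
>Q_{t_{k+1}}^i(s,a^i)-\tfrac12\zeta^i.
\end{equation*}
Maximizing the right-hand side over $a^i\in\mathcal{A}^i$ gives $Q_{t_{k+1}}^i(s,\pi_k^i(s))\ge\max_{a^i}Q_{t_{k+1}}^i(s,a^i)-\tfrac12\zeta^i$ for every $s$, which is exactly the membership criterion defining $\Pi_{k+1}^i$ in line~11 of Algorithm~\ref{al:1}. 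Hence $\pi_k^i\in\Pi_{k+1}^i$ for every $i$, so each agent deterministically executes the ``if'' branch and sets $\pi_{k+1}^i=\pi_k^i$, giving $\pi_{k+1}=\pi_k$ almost surely on $\{E_k,\pi_k\in\Pi_{\rm eq}\}$.

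There is no real obstacle here: the only subtlety is matching the tolerance $\tfrac14\zeta^i$ in the definition of $E_k$ against the slack $\tfrac12\zeta^i$ in the definition of $\Pi_{k+1}^i$, which leaves exactly the room needed for one triangle-inequality application on each side. The other term $\bar\zeta-\zeta^i$ in $E_k$ is not used in this lemma; it will presumably be needed in a companion lemma handling the case $\pi_k\notin\Pi_{\rm eq}$ to guarantee that suboptimal actions (separated from optimal ones by at least $\bar\zeta$ in the true $Q$-function) are excluded from $\Pi_{k+1}^i$.
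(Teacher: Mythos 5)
Your proposal is correct and follows essentially the same argument as the paper's proof: two applications of the triangle inequality using the $\tfrac14\zeta^i$ accuracy from $E_k$, combined with the fact that $\pi_k^i(s)$ maximizes $Q^i_{\pi_k^{-i}}(s,\cdot)$ at equilibrium, to show $\pi_k^i\in\Pi_{k+1}^i$ and hence that the ``if'' branch fires for every agent. Your direct chain of inequalities is if anything slightly cleaner than the paper's (which bounds the gap $\max_{\hat a^i}Q^i_{t_{k+1}}(s,\hat a^i)-Q^i_{t_{k+1}}(s,\pi_k^i(s))$ and uses a strict inequality where a weak one is what actually holds), but there is no substantive difference in approach.
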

\begin{proof}[Proof of Lemma~\ref{lm:pik1}]\
	Let $\hat{a}^{i*} := \argmax_{\hat{a}^i}Q^i_{t_{k+1}}\left(s,\hat{a}^i\right)$. Then, conditioned on $E_k$ and $\pi_k\in\Pi_{\rm eq}$, we have that
	\begin{align*}
		\max_{\hat{a}^i}Q^i_{t_{k+1}}\left(s,\hat{a}^i\right) -Q^i_{t_{k+1}}\left(s,\pi_k^i(s)\right) &= Q^i_{t_{k+1}}\left(s,\hat{a}^{i*}\right) -Q^i_{t_{k+1}}\left(s,\pi_k^i(s)\right)\\ &=\left[Q^i_{t_{k+1}}\left(s,\hat{a}^{i*}\right) - Q^i_{\pi_k^{-i}}\left(s,\pi_k^i(s)\right)\right] + \left[Q^i_{\pi_k^{-i}}\left(s,\pi_k^i(s)\right)- Q^i_{t_{k+1}}\left(s,\pi_k^i(s)\right)\right] \\
		&<Q^i_{t_{k+1}}\left(s,\hat{a}^{i*}\right) - Q^i_{\pi_k^{-i}}\left(s,\pi_k^i(s)\right)+\frac{1}{2}\min\left\{\zeta^i,\bar{\zeta}-\zeta^i\right\}\\
		&<\left[Q^i_{t_{k+1}}\left(s,\hat{a}^{i*}\right) - Q^i_{\pi_k^{-i}}\left(s,\hat{a}^{i*}\right)\right] + \left[Q^i_{\pi_k^{-i}}\left(s,\hat{a}^{i*}\right)- Q^i_{\pi_k^{-i}}\left(s,\pi_k^i(s)\right)\right]\\
		&\qquad +\frac{1}{4}\min\left\{\zeta^i,\bar{\zeta}-\zeta^i\right\}\\
		&< \frac{1}{4}\min\left\{\zeta^i,\bar{\zeta}-\zeta^i\right\} + \frac{1}{4}\min\left\{\zeta^i,\bar{\zeta}-\zeta^i\right\}
		\le\frac{1}{2}\min\left\{\zeta^i,\bar{\zeta}-\zeta^i\right\},
	\end{align*}
	where the second-to-last inequality follows since $Q^i_{\pi_k^{-i}}\left(s,\hat{a}^i\right)- Q^i_{\pi_k^{-i}}\left(s,\pi_k^i(s)\right)<0$, which follows from $\pi_k\in\Pi_{\rm eq}$. 
	It follows that $Q^i_{t_{k+1}}\left(s,\pi_k^i(s)\right)\ge \max_{\hat{a}^i}Q^i_{t_{k+1}}\left(s,\hat{a}^i\right) - \frac{1}{2}\zeta^i$ for all~$i$. Then, by Algorithm~\ref{al:1} (lines~11-13), we have that $\pi_{k+1} = \pi_k$ with probability~$1$.
\end{proof}
Recall that $L$ is the maximum length of the shortest strict best reply path from any policy to an equilibrium policy. Our next lemma lower bounds the conditional probability of $\pi_{k+L}$ being an equilibrium policy, given that $\pi_k$ is not an equilibrium policy and $E_k,\ldots,E_{k+L-1}$.
\begin{lemma}\label{lm:p0}
	Let 
	\begin{align}\label{eq:phat}
		\hat{p} := \left(\min_{j\in \{1,\ldots,N\}}\left\{ \frac{1-\lambda^j}{\left|\Pi^{j}\right|}\cdot \prod_{i\ne j}\lambda^i\right\}\right)^L,
	\end{align}
	which is the same as that in~\eqref{eq:ppp}.	We then have that
	\begin{align}
		P\left[  \pi_{k+L} \in \Pi_{\rm eq} \ \big| \ E_k,\dots,E_{k+L-1},   \pi_k\not\in\Pi_{\rm eq} \right] \geq \hat{p}.\label{eq:ne2e}
	\end{align}
\end{lemma}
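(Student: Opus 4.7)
The plan is to reduce this lemma to the Best Reply Process with Inertia analysis already carried out in Lemma~\ref{lm:pmin}, by showing that on the event $E_k$ the policy-update step of Algorithm~\ref{al:1} is indistinguishable from a BRPI step with respect to the \emph{true} best-reply sets. Once this reduction is established, the bound $\hat p$ follows by the same path-counting argument used to prove Lemma~\ref{lm:pmin}, combined with Lemma~\ref{lm:pik1} to handle equilibrium maintenance after the path has been completed.

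The first and main step is the identification $\Pi^i_{k+1}=\Pi^i_{\pi_k^{-i}}$ on $E_k$. I would argue the two inclusions via the defining threshold $\tfrac12\zeta^i$ appearing in line~11 of Algorithm~\ref{al:1}. For $\hat\pi^i\in\Pi^i_{\pi_k^{-i}}$, triangle inequality with the bound $|Q^i_{t_{k+1}}-Q^i_{\pi_k^{-i}}|_\infty<\tfrac14\min\{\zeta^i,\bar\zeta-\zeta^i\}$ gives
\[
\max_{a^i}Q^i_{t_{k+1}}(s,a^i)-Q^i_{t_{k+1}}(s,\hat\pi^i(s))
\le \tfrac12\min\{\zeta^i,\bar\zeta-\zeta^i\}\le \tfrac12\zeta^i,
\]
so $\hat\pi^i\in\Pi^i_{k+1}$. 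For the reverse inclusion, if $\hat\pi^i\notin\Pi^i_{\pi_k^{-i}}$ then by the definition of $\bar\zeta$ in~\eqref{eq:zetabar} there exists $s$ at which the true Q-gap is at least $\bar\zeta$; subtracting the two-sided error $\tfrac12\min\{\zeta^i,\bar\zeta-\zeta^i\}$ leaves a gap strictly larger than $\tfrac12\zeta^i$ (a short case split on which term attains the min confirms this), so $\hat\pi^i\notin\Pi^i_{k+1}$. This is the delicate part; it is exactly why the event $E_k$ carries the specific constant $\tfrac14\min\{\zeta^i,\bar\zeta-\zeta^i\}$ rather than a plain $\tfrac14\zeta^i$.

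Given this identification, the remainder mimics Lemma~\ref{lm:pmin}. Starting from $\pi_k\notin\Pi_{\rm eq}$, the weakly-acyclic hypothesis provides a strict best-reply path $\pi_k=\hat\pi_0,\hat\pi_1,\dots,\hat\pi_l\in\Pi_{\rm eq}$ of some length $l\le L$, with agent $i_n$ being the unique switcher from $\hat\pi_{n-1}$ to $\hat\pi_n$. Because each $\hat\pi_n^{i_n}$ is a \emph{strict} best reply, $\hat\pi_{n-1}^{i_n}\notin\Pi^{i_n}_{\hat\pi_{n-1}^{-i_n}}$; hence on $E_{k+n-1}$ Algorithm~\ref{al:1} enters its randomization branch for agent $i_n$, selecting $\hat\pi_n^{i_n}$ with probability at least $(1-\lambda^{i_n})/|\Pi^{i_n}|$, while every other agent $j\ne i_n$ retains its current policy with probability at least $\lambda^j$ (whether or not $\hat\pi_{n-1}^j$ lies in its approximated set). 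Multiplying these one-step bounds along the path gives
\[
P\bigl[\pi_{k+n}=\hat\pi_n,\ n=1,\dots,l\,\big|\,E_k,\dots,E_{k+l-1},\pi_k\notin\Pi_{\rm eq}\bigr]
\ge\prod_{n=1}^{l}\Bigl(\tfrac{1-\lambda^{i_n}}{|\Pi^{i_n}|}\prod_{i\ne i_n}\lambda^i\Bigr).
\]

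Finally, for the remaining $L-l$ steps, $\hat\pi_l\in\Pi_{\rm eq}$, so Lemma~\ref{lm:pik1} applied successively on $E_{k+l},\dots,E_{k+L-1}$ yields $\pi_{k+l}=\pi_{k+l+1}=\cdots=\pi_{k+L}\in\Pi_{\rm eq}$ with conditional probability one. Combining with the previous display and bounding the product by $\bigl(\min_j \tfrac{1-\lambda^j}{|\Pi^j|}\prod_{i\ne j}\lambda^i\bigr)^l$, then using $l\le L$ together with the fact that the minimum is at most $1$ to replace $l$ by $L$, produces the claimed lower bound $\hat p$. The principal obstacle is the first step's constant-chasing; the combinatorial/probability portion is a direct transcription of the BRPI analysis once the set equality $\Pi^i_{k+1}=\Pi^i_{\pi_k^{-i}}$ is in hand.
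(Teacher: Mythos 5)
Your proposal is correct and follows essentially the same route as the paper: establish that on $E_k$ the algorithm's tolerance set $\Pi^i_{k+1}$ agrees with the true best-reply set (the paper phrases this as preservation of the action ordering plus exclusion of the current non-best-reply, using $\tfrac14\min\{\zeta^i,\bar\zeta-\zeta^i\}\le\tfrac18\bar\zeta$ where you do a direct case split, but the content is identical), then run the BRPI path-counting argument along a shortest strict best-reply path and invoke Lemma~\ref{lm:pik1} for the remaining $L-l$ steps. No gaps.
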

\begin{proof}[Proof of Lemma~\ref{lm:p0}]\
	We begin with an important observation. Consider some $\pi_{k}\notin \Pi_{\rm eq}$; then, there must exist at least one agent, say agent~$i$, whose policy $\pi^i_k$ is not the best reply to $\pi_k^{-i}$, i.e., $\pi^i_k\notin \Pi^i_{\pi_k^{-i}}$. In this case, we claim that $\pi_k^i \notin \Pi^i_{k+1}$, where $\Pi^i_{k+1}$ is as defined in Algorithm~\ref{al:1} (line~11). In other words, the ``else'' statement in Algorithm~\ref{al:1} (line~15) will be executed. To see this, it suffices to show that $Q_{t_{k+1}}^i(s,\pi^i_k(s)) < \max_{a^i\in\mathcal{A}^i}Q_{t_{k+1}}^i(s,a^i)-\frac{1}{2}\zeta^i$ for some $s\in \mathcal{S}$. 
	Conditioned on $E_k$, we have that 
	$$
	Q^i_{\pi_k^{-i}}(s,a^i) - \frac{1}{4}\min\{\zeta^i, \bar{\zeta}-\zeta^i\} < Q^i_{t_{k+1}}(s,a^i) < Q^i_{\pi_k^{-i}}(s,a^i) + \frac{1}{4}\min\{\zeta^i, \bar{\zeta}-\zeta^i\},$$ 
	i.e., $Q^i_{t_{k+1}}(s,a^i)$ lies within a distance of $\frac{1}{4}\min\{\zeta^i, \bar{\zeta}-\zeta^i\}$ to $Q^i_{\pi_k^{-i}}(s,a^i)$. Moreover, we note that $\frac{1}{4}\min\{\zeta^i, \bar{\zeta}-\zeta^i\} \leq \frac{1}{8}\bar{\zeta}$.  Recall that $\left\{Q^i_{\pi_k^{-i}}(s,a^i): a^i\in \mathcal{A}^i\right\}$ are dispersed with spacing being at least $\bar{\zeta}$, where $\bar{\zeta}$ is as defined in~\eqref{eq:zetabar} as the minimum separation between the optimal $Q$-functions. Thus, it follows that the possible range of $Q^i_{t_{k+1}}(s,a^i)$ for all $a^i\in\mathcal{A}^i$ are mutually exclusive, which implies that the $\tau$-th best action under $Q^i_{\pi_k^{-i}}$ is identical to that under $Q^i_{t_{k+1}}$, i.e., $$
	\argmax_{a^i\in\mathcal{A}^i}\left(Q^i_{\pi_k^{-i}}(s,a^i)\right)_{(\tau)} = \argmax_{a^i\in\mathcal{A}^i}\left(Q^i_{t_{k+1}}(s,a^i)\right)_{(\tau)},
	$$ 
	where $(\cdot)_{(\tau)}$ represents the $\tau$-th largest value. For instance, when $\tau=1$, we have $\argmax_{a^i\in\mathcal{A}^i}Q^i_{\pi_k^{-i}}(s,a^i)= \argmax_{a^i\in\mathcal{A}^i}Q^i_{t_{k+1}}(s,a^i)$, which are denoted by $a^{i*}_{\pi_k^{-i}}(s)$ and $a^{i*}_{t_{k+1}}(s)$, respectively. 
	
	Since $\pi^i_k\notin \Pi^i_{\pi_k^{-i}}$, it follows that $\pi^i_k(s)\neq \argmax_{a^i\in\mathcal{A}^i} Q^i_{\pi_k^{-i}}(s,a^i)=:a^{i*}_{\pi_k^{-1}}(s)$ for some $s\in\mathcal{S}$. Then, we have that
	\begin{align*}
		\max_{a^i\in\mathcal{A}^i}Q_{t_{k+1}}^i(s,a^i) - Q_{t_{k+1}}^i(s,\pi^i_k(s)) 
		&> \left(\max_{a^i\in\mathcal{A}^i}Q_{\pi_k^{-i}}^i(s,a^i) - \frac{1}{8}\bar{\zeta}\right) - \left(Q_{\pi_k^{-i}}^i(s,\pi^i_k(s)) + \frac{1}{8}\bar{\zeta}\right)\\
		&= \left(Q_{\pi_k^{-i}}^i\left(s,a^{i*}_{\pi_k^{-i}}(s)\right) - Q_{\pi_k^{-i}}^i(s,\pi^i_k(s))\right) -  \frac{1}{4}\bar{\zeta}\\
		&\geq \bar{\zeta} - \frac{1}{4}\bar{\zeta} =\frac{3}{4}\bar{\zeta} \geq \frac{3}{4}\zeta^i> \frac{1}{2}\zeta^i
	\end{align*}
	as desired. Now, we are ready to prove the statement.
	
	Let $l$ be the length of the shortest strict best reply path from $\pi_k$ to an equilibrium policy. Then $l\le L$. Let the sequence of policies along the path be $\pi_0,\pi_1,\ldots, \pi_l$, with $\pi_0 = \pi_k\notin \Pi_{\rm eq}$ and $\pi_l\in\Pi_{\rm eq}$. Further, let $i_1,\ldots,i_l$ be the agent that changes her policy at each update, i.e., $\pi_{n-1}$ and $\pi_{n}$ differ only at agent $i_n$, for all $n=1,\ldots,l$. 
	Then, based on the aforementioned observation, we can use the two probabilities in the policy update rule in Algorithm~\ref{al:1}~(line~15) to yield
	\begin{align*}
		&\quad P\left[  \pi_{k+L} \in \Pi_{\rm eq} \ \big| \ E_k,\dots,E_{k+L-1},   \pi_k\not\in\Pi_{\rm eq} \right]\ge  P\left[  \pi_{k+L} =\pi_l \ \big| \ E_k,\dots,E_{k+L-1},   \pi_k\not\in\Pi_{\rm eq} \right]\\
		&\ge P\left[  \pi_{k+1} =\pi_1,\pi_{k+2} =\pi_2,\ldots,\pi_{k+l}=\pi_l, \pi_{k+l+1}=\cdots=\pi_{k+L}=\pi_l \ \big| \ E_k,\dots,E_{k+L-1},   \pi_k\not\in\Pi_{\rm eq} \right]\\
		&\ge 
		P\left[  \pi_{k+1} =\pi_1 \ \big| \ E_k,\dots,E_{k+L-1},   \pi_k=\pi_0 \right]\cdot P\left[  \pi_{k+2} =\pi_2 \ \big| \ E_k,\dots,E_{k+L-1},   \pi_k=\pi_0, \pi_{k+1}=\pi_1 \right]\\
		&\quad \cdot P\left[  \pi_{k+3} =\pi_3 \ \big| \ E_k,\dots,E_{k+L-1},   \pi_k=\pi_0, \pi_{k+1}=\pi_1, \pi_{k+2} = \pi_2 \right] \cdot \cdots \\
		&\quad \cdot P\left[  \pi_{k+l} =\pi_l \ \big| \ E_k,\dots,E_{k+L-1},   \pi_k=\pi_0, \pi_{k+1}=\pi_1,\ldots,\pi_{k+l-1} = \pi_{l-1} \right]\\
		&\quad \cdot P\left[  \pi_{k+l+1} =\pi_l \ \big| \ E_k,\dots,E_{k+L-1},   \pi_k=\pi_0, \pi_{k+1}=\pi_1,\ldots,\pi_{k+l} = \pi_{l} \right] \cdot \cdots\\
		&\quad \cdot P\left[  \pi_{k+L} =\pi_l \ \big| \ E_k,\dots,E_{k+L-1},   \pi_k=\pi_0, \pi_{k+1}=\pi_1,\ldots,\pi_{k+l} = \pi_{l},\ldots,\pi_{k+L-1} = \pi_l \right]\\
		&\ge \prod_{j\in\{i_1,\ldots,i_l\}}\left(\frac{1-\lambda^j}{\left|\Pi^{j}\right|}\cdot \prod_{i\ne j}\lambda^i\right)\ge \left(\min_{j\in \{1,\ldots,N\}} \left\{\frac{1-\lambda^j}{\left|\Pi^{j}\right|}\cdot \prod_{i\ne j}\lambda^i\right\}\right)^l \ge \left(\min_{j\in \{1,\ldots,N\}}\left\{ \frac{1-\lambda^j}{\left|\Pi^{j}\right|}\cdot \prod_{i\ne j}\lambda^i\right\}\right)^L,
	\end{align*}
	where we have used the fact from Lemma~\ref{lm:pik1}: given $\pi_l\in\Pi_{\rm eq}$ and the events $E_k,\ldots,E_{k+L-1}$, the conditional probability that $\pi_s\in\Pi_{\rm eq}$ is~$1$ for all $s\ge l$.
\end{proof}

We will then bound $P\left[E_k,\ldots,E_{k+L-1}\right]$. Before that, we first look at $P[E_k]$. We would like $P[E_k]$ to be as large as possible. Note that $\frac{1}{4}\min\{\zeta^i,\bar{\zeta}-\zeta^i\}\le \frac{1}{8}\bar{\zeta}$, with equality holding when $\zeta^i = \frac{1}{2}\bar{\zeta}$. We next have the following lemma. 
\begin{lemma}
	\label{lm:Pi}
	Let $\zeta^i = \frac{\bar{\zeta}}{2}$ for all $i\in[N]$. Fix an arbitrary $\pi_k\in\Pi$. For any $0<\hat{\delta}<1$,  we have that
	$$P\left[ E_k  \right] \geq 1-\hat{\delta},$$
	provided that
	$	\rho^i\le 1 - \left(1-\frac{\left(\bar{\zeta}/8 - \epsilon\right)(1-\bar{\gamma})}{\Gamma}\right)^{\frac{1}{N-1}}$, and $T_k$ and $\eta_t^i$ satisfy~\eqref{eq:Teta}, where $\epsilon$ can take any value in $0<\epsilon<\min\left\{\frac{\bar{\zeta}}{8}, \frac{1}{1-{\gamma}_{\min}}\right\}$.
\end{lemma}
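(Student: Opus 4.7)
The plan is to reduce Lemma~\ref{lm:Pi} to the two preceding lemmas (Lemma~\ref{lm:Qappx} and Lemma~\ref{lm:Qexp}) by a triangle inequality argument. With the choice $\zeta^i=\bar{\zeta}/2$, we have $\frac{1}{4}\min\{\zeta^i,\bar{\zeta}-\zeta^i\}=\bar{\zeta}/8$ for every $i$, so the event $E_k$ simplifies to $\{|Q_{t_{k+1}}^i-Q_{\pi_k^{-i}}^i|_\infty<\bar{\zeta}/8,\,\forall i\in[N]\}$. I would split this target distance by writing
\begin{equation*}
\bigl|Q_{t_{k+1}}^i-Q_{\pi_k^{-i}}^i\bigr|_\infty
\le \bigl|Q_{t_{k+1}}^i-Q_{\bar{\pi}_k^{-i}}^i\bigr|_\infty
+ \bigl|Q_{\bar{\pi}_k^{-i}}^i-Q_{\pi_k^{-i}}^i\bigr|_\infty,
\end{equation*}
so the problem decouples into a \emph{stochastic-approximation error} (first term) and a \emph{policy-perturbation error} (second term), precisely the two quantities controlled by Lemmas~\ref{lm:Qappx} and~\ref{lm:Qexp}.

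Next, I would set $\tilde{\epsilon}:=\bar{\zeta}/8-\epsilon$, which is strictly positive by the hypothesis $\epsilon<\bar{\zeta}/8$. The given condition on $\rho^i$ is exactly the hypothesis \eqref{eq:rhoi} of Lemma~\ref{lm:Qexp} applied with this $\tilde{\epsilon}$, so that lemma delivers, \emph{deterministically},
\begin{equation*}
\bigl|Q_{\pi_k^{-i}}^i-Q_{\bar{\pi}_k^{-i}}^i\bigr|_\infty \le \bar{\zeta}/8-\epsilon,\qquad\forall i\in[N].
\end{equation*}
On the other hand, the assumed choices of $T_k$ and $\eta_t^i$ are exactly those required by Lemma~\ref{lm:Qappx}, which then yields $P\bigl[|Q_{t_{k+1}}^i-Q_{\bar{\pi}_k^{-i}}^i|_\infty\le\epsilon,\ \forall i\in[N]\bigr]\ge 1-\hat{\delta}$. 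Combining the two bounds through the triangle inequality above, on the event of probability at least $1-\hat{\delta}$ provided by Lemma~\ref{lm:Qappx}, we simultaneously obtain $|Q_{t_{k+1}}^i-Q_{\pi_k^{-i}}^i|_\infty\le \bar{\zeta}/8$ for every agent $i$.

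The only subtle point is that $E_k$ is defined with a \emph{strict} inequality ($<\bar{\zeta}/8$) whereas the chain of bounds above produces the non-strict ($\le\bar{\zeta}/8$). This is a harmless technicality, which I would handle by shrinking $\tilde{\epsilon}$ by an arbitrarily small amount (e.g., replace $\tilde{\epsilon}$ by $\bar{\zeta}/8-\epsilon-\xi$ for any small $\xi>0$, absorbed into the implicit constants), or equivalently by noting that $\epsilon$ ranges strictly below $\bar{\zeta}/8$ and can always be chosen so that the overall bound is strict. I do not anticipate any nontrivial obstacle here; the key technical content has already been extracted in Lemmas~\ref{lm:Qappx} and~\ref{lm:Qexp}, and the present lemma is essentially a bookkeeping step that combines the sample-complexity control of Q-learning with the perturbation bound and selects $\tilde{\epsilon}$ so that the two error budgets sum to the required $\bar{\zeta}/8$ gap dictated by the equilibrium-separation parameter.
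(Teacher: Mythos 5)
Your proposal is correct and follows essentially the same route as the paper: the paper's proof likewise invokes Lemma~\ref{lm:Qappx} and Lemma~\ref{lm:Qexp} with the triangle inequality and then sets $\tilde{\epsilon}=\bar{\zeta}/8-\epsilon$. Your extra remark on the strict-versus-non-strict inequality is a harmless technicality that the paper simply glosses over.
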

\begin{proof}[Proof of Lemma~\ref{lm:Pi}]\
	A direct implication of Lemma~\ref{lm:Qappx} and Lemma~\ref{lm:Qexp} is that when $T_k$ and $\eta_t^i$ satisfy~\eqref{eq:Teta}, and $\rho^i$ satisfies~\eqref{eq:rhoi}, then, by triangle inequality, we have that
	\begin{align}
		P\left[\left|Q^i_{t_{k+1}}-Q^i_{\pi_k^{-i}}\right|_\infty\le \epsilon+\tilde{\epsilon},\quad \forall i\in[N]\right]\ge 1-\hat{\delta}.
	\end{align}
	The lemma then follows by taking $\tilde{\epsilon} = \frac{1}{8}\bar{\zeta}-\epsilon$.
\end{proof}
We then have the following lemma which bounds $P\left[E_k,\ldots,E_{k+L-1}\right]$.
\begin{lemma}\label{lm:EkL}
	For any arbitrary sequence of joint policies $\pi_k,\ldots,\pi_{k+L-1}\in\Pi$, and for any $0<\tilde{\delta}<1$, we have that
	\begin{align*}
		P\left[E_k,\ldots,E_{k+L-1}\right]\ge 1-\tilde{\delta},
	\end{align*}
	provided that for all $i\in[N]$ and for all $\hat{k}\in\{k,\ldots,k+L-1\}$,
	\begin{subequations}\label{eq:Teta2}
		\begin{align}
			T_{\hat{k}}&\ge\frac{c_{0,\hat{k}}}{\mu_{\min,\hat{k}}}\left\{\frac{1}{(1-\bar{\gamma})^5\epsilon^2}+\frac{t_{{\rm mix}, \hat{k}}\left(\frac{1}{4}\right)}{1-\bar{\gamma}}\right\}\log\left(\frac{NL|\mathcal{S}|AT_{\hat{k}}}{\tilde{\delta}}\right)\log\left(\frac{1}{(1-\bar{\gamma})^2\epsilon}\right),\\
			\eta_t^i &=\frac{c_{1,\hat{k}}^i}{\log\left(\frac{NL|\mathcal{S}||\mathcal{A}^i|T_{\hat{k}}}{\tilde{\delta}}\right)} \min\left\{\frac{(1-\gamma^i)^4\epsilon^2}{\left({\gamma^i}\right)^2},\frac{1}{t_{{\rm mix},\hat{k}}^i\left(\frac{1}{4}\right)}\right\},\quad \forall t=t_{\hat{k}},\ldots,t_{\hat{k}+1}-1,\label{eq:etati2}\\
			\rho^i&\le 1 - \left(1-\frac{\left(\bar{\zeta}/8 - \epsilon\right)(1-\bar{\gamma})}{\Gamma}\right)^{\frac{1}{N-1}}\\
			\zeta^i &= \frac{\bar{\zeta}}{2}
		\end{align}
	\end{subequations} 
	where $\epsilon$ can take any value in $0<\epsilon<\min\left\{\frac{\bar{\zeta}}{8}, \frac{1}{1-{\gamma}_{\min}}\right\}$.
\end{lemma}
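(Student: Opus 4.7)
The plan is to derive Lemma~\ref{lm:EkL} from $L$ separate applications of Lemma~\ref{lm:Pi}, one per exploration phase, combined with a union bound. Since Lemma~\ref{lm:Pi} already controls $P[E_{\hat{k}}]$ for a single arbitrarily fixed $\pi_{\hat{k}}\in\Pi$, and since the joint event of interest spans only $L$ phases, I only need to pay a $\log L$ overhead in the sample-complexity parameters in order to drive the per-phase failure probability from $\hat{\delta}$ down to $\tilde{\delta}/L$.

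First I would fix an arbitrary $\hat{k}\in\{k,\ldots,k+L-1\}$ and invoke Lemma~\ref{lm:Pi} with the substitution $\hat{\delta}\leftarrow \tilde{\delta}/L$. This yields $P[E_{\hat{k}}]\ge 1 - \tilde{\delta}/L$, provided that the conditions in~\eqref{eq:Teta} hold with $\tilde{\delta}/L$ in place of $\hat{\delta}$. The only effect of this substitution is to convert $\log(N|\mathcal{S}|AT_{\hat{k}}/\hat{\delta})$ and $\log(N|\mathcal{S}||\mathcal{A}^i|T_{\hat{k}}/\hat{\delta})$ into $\log(NL|\mathcal{S}|AT_{\hat{k}}/\tilde{\delta})$ and $\log(NL|\mathcal{S}||\mathcal{A}^i|T_{\hat{k}}/\tilde{\delta})$, respectively, which are exactly the logarithmic factors appearing in~\eqref{eq:Teta2}. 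The conditions on $\rho^i$ and $\zeta^i$ are identical in Lemmas~\ref{lm:Pi} and~\ref{lm:EkL}, so those carry over verbatim. A standard union bound then completes the argument:
\begin{align*}
P\left[E_k\cap\cdots\cap E_{k+L-1}\right]
= 1 - P\left[E_k^c\cup\cdots\cup E_{k+L-1}^c\right]
\ge 1 - \sum_{\hat{k}=k}^{k+L-1} P\left[E_{\hat{k}}^c\right]
\ge 1 - L\cdot \frac{\tilde{\delta}}{L}
= 1-\tilde{\delta}.
\end{align*}

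The one point that warrants care is that the events $\{E_{\hat{k}}\}$ are \emph{not} independent: the starting Q-table at phase $\hat{k}+1$ is the projection onto $\mathbb{Q}^i$ of the ending Q-table of phase $\hat{k}$, so the initialization each phase inherits is itself random. Fortunately this is not a genuine obstacle, because Lemma~\ref{lm:Pi} (through the single-agent Q-learning bound of~\citet{li2020sample}) holds uniformly over every choice of initial Q-table in the compact set $\mathbb{Q}^i$. Hence the per-phase bound $P[E_{\hat{k}}]\ge 1-\tilde{\delta}/L$ is valid regardless of the realization of $Q^i_{t_{\hat{k}}}$, and the union bound yields the joint probability without requiring independence. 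I therefore expect the proof to reduce essentially to bookkeeping the additional $\log L$ factor inside the sample-complexity expressions, with no further technical difficulty beyond that.
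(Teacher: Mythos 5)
Your proposal is correct and follows essentially the same route as the paper: the paper likewise applies Lemma~\ref{lm:Pi} to each of the $L$ phases and uses a union bound, setting $\tilde{\delta}=L\hat{\delta}$, which converts the conditions~\eqref{eq:Teta} into~\eqref{eq:Teta2} exactly as you describe. Your additional observation that the union bound requires no independence across phases (only per-phase marginal bounds valid for any realized $\pi_{\hat{k}}$ and initialization) is a point the paper leaves implicit, but it does not change the argument.
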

\begin{proof}[Proof of Lemma~\ref{lm:EkL}]\
	When the conditions of Lemma~\ref{lm:Pi} are satisfied, we have that $P\left[E_k^c\right] < \hat{\delta}$, where $E_k^c$ is the complement of $E_k$. Then,
	\begin{align*}
		P\left[E_k,\ldots,E_{k+L-1}\right]&= 1-P\left[\left(E_k,\ldots,E_{k+L-1}\right)^c\right]=1-P\left[E_k^c\cup\cdots\cup E_{k+L-1}^c\right]\\
		&\ge 1-\left(P\left[E_k^c\right]+P\left[E_{k+1}^c\right]+\cdots +P\left[E_{k+L-1}^c\right]\right) = 1-L\hat{\delta}.
	\end{align*}
	By taking $\tilde{\delta} = L\hat{\delta}$, it follows that the conditions~\eqref{eq:Teta} now become~\eqref{eq:Teta2}, and the lemma is thus proved.
\end{proof}
As for the choice of~$\epsilon$, we would like to make $T_{\hat{k}}$ as small as possible. As~$\epsilon$ increases, the term $\frac{1}{(1-\bar{\gamma}^5\epsilon^2)}$ decreases, while $\rho^i$ also decreases, which leads to a smaller $\mu_{\min,\hat{k}}$ and a larger $t_{{\rm mix}, \hat{k}}$. Therefore, we would like to choose an optimal~$\hat{\epsilon}_{\hat{k}}$ for $T_{\hat{k}}$, such that (ignoring the logarithmic factors)
\begin{align}\label{eq:eps}
	\hat{\epsilon}_{\hat{k}}:=\argmin_{0<\epsilon<\min\left\{\frac{\bar{\zeta}}{8}, \frac{1}{1-{\gamma}_{\min}}\right\}}\frac{1}{\mu_{\min,\hat{k}}}\left(\frac{1}{(1-\bar{\gamma})^4{\epsilon}^2}+{t_{{\rm mix},\hat{k}}\left(\frac{1}{4}\right)}\right),
\end{align}
or, using the result of Proposition~\ref{prop:bounds} for the bounds of $t_{{\rm mix},\hat{k}}\left(\frac{1}{4}\right)$ and $\mu_{\min,\hat{k}}$, 
\begin{align}
	\hat{\epsilon}_{\hat{k}}=\argmin_{\epsilon\in \left(0,\min\left\{\frac{\bar{\zeta}}{8}, \frac{1}{1-{\gamma}_{\min}}\right\}\right)}\frac{A^{(N+1)H}}{\kappa \rho^{(N+1)H}}\left(\frac{1}{(1-\bar{\gamma})^4{\epsilon}^2}+{(H+1)\left(\frac{\log (1/4)}{
			\log \left[\frac{A^{NH}-|\Scal|\kappa\rho^{NH}}{ A^{NH}-\left(|\Scal|-1\right)\kappa\rho^{NH}}\right]
		}+1\right)}\right),
\end{align}
and we can define $\hat{\epsilon}:=\max_{\hat{k}\in\{k,\ldots,k+L-1\}}\hat{\epsilon}_k$, and let $\rho^i = 1 - \left(1-\frac{\left(\bar{\zeta}/8 - \hat{\epsilon}\right)(1-\bar{\gamma})}{\Gamma}\right)^{\frac{1}{N-1}}$, which leads to the optimal sample complexity for $T_{\hat{k}}$. For simplicity, we choose $\epsilon =\frac{1}{2}\min\left\{\frac{\bar{\zeta}}{8},\frac{1}{1-{\gamma}_{\min}}\right\}$  in Theorem~\ref{thm:1} and do not worry about optimizing over $\epsilon$.

Note also that the result of Lemma~\ref{lm:EkL} holds for any realization of $\pi_k\in\Pi$. Therefore, under the same conditions, we in fact have that
\begin{subequations}
	\begin{align}
		P\left[E_k,\ldots,E_{k+L-1} \ \big|\ \pi_k\in\Pi_{\rm eq}\right]\ge 1-\tilde{\delta},\label{eq:condPEin}\\
		P\left[E_k,\ldots,E_{k+L-1} \ \big|\ \pi_k\notin\Pi_{\rm eq}\right]\ge 1-\tilde{\delta}.\label{eq:condPEnot}
	\end{align}
\end{subequations}
By Lemma~\ref{lm:pik1} and~\eqref{eq:condPEin},  under conditions~\eqref{eq:Teta2}, we have that for all~$k$,
\begin{align}\label{eq:in}
	P\left[\pi_{k}=\pi_{k+1}=\cdots=\pi_{k+L}\ \big|\ \pi_k\in\Pi_{\rm eq}\right]\ge 1-\tilde{\delta}.
\end{align}
By Lemma~\ref{lm:p0} and~\eqref{eq:condPEnot}, under conditions~\eqref{eq:Teta2}, we have that for all~$k$,
\begin{align}\label{eq:out}
	P\left[\pi_{k+L}\in\Pi_{\rm eq}\ \big|\  \pi_k\notin\Pi_{\rm eq}\right]\ge \hat{p}\left(1-\tilde{\delta}\right).
\end{align}
As a notation, let $p_k := P\left[\pi_k\in\Pi_{\rm eq}\right]$. Then,~\eqref{eq:in} and~\eqref{eq:out} together imply that
\begin{align}
	p_{(n+1)L} \geq p_{nL}\left(1-\tilde{\delta}\right)  + (1-p_{nL})\hat{p}\left(1-\tilde{\delta}\right). \label{eq:geometricConv}
\end{align}
Rearranging the above, we obtain that
\begin{align}
	p_{(n+1)L} - p_{nL} &\ge    \left(1-\tilde{\delta}\right)\hat{p} - \tilde{\delta}p_{nL}-\left(1-\tilde{\delta}\right)\hat{p}p_{nL} =  \left[\tilde{\delta}+\left(1-\tilde{\delta}\right)\hat{p}\right]\left[\frac{\left(1-\tilde{\delta}\right)\hat{p}}{\tilde{\delta}+\left(1-\tilde{\delta}\right)\hat{p}} - p_{nL}\right]\label{eq:nL1}\\
	&\ge -\tilde{\delta}\label{eq:nL2}
\end{align}
Note that $p_{(n+1)L}-p_{nL}\ge 0$ as long as $p_{nL}\le \frac{\left(1-\tilde{\delta}\right)\hat{p}}{\tilde{\delta}+\left(1-\tilde{\delta}\right)\hat{p}}$. Further,  if $p_{nL}\le \frac{\left(1-\tilde{\delta}\right)\hat{p}}{\tilde{\delta}+\left(1-\tilde{\delta}\right)\hat{p}} - \tilde{\delta}$, then from~\eqref{eq:nL1}, we have that $p_{(n+1)L} - p_{nL}\ge \left[\tilde{\delta}+\left(1-\tilde{\delta}\right)\hat{p}\right]\tilde{\delta}$; if $p_{nL} > \frac{\left(1-\tilde{\delta}\right)\hat{p}}{\tilde{\delta}+\left(1-\tilde{\delta}\right)\hat{p}}$, then $p_{(n+1)L}-p_{nL}\ge -\tilde{\delta}$ from~\eqref{eq:nL2}. Therefore, we have that
\begin{align}
	p_{nL}\ge \frac{\left(1-\tilde{\delta}\right)\hat{p}}{\tilde{\delta}+\left(1-\tilde{\delta}\right)\hat{p}} - \tilde{\delta}, \quad \forall n\ge \tilde{n},
\end{align}
where
\begin{align}
	\tilde{n}:= \frac{\frac{\left(1-\tilde{\delta}\right)\hat{p}}{\tilde{\delta}+\left(1-\tilde{\delta}\right)\hat{p}} - \tilde{\delta}}{\left[\tilde{\delta}+\left(1-\tilde{\delta}\right)\hat{p}\right]\tilde{\delta}} = \frac{\left(1-\tilde{\delta}\right)^2\hat{p}-\tilde{\delta}^2}{\left[\tilde{\delta}+\left(1-\tilde{\delta}\right)\hat{p}\right]^2\tilde{\delta}}.
\end{align}
 This, together with~\eqref{eq:in}, implies that for all $n\ge \tilde{n}$,
 \begin{align}\label{eq:deltatilde}
 	P\left[\pi_{nL}=\pi_{nL+1}=\cdots=\pi_{nL+L}\in\Pi_{\rm eq}\right] \ge \left(\frac{\left(1-\tilde{\delta}\right)\hat{p}}{\tilde{\delta}+\left(1-\tilde{\delta}\right)\hat{p}} - \tilde{\delta}\right)\left(1-\tilde{\delta}\right) := f(\tilde{\delta}).
 \end{align}
Therefore, if the number of exploration phases $k\ge K:=\tilde{n}L$, then $P\left[\pi_k\in\Pi_{\rm eq}\right]\ge f(\tilde{\delta})$. Note that $f(\tilde{\delta})$ is continuous, decreasing in $\tilde{\delta}$, and $f(0) = 1$, $f(\delta) < 1-\delta$ for any $0<\delta<1$. Thus, we can take $\tilde{\delta}\in (0,\delta)$ such that 
\begin{align}
	\left(\frac{\left(1-\tilde{\delta}\right)\hat{p}}{\tilde{\delta}+\left(1-\tilde{\delta}\right)\hat{p}} - \tilde{\delta}\right)\left(1-\tilde{\delta}\right) =1-\delta,
\end{align}
which leads to $P\left[\pi_k\in\Pi_{\rm eq}\right]\ge 1-\delta$ for all $k\ge K$, and completes the proof of Theorem~\ref{thm:1}.

\subsection{Proof of Proposition~\ref{prop:bounds}.}
We first show the bounds for $\mu_{\min,k}^i$ and $\mu_{\min,k}$. By Assumption~\ref{as:alpha}, for any $s_1$ and $s_{H+1}$, there exists a sequence of joint actions $\tilde{a}_1,\ldots,\tilde{a}_{H}$ such that $P[ s_{H+1}=s \ | \ (s_1,a_1,\dots,a_{H})=(s,\tilde{a}_1,\dots,\tilde{a}_{H})]\ge \kappa$. Thus, we have that 
\begin{align}
	P\left(s_{H+1}={s}\mid s_1\right) &= \sum_{a_1,\ldots,a_H} P\left(s_{H+1} = s\mid (s_1,a_1,\ldots,a_H)\right)P\left((s_1,a_1,\ldots,a_H)\mid s_1\right)\nonumber\\
	&\ge P\left(s_{H+1} = s\mid (s_1,\tilde{a}_1,\ldots,\tilde{a}_H)\right)P\left((s_1,\tilde{a}_1,\ldots,\tilde{a}_H)\mid s_1\right)\nonumber\\
	&\ge \kappa \left(\prod_{i\in[N]}\frac{\rho^i}{|\Acal^i|}\right)^H,\quad\forall s_1,s\in\Scal,\label{eq:Plow}
\end{align}
where the last inequality follows from Assumption~\ref{as:alpha} and the action selection (Line~5) of Algorithm~\ref{al:1}. Then, we can write the lower bound for the stationary distribution  $\mu_{\bar{\pi}_k}$ over all states:
\begin{align}
	\mu_{\bar{\pi}_k}(s) &= \sum_{s_1\in\Scal}  \mu_{\bar{\pi}_k}(s_1)P\left(s_{H+1}={s}\mid s_1\right) \ge \kappa \left(\prod_{i\in[N]}\frac{\rho^i}{|\Acal^i|}\right)^H,\quad\forall s\in\Scal,\label{eq:muslow}
\end{align}
where the inequality follows since~\eqref{eq:Plow} is a uniform lower bound for all $s_1,s$ and $\sum_{s_1\in\Scal}\mu_{\pi_k}(s_1)=1$. Note that~\eqref{eq:muslow} also implies the following upper bound:
	\begin{align}
	\mu_{\bar{\pi}_k}(s) &=  1- \sum_{\bar{s}\in\Scal\setminus \{s\}}\mu_{\bar{\pi}_k}(\bar{s}) \le 1-\left(|\Scal|-1\right)\kappa \left(\prod_{i\in[N]}\frac{\rho^i}{|\Acal^i|}\right)^H, \quad\forall s\in\Scal. \label{eq:musup}
\end{align}
From~\eqref{eq:muslow} and~\eqref{eq:musup}, we can also write the bounds for the minimum probability of the stationary distribution over state-action pairs (from the perspective of agent~$i$):
\begin{subequations}\label{eq:muminik}
	\begin{align}
		\mu^i_{\min,k} &= \min_{(s,a^i)\in\Scal\times \Acal^i}\mu_{\bar{\pi}_k}^i\left(s,a^i\right) = \min_{(s,a^i)\in\Scal\times \Acal^i}P(a^i\mid s)\mu_{\bar{\pi}_k}(s) \ge \kappa \frac{\rho^i}{|\Acal^i|} \left(\prod_{i\in[N]}\frac{\rho^i}{|\Acal^i|}\right)^H,\\
		\mu_{\min,k}^i&=\min_{(s,a^i)\in\mathcal{S}\times \mathcal{A}^i}\mu_{\bar{\pi}_k}^i\left(s,a^i\right)= \min_{(s,a^i)\in\mathcal{S}\times \mathcal{A}^i}\mu_{\bar{\pi}_k}(s)\cdot  P(a^i\mid s)\le \left[1-\left(|\Scal|-1\right)\kappa \left(\prod_{i\in[N]}\frac{\rho^i}{|\Acal^i|}\right)^H\right]\cdot \frac{\rho^i}{|\Acal^i|}.
	\end{align}
\end{subequations}
With $\rho^i = \rho $ for all $i\in[N]$ and recalling that $A:=\max_{i\in[N]}\Acal^i$,~\eqref{eq:muminik} leads to
\begin{subequations}
	\begin{align}
		\mu_{\min,k} &= \min_{i\in[N]}\mu^i_{\min,k} \ge \kappa\frac{\rho^{NH+1}}{A^{NH+1}},\\
		\mu_{\min,k} &= \min_{i\in[N]}\mu^i_{\min,k} \le  \left[1-\left(|\Scal|-1\right)\kappa \frac{\rho^{NH}}{A^{NH}}\right]\cdot \frac{\rho}{A}.
	\end{align}
\end{subequations}

We next show the upper bound for $t_{{\rm mix},k}$. To proceed, we first introduce the following lemma, which follows directly from the coupling inequality under the \emph{Doeblin condition} (see~\citet{diaconis2011mathematics} for a detailed explanation).

\begin{lemma}\label{lem:temp}
	If $P\left(s_{H+1} = \bar{s}, a_{H+1}^i = \bar{a}^i \mid s_0, a_0^i\right) \ge c \mu_{\bar{\pi}_k}^i\left(\bar{s},\bar{a}^i\right)$ for all $s_0, a_0^i$ and $\bar{s},\bar{a}^i$, then, 
	$$
	d_{\rm TV}\left(P^t(\cdot\mid s_0,a_0^i),\mu_{\bar{\pi}_k}^i\right) = \max_{\bar{s},\bar{a}^i}\left|P\left(s_{t} = \bar{s}, a_t^i = \bar{a}^i \mid s_0, a_0^i\right) - \mu_{\bar{\pi}_k}^i\left(\bar{s},\bar{a}^i\right)\right|\le (1-c)^{\left\lfloor t/(H+1)\right\rfloor}.
	$$
\end{lemma}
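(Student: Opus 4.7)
The plan is to execute the standard Doeblin coupling argument, as hinted by the reference to~\citet{diaconis2011mathematics}. The hypothesis gives a \emph{minorization condition}: after $H+1$ steps, the conditional distribution of $(s_{H+1},a_{H+1}^i)$ dominates the measure $c\mu_{\bar\pi_k}^i$ pointwise. This allows the decomposition
\begin{equation*}
P\bigl(s_{H+1}=\cdot,\,a_{H+1}^i=\cdot \,\big|\, s_0,a_0^i\bigr)
\;=\; c\,\mu_{\bar\pi_k}^i(\cdot,\cdot) \;+\; (1-c)\,R\bigl(\cdot,\cdot\,\big|\, s_0,a_0^i\bigr),
\end{equation*}
where $R$ is a (properly normalized) probability kernel. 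This is the first step I would write down explicitly.

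Next I would construct an explicit coupling between two copies of the Markov chain induced by $\bar\pi_k$, viewed on the state-action pairs $(s,a^i)$: chain $X$ starting from the fixed $(s_0,a_0^i)$, and chain $Y$ starting from a draw of $\mu_{\bar\pi_k}^i$ (which is then stationary by definition). Partition time into blocks of length $H+1$. At the end of each block, independently flip a Bernoulli$(c)$ coin: on success, resample both chains from the common component $\mu_{\bar\pi_k}^i$, identifying the two chains from that point on and letting them evolve jointly according to the dynamics (so stationarity of $Y$ is preserved); on failure, draw each chain from its own residual kernel $R(\cdot\mid \text{current state-action})$ and continue independently until the next block boundary. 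Because $Y$ is stationary and the coupling's marginals agree with the true one-step dynamics of each chain, this is a valid coupling.

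Let $\tau$ be the first block boundary at which a success occurs; then $X_s=Y_s$ for all $s\ge\tau$. Since the success events across blocks are independent Bernoulli$(c)$'s, $P(\tau > n(H+1)) \le (1-c)^n$. With $n := \lfloor t/(H+1)\rfloor$ complete blocks available within $[0,t]$, we obtain $P(X_t \ne Y_t) \le P(\tau > n(H+1)) \le (1-c)^n$. The coupling inequality (see \citet{diaconis2011mathematics}) then gives
\begin{equation*}
d_{\rm TV}\bigl(P^t(\cdot\mid s_0,a_0^i),\,\mu_{\bar\pi_k}^i\bigr)
\;\le\; P(X_t \ne Y_t)
\;\le\; (1-c)^{\lfloor t/(H+1)\rfloor},
\end{equation*}
which is the claimed bound.

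The only subtlety worth double-checking is that the coupling is \emph{Markovian} in the correct joint state: after a successful coin flip the two chains must be driven by the same randomness so they stay coupled, while the marginal law of $Y$ must remain $\mu_{\bar\pi_k}^i$. Both properties follow because $\mu_{\bar\pi_k}^i$ is invariant under the dynamics and because $\mu_{\bar\pi_k}^i$ is literally the common component extracted by the minorization. A secondary point is handling $t$ not divisible by $H+1$, which is why the bound uses $\lfloor t/(H+1)\rfloor$: any fractional block at the end simply gives no extra chance of coupling, but also cannot increase the TV distance since the joint chain, once coupled, remains coupled. No other technical difficulty is anticipated.
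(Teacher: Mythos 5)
Your argument is correct and is precisely the one the paper invokes: the paper does not write out a proof of this lemma but simply cites the coupling inequality under the Doeblin condition, and your minorization--residual decomposition, block coupling at multiples of $H+1$, geometric bound on the coupling time, and final appeal to the coupling inequality are the standard instantiation of that reference. Your handling of the non-divisible $t$ via monotonicity of the coupled event is also the right way to get the floor in the exponent.
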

We will use Lemma~\ref{lem:temp} to prove the upper bound of $t_{{\rm mix},k}$. To apply Lemma~\ref{lem:temp}, we need to find the parameter~$c$ such that $P\left(s_{H+1} = \bar{s}, a_{H+1}^i = \bar{a}^i \mid s_0, a_0^i\right) \ge c \mu_{\bar{\pi}_k}^i\left(\bar{s},\bar{a}^i\right)$ for all $s_0, a_0^i$ and $\bar{s},\bar{a}^i$. This leads to the following lemma.
\begin{lemma}\label{lem:c}
	For all $i\in[N]$ and for any $(s_0,a_0^i), (\bar{s},\bar{a}^i)\in \Scal\times \Acal^i$, we have that 
	\begin{align}\label{eq:lemc}
		P\left(s_{H+1} = \bar{s}, a_{H+1}^i = \bar{a}^i \mid s_0, a_0^i\right) \ge c \mu_{\bar{\pi}_k}^i\left(\bar{s},\bar{a}^i\right),
	\end{align}
	where 
	\begin{align}
		c = \frac{\kappa \left(\prod_{i\in[N]}\frac{\rho^i}{|\Acal^i|}\right)^H}{1-\left(|\Scal|-1\right)\kappa \left(\prod_{i\in[N]}\frac{\rho^i}{|\Acal^i|}\right)^H}.
	\end{align}
\end{lemma}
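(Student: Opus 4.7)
The plan is to factor the joint transition probability into a state part and a conditional action part, lower bound each factor against the corresponding decomposition of $\mu_{\bar{\pi}_k}^i$, and let the action factors cancel. Concretely, since agent $i$'s action at time $H+1$ depends on history only through the current state $s_{H+1}$ (by the form of $\bar{\pi}_k^i$), I would write
\[
P\bigl(s_{H+1}=\bar s,\,a_{H+1}^i=\bar a^i\mid s_0,a_0^i\bigr)
\;=\;P\bigl(s_{H+1}=\bar s\mid s_0,a_0^i\bigr)\cdot\bar\pi_k^i(\bar a^i\mid\bar s),
\]
and similarly decompose $\mu_{\bar\pi_k}^i(\bar s,\bar a^i)=\mu_{\bar\pi_k}(\bar s)\cdot\bar\pi_k^i(\bar a^i\mid\bar s)$, so that proving \eqref{eq:lemc} reduces to showing
$P(s_{H+1}=\bar s\mid s_0,a_0^i)\ge c\cdot\mu_{\bar\pi_k}(\bar s)$.

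Next I would lower bound $P(s_{H+1}=\bar s\mid s_0,a_0^i)$. Because $a_0^i$ is fixed but $a_0^{-i}$ is random, I first take an arbitrary transition $s_0\to s_1$ to shed the conditioning on $a_0^i$, then apply Assumption~\ref{as:alpha} to the $H$-step segment from $s_1$ to $s_{H+1}$. For any $s_1$, the assumption provides a joint action sequence $\tilde a_1,\dots,\tilde a_H$ achieving transition probability at least $\kappa$ to $\bar s$; under the exploration policy $\bar\pi_k$, each particular joint action has probability at least $\prod_{i\in[N]}\rho^i/|\Acal^i|$, so the whole $H$-step sequence occurs with probability at least $(\prod_{i\in[N]}\rho^i/|\Acal^i|)^H$. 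This yields the uniform bound
\[
P\bigl(s_{H+1}=\bar s\mid s_1\bigr)\;\ge\;\kappa\Bigl(\prod_{i\in[N]}\tfrac{\rho^i}{|\Acal^i|}\Bigr)^{\!H}
\quad\text{for every }s_1,
\]
and averaging over $s_1$ according to $P(\cdot\mid s_0,a_0^i)$ preserves this bound, giving the same inequality for $P(s_{H+1}=\bar s\mid s_0,a_0^i)$.

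Finally, I would invoke the already-established upper bound on the state stationary distribution~\eqref{eq:musup}, namely
$\mu_{\bar\pi_k}(\bar s)\le 1-(|\Scal|-1)\kappa(\prod_{i\in[N]}\rho^i/|\Acal^i|)^H$, which is again uniform in $\bar s$. Taking the ratio of the lower bound on $P(s_{H+1}=\bar s\mid s_0,a_0^i)$ to the upper bound on $\mu_{\bar\pi_k}(\bar s)$ produces exactly the constant $c$ stated in the lemma, and multiplying both sides by $\bar\pi_k^i(\bar a^i\mid\bar s)$ restores \eqref{eq:lemc}. There is no real technical obstacle; the only subtlety is being careful about the initial conditioning on $a_0^i$, which is cleanly handled by the single burn-in step $s_0\to s_1$ before invoking Assumption~\ref{as:alpha}, and by the fact that the reachability lower bound it supplies is uniform over the starting state.
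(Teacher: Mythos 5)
Your proof is correct, and its core reachability argument is exactly the paper's: a one-step burn-in $s_0\to s_1$ to discharge the conditioning on $a_0^i$, then Assumption~\ref{as:alpha} applied to the $H$-step segment, with each prescribed joint action realized with probability at least $\prod_{i\in[N]}\rho^i/|\Acal^i|$, giving the uniform bound $P(s_{H+1}=\bar s\mid s_0,a_0^i)\ge\kappa\bigl(\prod_{i\in[N]}\rho^i/|\Acal^i|\bigr)^H$. Where you genuinely diverge is the last step. The paper lower-bounds the action factor by $P(a_{H+1}^i=\bar a^i\mid s_{H+1}=\bar s)\ge\rho^i/|\Acal^i|$ and then compares the resulting product against the upper bound on $\mu_{\min,k}^i$ from~\eqref{eq:muminik}; since that quantity bounds the \emph{minimum} of $\mu_{\bar{\pi}_k}^i$ over state--action pairs rather than its maximum, the justification of the final division is delicate (for the baseline action $\bar a^i=\pi_k^i(\bar s)$ one has $\bar\pi_k^i(\bar a^i\mid\bar s)=1-\rho^i+\rho^i/|\Acal^i|\gg\rho^i/|\Acal^i|$, so $\mu_{\bar{\pi}_k}^i(\bar s,\bar a^i)$ can exceed that bound). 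Your route instead writes both $P(s_{H+1}=\bar s,a_{H+1}^i=\bar a^i\mid s_0,a_0^i)$ and $\mu_{\bar{\pi}_k}^i(\bar s,\bar a^i)$ with the common factor $\bar\pi_k^i(\bar a^i\mid\bar s)$ and cancels it, reducing~\eqref{eq:lemc} to the purely state-level comparison $P(s_{H+1}=\bar s\mid s_0,a_0^i)\ge c\,\mu_{\bar{\pi}_k}(\bar s)$, which follows directly from your lower bound and the uniform upper bound~\eqref{eq:musup}. This cancellation is the cleaner and more robust argument: it yields the stated constant $c$ with no loss and sidesteps the min-versus-max issue entirely, at the small cost of having to note the conditional independence of $a_{H+1}^i$ from the past given $s_{H+1}$, which indeed holds under $\bar{\pi}_k^i$.
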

\begin{proof}[Proof of Lemma~\ref{lem:c}]\
	By Assumption~\ref{as:alpha}, we know that for all $s,s'$, there exist $\tilde{a}_0,\ldots,\tilde{a}_{H-1}$ such that
	\begin{align*}
		P[ s_{H}=s^{\prime} \ | \ (s_0,a_0,\dots,a_{H-1})=(s,\tilde{a}_0,\dots,\tilde{a}_{H-1})]\ge \kappa.
	\end{align*}
Then, we have that
\begin{align}
	P\left(s_{H+1} = \bar{s}, a_{H+1}^i = \bar{a}^i \mid s_0, a_0^i\right) &= P\left(s_{H+1}=\bar{s}\mid s_0, a_0^i\right) \cdot P\left(a_{H+1}^i=\bar{a}\mid s_{H+1}=\bar{s}, s_0, a_0^i\right)\nonumber\\
	&= P\left(s_{H+1}=\bar{s}\mid s_0, a_0^i\right) \cdot P\left(a_{H+1}^i=\bar{a}\mid s_{H+1}=\bar{s}\right)\nonumber\\
	&\ge P\left(s_{H+1}=\bar{s}\mid s_0, a_0^i\right) \cdot \frac{\rho^i}{|\Acal^i|}.\label{eq:Pl}
\end{align}
To find a~$c$ such that~\eqref{eq:lemc} holds, it suffices to find a lower bound for $ P\left(s_{H+1}=\bar{s}\mid s_0, a_0^i\right)$. Note that
\begin{align*}
	P\left(s_{H+1}=\bar{s}\mid s_0, a_0^i\right) &= \sum_{s_1}P\left(s_{H+1}=\bar{s}\mid s_1, s_0, a_0^i\right) P\left(s_1\mid s_0,a_0^i\right)\\
	&= \sum_{s_1}P\left(s_{H+1}=\bar{s}\mid s_1\right) P\left(s_1\mid s_0,a_0^i\right).
\end{align*}
We make the following observations.
\begin{itemize}
	\item For any $s_1$, by the law of total probability, we have that
	\begin{align*}
		P\left(s_{H+1}=\bar{s}\mid s_1\right) &= \sum_{\{a_1,\ldots,a_H\}}P\left(s_{H+1}=\bar{s}\mid s_1, a_1,\ldots,a_H\right)P\left(a_1,\ldots,a_H\mid s_1\right)\\
		&= P\left(s_{H+1}=\bar{s}\mid s_1, \tilde{a}_1,\ldots,\tilde{a}_H\right)P\left(\tilde{a}_1,\ldots,\tilde{a}_H\mid s_1\right)\\
		&\qquad + \sum_{\{a_1,\ldots,a_H\}\ne \{\tilde{a}_1,\ldots,\tilde{a}_H\}}P\left(s_{H+1}=\bar{s}\mid s_1, a_1,\ldots,a_H\right)P\left(a_1,\ldots,a_H\mid s_1\right)\\
		&\ge \kappa \left(\prod_{i\in[N]}\frac{\rho^i}{|\Acal^i|}\right)^H.
	\end{align*}
	\item Since the above is true for any $s_1\in\Scal$, and $\sum_{s_1}P\left(s_1\mid s_0,a_0^i\right)=1$, we have that the convex combination $$P\left(s_{H+1}=\bar{s}\mid s_0, a_0^i\right) = \sum_{s_1}P\left(s_{H+1}=\bar{s}\mid s_1\right) P\left(s_1\mid s_0,a_0^i\right)\ge \kappa \left(\prod_{i\in[N]}\frac{\rho^i}{|\Acal^i|}\right)^H.$$
\end{itemize}
Thus,~\eqref{eq:Pl} now becomes
\begin{align}
	P\left(s_{H+1} = \bar{s}, a_{H+1}^i = \bar{a}^i \mid s_0, a_0^i\right) &\ge  P\left(s_{H+1}=\bar{s}\mid s_0, a_0^i\right) \cdot \frac{\rho^i}{|\Acal^i|} \ge \kappa \left(\prod_{i\in[N]}\frac{\rho^i}{|\Acal^i|}\right)^H\cdot \frac{\rho^i}{|\Acal^i|}. \label{eq:Pl1}
\end{align}
With~\eqref{eq:Pl1}, we can see that any~$c$ satisfying $\kappa \left(\prod_{i\in[N]}\frac{\rho^i}{|\Acal^i|}\right)^H\cdot \frac{\rho^i}{|\Acal^i|} \ge c \mu_{\bar{\pi}_k}^i\left(\bar{s},\bar{a}^i\right)$ for all $\bar{s},\bar{a}^i$ and for all $i$ will lead to~\eqref{eq:lemc}. Equivalently, we need
\begin{align*}
	c \le \kappa \left(\prod_{i\in[N]}\frac{\rho^i}{|\Acal^i|}\right)^H\cdot \min_{i\in[N]}\frac{ \frac{\rho^i}{|\Acal^i|}}{\mu_{\min,k}^i}.
\end{align*}
One option is to choose $c = \kappa \left(\prod_{i\in[N]}\frac{\rho^i}{|\Acal^i|}\right)^H\cdot \min_{i\in[N]}{ \frac{\rho^i}{|\Acal^i|}}$, which essentially uses $\mu^i_{\min,k}\le 1$. However, note that we can use the upper bound of $\mu^i_{\min,k}$ from~\eqref{eq:muminik}, and choose that
\begin{align*}
	c = \kappa \left(\prod_{i\in[N]}\frac{\rho^i}{|\Acal^i|}\right)^H\cdot \min_{i\in[N]}\frac{ \frac{\rho^i}{|\Acal^i|}}{\left[1-\left(|\Scal|-1\right)\kappa \left(\prod_{i\in[N]}\frac{\rho^i}{|\Acal^i|}\right)^H\right]\cdot \frac{\rho^i}{|\Acal^i|}} = \frac{\kappa \left(\prod_{i\in[N]}\frac{\rho^i}{|\Acal^i|}\right)^H}{1-\left(|\Scal|-1\right)\kappa \left(\prod_{i\in[N]}\frac{\rho^i}{|\Acal^i|}\right)^H}.
\end{align*}
This completes the proof of the lemma.
\end{proof}

Combining Lemma~\ref{lem:temp} and Lemma~\ref{lem:c}, we have that for all $i\in[N]$ and $(s_0,a_0^i)\in\Scal\times \Acal^i$,
\begin{align*}
	d_{\rm TV}\left(P^t(\cdot\mid s_0,a_0^i),\mu_{\bar{\pi}_k}^i\right) \le (1-c)^{\left\lfloor t/(H+1)\right\rfloor}&= \left[\frac{1-\left(|\Scal|-1\right)\kappa \left(\prod_{i\in[N]}\frac{\rho^i}{|\Acal^i|}\right)^H-\kappa \left(\prod_{i\in[N]}\frac{\rho^i}{|\Acal^i|}\right)^H}{1-\left(|\Scal|-1\right)\kappa \left(\prod_{i\in[N]}\frac{\rho^i}{|\Acal^i|}\right)^H}\right]^{\lfloor t/(H+1)\rfloor}\\
	&=\left[\frac{1-|\Scal|\kappa\left(\prod_{i\in[N]}\frac{\rho^i}{|\Acal^i|}\right)^H}{1-\left(|\Scal|-1\right)\kappa \left(\prod_{i\in[N]}\frac{\rho^i}{|\Acal^i|}\right)^H}\right]^{\lfloor t/(H+1)\rfloor}.
\end{align*}

Let $t$ be such that
\begin{align}
	\left[\frac{1-|\Scal|\kappa\left(\prod_{i\in[N]}\frac{\rho^i}{|\Acal^i|}\right)^H}{1-\left(|\Scal|-1\right)\kappa \left(\prod_{i\in[N]}\frac{\rho^i}{|\Acal^i|}\right)^H}\right]^{\lfloor t/(H+1)\rfloor} \le \left[\frac{1-|\Scal|\kappa\left(\prod_{i\in[N]}\frac{\rho^i}{|\Acal^i|}\right)^H}{1-\left(|\Scal|-1\right)\kappa \left(\prod_{i\in[N]}\frac{\rho^i}{|\Acal^i|}\right)^H}\right]^{ t/(H+1)-1}= \alpha. \label{eq:t}
\end{align} 
Then, we have that
$
	\max_{(s_0,a_0^i)\in\Scal\times \Acal^i}d_{\rm TV}\left(P^t(\cdot\mid s_0,a_0^i),\mu_{\bar{\pi}_k}^i\right)\le \alpha, \forall i\in[N],
$
which implies that $t_{{\rm mix},k}(\alpha) = \max_{i\in[N]}t^i_{{\rm mix},k} \le t$. Therefore, by solving~\eqref{eq:t} for $t$, we obtain an upper bound for $t_{{\rm mix},k}$:

%

\begin{align}
	t_{{\rm mix},k}(\alpha)\le t = (H+1)\left(\frac{-\log \alpha}{\log \left[\frac{1-(|\Scal|-1)\kappa\left(\prod_{i\in[N]}\frac{\rho^i}{|\Acal^i|}\right)^H}{1-|\Scal|\kappa \left(\prod_{i\in[N]}\frac{\rho^i}{|\Acal^i|}\right)^H}\right]}+1\right).\label{eq:tmix}
\end{align}

With $\rho^i = \rho$ for all $i\in[N]$,~\eqref{eq:tmix} becomes
\begin{align*}
	t_{{\rm mix},k}(\alpha)&\le (H+1)\left(\frac{-\log \alpha}{\log \left[\frac{1-(|\Scal|-1)\kappa\left(\frac{\rho}{\prod_{i\in[N]}|\Acal^i|}\right)^H}{1-|\Scal|\kappa \left(\prod_{i\in[N]}\frac{\rho}{\prod_{i\in[N]}|\Acal^i|}\right)^H}\right]}+1\right)\\
	&\le (H+1)\left(\frac{(-\log \alpha)\left[1-(|\Scal|-1)\kappa\left(\frac{\rho}{\prod_{i\in[N]}|\Acal^i|}\right)^H\right]}{\kappa\left(\frac{\rho}{\prod_{i\in[N]}|\Acal^i|}\right)^H}+1\right)\\
	&\le (H+1)\left((-\log \alpha)\frac{\prod_{i\in[N]}|\Acal^i|^H}{\kappa \rho^H}+1\right)\le (H+1)\left((-\log \alpha)\frac{A^{NH}}{\kappa\rho^{NH}}+1\right).
\end{align*}
This completes the proof of Proposition~\ref{prop:bounds}.

\subsection{Numerical Experiments: A Grid-World Game.} \label{sec: experiments tabular}
To demonstrate the effectiveness of Algorithm~\ref{al:1}, we test it on the classical grid-world experiment~(\citet{sutton2018reinforcement}). We consider a $3\times3$ grid. Each state is represented by a pair~$(s(1),s(2))$, where $s(1)$ is the ``x"-coordinate and $s(2)$ is the ``y"-coordinate on the grid. The set of states are $\mathcal{S} = \{(1,1), (1,2), (1,3), (2,1), (2,2), (2,3), (3,1), (3,2), (3,3)\}$ (as shown in Figure~\ref{fig: grid_illustration}). The upper-left state~$(1,1)$ is called the \emph{terminal state}: once the system reaches state~$(1,1)$, it will stay there and will not transit to other states. There are two agents, where agent~$1$ decides her actions in the vertical direction, and agent~$2$ decides her actions in the horizontal direction, i.e.,  $\mathcal{A}^1 = \{{\rm up, stay, down}\}$, $\mathcal{A}^2 = \{{\rm left, stay, right}\}$. The joint actions of agent~$1$ and agent~$2$ will determine the next state, which could be the same as the current state (if both agents stay) or move one cell in the respective direction on the grid (except when the system is at the terminal state, it will not transit to other states for any joint actions taken by the agents). To ensure the next state stays within the grid, the agents can only choose actions which do not point to a direction outside of the grid. For instance, at state~$(1,1)$, $(1,2)$ and~$(1,3)$, agent~$1$ can only choose ``down" or ``stay", but not ``up".  Thus, for each agent, there are~$6$ states where the agent can take two actions, and~$3$ states where the agent can take three actions, which implies that we have $1728\times 1728$ possible joint policies in total (for each of agent, $|\Pi^i| = 2^6 \times 3^3 = 1728$). When the system is at the terminal state, all actions yield a reward of~$0$. When the system is at any non-terminal state, all the actions yield a reward of~$-1$.

\begin{figure}[ht]
	\includegraphics[width=0.95\linewidth]{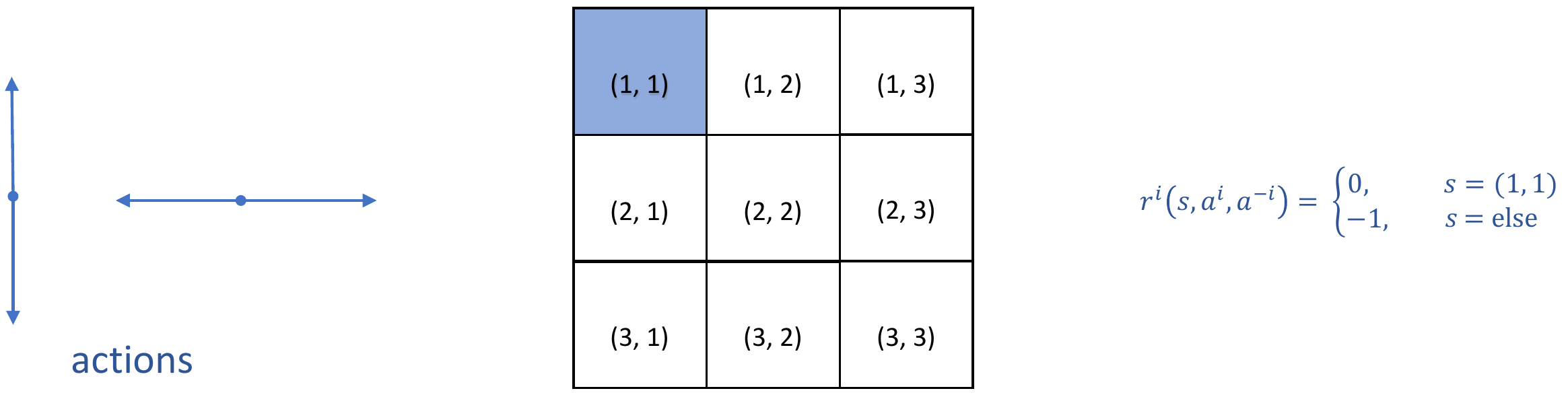}
	\centering
	\caption{Illustration for the grid-world experiment.
	}\label{fig: grid_illustration}
\end{figure}

For this grid-world stochastic game, if not considering actions taken on the terminal state, there are~$16$ \emph{optimal} equilibrium joint policies (as shown in Figure~\ref{fig: optimal}). These optimal equilibria will lead the agents to reach the terminal state as soon as possible and produce the maximum value of reward. Besides these optimal equilibria, there are also some \emph{suboptimal} equilibria. For example, one suboptimal equilibrium is when agent~$1$ chooses ``down" for all of the non-terminal states while agent~$2$ chooses ``right" for all of these states. This joint policy is still a Markov perfect equilibrium, but results in both agents earning less rewards (relative to those of the optimal equilibria). To obtain the set of all equilibrium joint policies, for each $\pi^{-i}\in\Pi^{-i}$, we use the standard Q-learning algorithm~\eqref{eq:agentQ} to get the set of its best reply policies $\Pi^{i}_{\pi^{-i}}$. Then, for each best reply policy $\pi^{*i}\in\Pi^i_{\pi^{-i}}$, we check if $\pi^{-i}$ also turns out to be a best reply policy of $\pi^{*i}$. If so, then the joint policy $(\pi^{-i}, ~\pi^{*i})$ consitutes a Markov perfect equilibrium. Moreover, if we start from any joint policy in $\Pi^1\times \Pi^2$, there is a strict best reply path to one of these equilibria, which means the grid-world stochastic game is weakly acyclic under strict best replies.

\begin{figure}[ht]
	\includegraphics[width=0.3\linewidth]{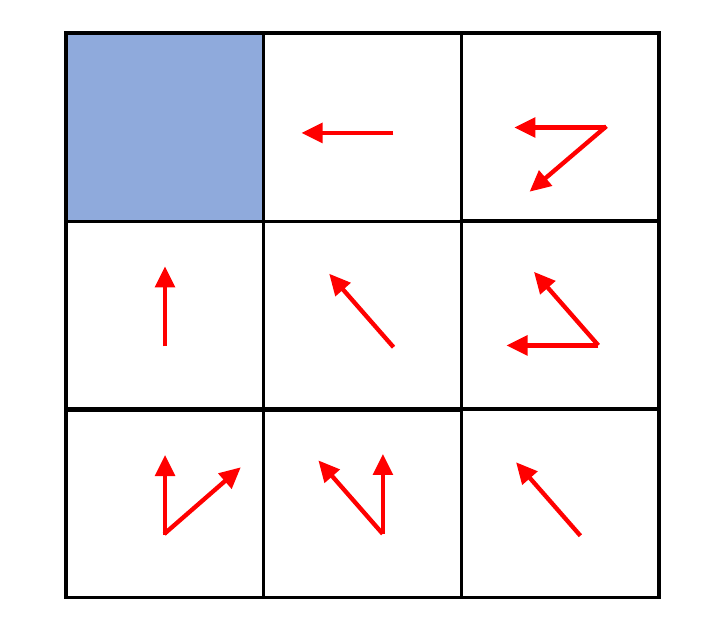}
	\centering
	\caption{The optimal equilibria for the grid-world experiment.
	}\label{fig: optimal}
\end{figure}

In our numerical experiments, we let $\rho^i = 0.4$, $\lambda^i = 0.3$, $\gamma^i = 0.75$, $\eta_k^i = 1/k^{0.5}$ for all $i$ and $k$. We fix the length of the exploration phases (the inner ``for" loop of Algorithm~\ref{al:1}), i.e., $T_k = T$, $\forall k = 1, \ldots, K$. Since both the length of the exploration phases~$T$ and the number of policy updates~$K$ are critical for the learning process, we run our experiments with different parameters of~$T$ and~$K$. For each set of~$K$ and~$T$, we start from a random joint policy in $\Pi^1 \times \Pi^2$ and run Algorithm~\ref{al:1}, with the initial values of the $Q$-table set to all~$0$'s. Then, we have a set of joint policies $\{\pi_k\mid k=1,\ldots, K\}$, and record the fraction of these policies that belong to the set of equilibrium joint policies obtained before. This process is repeated ~$50$ times, and the experimental results are shown in Figure~\ref{fig: tabular}. The solid lines represent the fraction of $\pi_k$'s which are equilibrium joint policies (the number of $\pi_k$'s which are equilibrium joint policies divided by~$K$), averaged over~$50$ repeated runs, and the shaded region represents the min-max interval. 

\begin{figure}[ht]
	\includegraphics[width=0.85\linewidth]{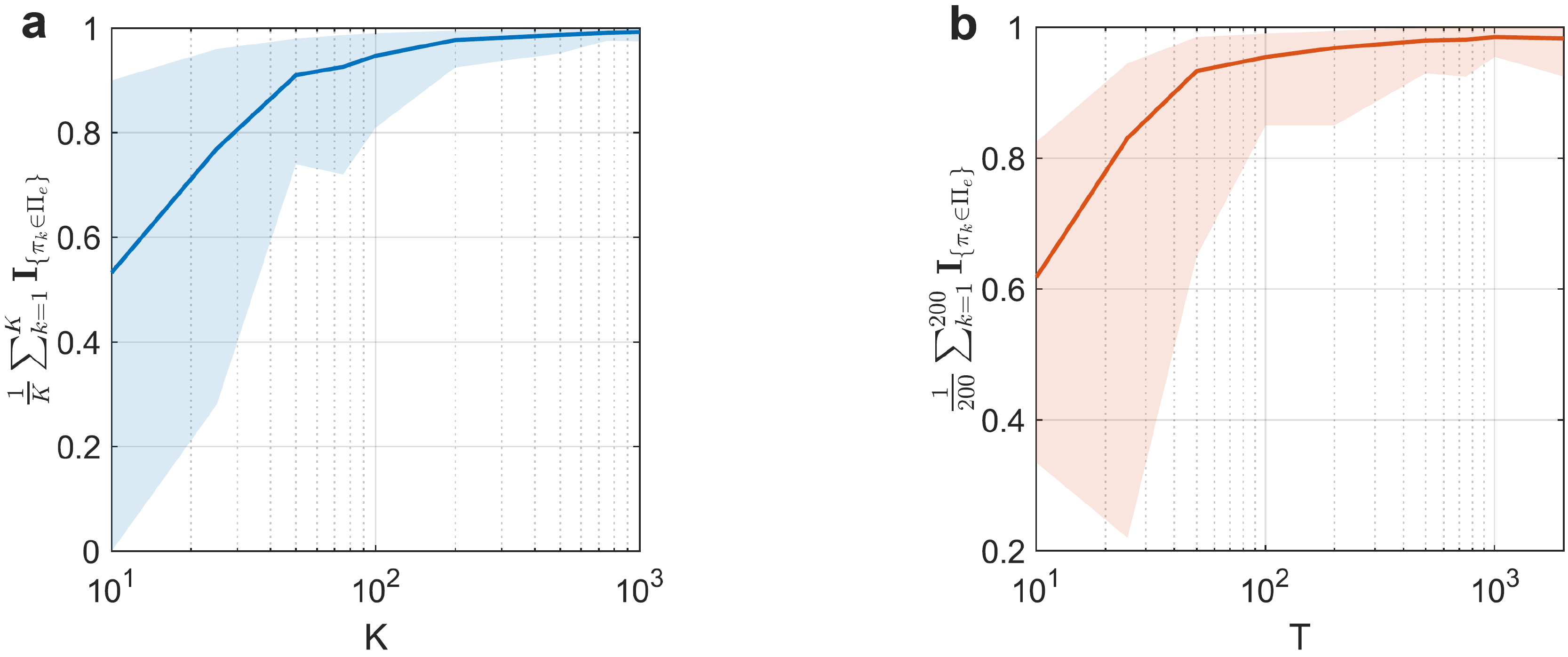}
	\centering
	\caption{Experimental results of grid-world when {\bf Algorithm~\ref{al:1}} is applied. {\bf a}, The fraction of times at which $\pi_k$ visits an equilibrium when $K$ ranges in the interval $[10,~ 1000]$. {\bf b}, The fraction of times at which $\pi_k$ visits an equilibrium when~$T$ ranges in the interval $[10,~ 2000]$. In~{\bf a}, we fix the value of~$T$ as~$200$. In~{\bf b}, we fix the value of $K$ as~$200$. The solid lines are the average of~$50$ repeated runs, and the shaded regions represent the min-max intervals.
	}\label{fig: tabular}
\end{figure}

It can be observed from Figure~\ref{fig: tabular}a that the minimum of the fraction $\frac{1}{K}\sum_{k= 1}^K \mathbf{I}_{\{\pi_k \in \Pi_{eq}\}}$ over the~$50$ repeated runs is greater than~$0$ when $K \geq 20$. This implies that the joint policy converges to an equilibrium at the end of the algorithm for all given initial policies. A similar phenomenon can also be observed from Figure~\ref{fig: tabular}b. Meanwhile, it can be seen that $\pi_k$ visits an equilibrium more often as $K$ and $T$ increase. This is consistent with Theorem~\ref{thm:1}, as $\pi_K$ is expected to be at equilibrium with higher probability as $K$ and $T$ increase.

\section{Decentralized Q-learning with linear function approximation.}\label{sec:linear}
	
Algorithm~\ref{al:1} in the previous section extends the single agent Q-learning algorithm. Specifically, in each exploration phase (an inner ``for" loop of Algorithm~\ref{al:1}), each agent updates and keeps track of its~$Q$ function which has dimension $|\Scal||\Acal^i|$. One major challange of such an algorithm is the curse of dimensionality -- when the number of state-action pairs is large, it becomes intractable. One popular approach to overcome this obstacle is to approximate the optimal $Q$~functions with functions from a much smaller space. We next describe the linear function approximation method where each agent's $Q$~function is approximated by a linear combination of $d$ basis functions. {To the best of our knowledge, there is no existing result on either the convergence or the sample complexity of decentralized Q-learning with linear function approximation for general-sum stochastic games. } 

Let $\left\{\phi^i_j:\Scal \times \Acal^i\mapsto \mathbb{R}\mid 1\le j\le d\right\}$ be the set of basis functions (features) of agent~$i$. We denote by $\phi^i := \left[\phi^i_1,\phi^i_2,\ldots,\phi^i_d\right]\in \mathbb{R}^{|\Scal||\Acal^i|\times d}$ the feature matrix of agent~$i$. The linear subspace $\Wcal^i$ spanned by the features $\left\{\phi^i_j\right\}$ is $\Wcal^i=\left\{{Q}^i_{\theta^i} := {\phi^i}^\top\theta^i\mid \theta^i\in\Theta^i\subset\mathbb{R}^d\right\}$ where $\Theta^i$ is some compact subset of $\mathbb{R}^d$ which contains the zero point and has diameter $D^i$, i.e., $D^i = \sup\left\{\left\|\theta^i_j -\theta^i_{j'}\right\|_2\ \Big|\ \theta^i_{j},\theta^i_{j'}\in \Theta^i\right\}$. We use $\Wcal^i$ as the approximation function space for agent~$i$. 

With {the linear function class}, we start with approximating agent~$i$'s optimal $Q$-function satisfying the fixed point equation of the Bellman operator, as given in~\eqref{eq:Qfp}. Specifically, for any joint policy played by all other agents, $\pi^{-i}\in\Delta^{-i}$, we define
\begin{align}\label{eq:thetapi-i}
	\theta^i_{\pi^{-i}} := \argmin_{\theta^i\in\Theta^i}\left\|Q^i_{\pi^{-i}}-{\phi^i}^\top \theta^i\right\|^2_2,
\end{align}
{where we recall that $Q^i_{\pi^{-i}}$ satisfies \eqref{eq:Qfp}.}
Agent~$i$'s set of deterministic best replies to $\pi^{-i}$ under linear function approximation is then {given by}
\begin{align}\label{eq:pipi-i}
	\widetilde{\Pi}^i_{\pi^{-i}} =  \left\{\tilde{\pi}^{i}\in\Pi^i: \ \phi^i\left(s,\tilde{\pi}^{i}(s)\right)^\top \theta^i_{\pi^{-i}}=\max_{a^i\in\mathcal{A}^i} \phi^i(s,a^i)^\top\theta^i_{\pi^{-i}},   \quad \forall s\in\mathcal{S} \right\}.
\end{align}
With~\eqref{eq:pipi-i}, we now define the (deterministic) linear approximated equilibrium. 
\begin{definition}\label{def:laequi}
	A deterministic joint policy $\pi^*\in\Pi$ is a linear approximated equilibrium if
	\begin{align*}
		\pi^{*i} \in \widetilde{\Pi}^i_{(\pi^*)^{-i}},\quad \forall i\in[N].
	\end{align*}
\end{definition}
We denote by $\widetilde{\Pi}_{\rm eq}$ the set of linear approximated equilibria. Our goal is to find a set of $\left\{\theta^i_{(\pi^*)^{-i}}\mid \forall i\in[N]\right\}$ for some $\pi^*\in\widetilde{\Pi}_{\rm eq}$ such that  ${\phi^i}^\top \theta^i_{(\pi^*)^{-i}}$ best represents $Q^i_{(\pi^*)^{-i}}$ in the sense of~\eqref{eq:thetapi-i} for all $i\in[N]$. For this goal to be feasible, we assume that $\widetilde{\Pi}_{\rm eq}$ is nonempty.
\begin{assumption}\label{as:exist}
	There exists at least one deterministic joint policy that is a linear approximated equilibrium.
\end{assumption}
Similar to the tabular case, we can define the \emph{best reply graph} on the set of deterministic joint policies, where each vertex is a deterministic joint policy and there is a direct edge from $\pi_k$ to $\pi_l$ if for some $i\in[N]$, $\pi^i_l\ne \pi^i_k$, $\pi^i_l = \pi^i_k,\forall j\ne i$, and $\pi^i_l\in \widetilde{\Pi}^i_{\pi_k^{-i}}$. The \emph{strict best reply path} and the \emph{weakly acyclic game} are defined analogously as in Definition~\ref{def:path} and Definition~\ref{def:best}. We again consider the weakly acyclic game, with the newly defined best replies as in~\eqref{eq:pipi-i}. There exists a strict best reply path from any $\pi\in\Pi$ to some ${\pi^*}\in\widetilde{\Pi}_{\rm eq}$. Let $\widetilde{L}_{\pi}$ be the minimum length of the strict best reply paths from $\pi$ to a linear approximated equilibrium policy, and let $\widetilde{L}:=\max_{\pi\in\Pi}\widetilde{L}_{\pi}$. We will again apply the BRPI, where the deterministic best reply sets $\widetilde{\Pi}^i_{\pi^{-i}}$ are approximated using Q-learning with linear function approximations.

In the fully decentralized setting, each agent is completely oblivious to other agents. Agent~$i$ may use the standard Q-learning algorithm under linear function approximation~(\citet{bertsekas1995neuro,melo2008analysis}), i.e., 
\begin{align}\label{eq:theta}
	\theta^i_{t+1}  =  \theta_t^i + \eta^i_t \phi^i(s_t,a^i_t)\left[ r^i(s_t,a_t^i,a_t^{-i})  +  \gamma^i \max_{a^i\in\mathcal{A}^i} \phi^i(s_{t+1},a^i)^\top \theta^i_t - \phi^i(s_t,a^i_t)^\top \theta_t^i\right],
\end{align}
and selects its actions by taking $a_t^i = \argmax_{a^i\in\Acal^i}\phi^i(s,a^i)^\top \theta^i$ with high probability and randomly exploring any actions with some small probability. The same problem as in the tabular setting arises here: if all agents use~\eqref{eq:theta} and select their actions with the $\epsilon$-greedy method, the environment becomes nonstationary and the convergence of the $\theta^i$'s is not guaranteed. In the same spirit of Algorithm~\ref{al:1}, we let each agent play the behavior policy $\bar{\pi}^i_k$ as defined in~\eqref{eq:pibar} during the $k$th exploration phase, so that the environment is stationary within each exploration phase. Instead of maintaining and updating a $|\Scal||\Acal^i|$-dimensional $Q$~function, agent~$i$ updates a $d$-dimensional vector $\theta^i$ according to~\eqref{eq:theta}. Note that~\eqref{eq:theta} may also be viewed as the stochastic approximation algorithm for solving the following equation:
%
\begin{align}\label{eq:projbellman}
	\E_{\mu_{\bar{\pi}_k}^i}\left[\phi^i(s,a^i)\left(r^i(s,a^i,a^{-i})  +  \gamma^i \max_{\hat{a}^{i}\in\mathcal{A}^i} \phi^i(s',\hat{a}^{i})^\top \theta^i - \phi^i(s,a^i)^\top \theta^i\right)\right]=0.
\end{align}

The complete decentralized Q-learning algorithm with linear
function approximation is presented as Algorithm~\ref{al:2}. In essence,
we replace the update of $Q$-functions in Algorithm~\ref{al:1} with~\eqref{eq:theta} as in line~8 of Algorithm~\ref{al:2}. Furthermore, in determining whether to update the current baseline policy $\pi_k^i$, we use the  best reply set $\widetilde{\Pi}^i_{k+1}$, which is defined with $\phi^i_{t_{k+1}}(s,a^i)^\top \theta^i_{t_{k+1}}$, in replacement of the set $\Pi^i_{k+1}$. 

\begin{algorithm}[ht]
	\caption{Q-learning for agent~$i$ with linear function approximation}
	\label{al:2}
	\algsetblock[Name]{Parameters}{}{0}{}
	\algsetblock[Name]{Initialize}{}{0}{}
	\algsetblock[Name]{Define}{}{0}{}
	\begin{algorithmic}[1]
		\Statex {Set parameters}
		\Statex \hspace*{5mm} $\Theta^i$: some compact subset of the Euclidian space $\mathbb{R}^{d}$ with diameter $D^i$
		\Statex \hspace*{5mm} $\{T_k\}_{k\geq0}$: sequence of integers in $[1,\infty)$
		\Statex \hspace*{5mm} $K\in\mathbb{Z}_+$: number of exploration phases
		\Statex \hspace*{5mm} $\rho^i\in(0,1)$: experimentation probability
		\Statex \hspace*{5mm} $\lambda^i\in(0,1)$: inertia
		\Statex \hspace*{5mm} $\zeta^i\in(0,\infty)$: tolerance level for sub-optimality
		\Statex \hspace*{5mm} $\{\eta_{t}^{i}\}_{t\geq0}$:  sequence  of step sizes 
		\vspace{2mm}
		\State Initialize  $\pi_0^i \in \Pi^i$ (arbitrary), $\theta_0^i\in\mathbb{R}^d$ (arbitrary)
		\State Receive $s_0$
		\For {$k=1,2\ldots$}
		\For {$t=t_k,\ldots,t_{k+1}-1$}
		\State $a_t^i = \bar{\pi}_k^i(s_t):=\left\{\begin{array}{cl} \pi_k^i(s_t), & \textrm{ w.p. } 1-\rho^i\\ \textrm{any } a^i\in\mathcal{A}^i, & \textrm{ w.p. } \rho^i/|\mathcal{A}^i| \end{array} \right.$
		\State Receive $r^i(s_t,a_t^i,a_t^{-i})$
		\State Receive $s_{t+1}$ (selected according to  $P[ \ \cdot \ | \ s_t,a_t^i,a_t^{-i}]$)
		\State $\theta^i_{t+1}  =  \theta_t^i + \eta^i_t \phi(s_t,a^i_t)\left[ r^i(s_t,a_t^i,a_t^{-i})  +  \gamma^i \max_{a^i\in\mathcal{A}^i} \phi^i(s_{t+1},a^i)^\top \theta^i_t - \phi(s_t,a^i_t)^\top \theta_t^i\right]$
		\EndFor
		\State $\widetilde{\Pi}_{k+1}^i  = \left\{\tilde{\pi}^i\in\Pi^i:  \phi_{t_{k+1}}^i(s,\tilde{\pi}^i(s))^\top \theta^i_{t_{k+1}}\geq \max_{a^i\in\mathcal{A}^i}\phi_{t_{k+1}}^i(s,a^i)^\top \theta^i_{t_{k+1}}-\frac{1}{2}\zeta^i_\theta, \ \mbox{for all}  \ s\right\}$
		\If{$\pi_k^i\in\widetilde{\Pi}_{k+1}^i$,}
		\State $\pi_{k+1}^i = \pi_k^i$
		\Else
		\State $\pi_{k+1}^i = \left\{\begin{array}{cl} \pi_{k}^i, & \textrm{ w.p. } \lambda^i\\ \textrm{any } \pi^i\in\widetilde{\Pi}^i_{k+1}, & \textrm{ w.p. } (1-\lambda^i)/|\widetilde{\Pi}^i_{k+1}|\end{array} \right.  $
		\EndIf
		\State $\theta_{t_{k+1}}^i \gets$ projection of $\theta_{t_{k+1}}^i$ onto ${\Theta}^i$
		\EndFor
	\end{algorithmic}
\end{algorithm}

We will show in this section the finite-sample convergence guarantee of Algorithm~\ref{al:2}. To proceed, we first impose the following assumptions for all agents, in addition to Assumptions~\ref{as:alpha},~\ref{as:aperiodic}, and~\ref{as:exist}.

\begin{assumption}\label{as:lin}
	The features $\{\phi^i_j\}_{1\le j\le d}$ are linearly independent and are normalized so that $\|\phi^i(s,a^i)\|\le 1$ for all state-action pairs $(s,a^i)$.
\end{assumption}

Assumption~\ref{as:lin} is imposed without loss of generality~(\citet{chen2019finite}): we can always scale the basis functions to ensure that $\max_{(s,a^i)\in\Scal\times \Acal^i}\|\phi^i(s,a^i)\|\le 1$, and any dependent features can be discarded. 

\begin{assumption}\label{as:conv}
	\eqref{eq:projbellman} has a unique solution, and there exists $\xi^i>0$ such that the following inequality holds for all $\theta^i\in\Theta^i$:
	\begin{align}
		(\gamma^{i})^2\E_\mu\left[\max_{a^i\in\Acal^i}(\phi^i(s,a^i)^\top \theta^i)^2\right] - \E_\mu\left[\left(\phi^i(s,a^i)^\top\theta^i\right)^2\right]\le \xi^i\|\theta^i\|^2_2.
	\end{align}
\end{assumption}
We note that~\eqref{eq:projbellman} may not admit a solution in general, and the iteration for $\theta^i_t$ in~\eqref{eq:theta} might diverge. Assumption~\ref{as:conv}, which is exactly the same as Assumption~3.3 in \citet{chen2019finite},  ensures the convergence of~\eqref{eq:theta}. See~\citet{chen2019finite,lee2019unified,melo2008analysis} for detailed discussions on this point.

Let $b$ be an upper bound on the minimum Bellman error under joint policy $\pi\in\Pi\cup \bar{\Pi}$, i.e., 
\begin{align}\label{eq:b}
	\min_{\theta^i\in\Theta^i}\left\|Q_{\theta^i}^i - \Tcal^i_{{\pi}^{-i}}\left(Q_{\theta^i}^i\right)\right\|_2 \le b,\quad \forall  i\in[N], \pi^{-i}\in\Pi^{-i}\cup\bar{\Pi}^{-i}.
\end{align}
Similar to~\eqref{eq:zetabar} where $\bar{\zeta}$ is defined, we define the minimum separation between the linear approximated Q-functions:
\begin{align}\label{eq:zetabart}
	\bar{\zeta}_\theta:=\min_{\begin{array}{c} \scriptstyle i,s,a^i,\tilde{a}^i,\pi^{-i}\in\Pi^{-i}: \\ \scriptstyle Q_{\theta^i_{\pi^{-i}}}^i(s,a^i)\not=Q_{\theta^i_{\pi^{-i}}}^i(s,\tilde{a}^i) \end{array}} \left|Q_{\theta^i_{\pi^{-i}}}^i(s,a^i)-Q_{\theta^i_{\pi^{-i}}}^i(s,\tilde{a}^i)\right|.
\end{align}
We next have the following assumption on the bound of minimum Bellman error.
\begin{assumption}\label{as:bound}
	The upper bound $b$ on the minimum Bellman error as given in~\eqref{eq:b} satisfies $b<\frac{(1-\bar{\gamma})\bar{\zeta}_\theta}{8}$.
\end{assumption}
Assumption~\ref{as:bound} bounds the minimum Bellman error from a constant factor of the minimum separation between the linear approximated Q-factors. This assumption is necessary to achieve a good approximation of $\widetilde{\Pi}^i_{\pi_k^{-i}}$ by $\widetilde{\Pi}^i_{k+1}$ in line~10 of Algorithm~\ref{al:2}, so that the algorithm closely mimics the BRPI. Without Assumption~\ref{as:bound}, there is no guarantee that a policy that is not a linear approximated equilibrium will be updated, i.e., an agent may not update its current policy even if it is not a best reply. 

For notational convenience, we let $D:=\max_{i\in[N]}D^i$, $\xi_{\min} := \min_{i\in[N]}\xi^i$, and $\eta_{\min} = \min_{i,t}\eta^i_t$. 
With the above definitions and assumptions, we now present our second main theorem on the sample complexity of Algorithm~\ref{al:2}, whose proof is given in Appendix~\ref{sec:appa}.

\begin{theorem}\label{thm:2}
	Consider a discounted stochastic game that is weakly acyclic under strict best replies~\eqref{eq:pipi-i}. Suppose that  each agent updates its policies by Algorithm~\ref{al:2}. Let Assumptions~\ref{as:alpha},~\ref{as:aperiodic},~\ref{as:exist},~\ref{as:lin},~\ref{as:conv}, and~\ref{as:bound} hold. Then, for any $0<\delta<1$, one has that for all $k\ge K$, 
	\begin{align*}
		P\left[\pi_k\in\widetilde{\Pi}_{\rm eq}\right]\ge 1-\delta
	\end{align*}
	provided that for all $i\in[N]$ and $k\in[K]$, 
	\begin{subequations}
		\begin{align}
			\eta^i_t&=\eta^i \le \frac{ \epsilon^2\tilde{\delta}\xi^i}{456N\widetilde{L}\left(1+\gamma^i+r_{\max}^i\right)^2(D^i+1)^2t_{{\rm mix},k}^i(\eta^i)}, \quad\forall t=t_{{k}},\ldots,t_{{k}+1}-1,\\
			T_{{k}}&\ge t_{{\rm mix},k}(\eta_{\min}) + \frac{\log \frac{\epsilon^2\tilde{\delta}}{2N\widetilde{L}(2D+1)^2}}{\log\left(1-\xi_{\min}\eta_{\min}/2\right)},\\
			K &\ge \frac{\left[\left(1-\tilde{\delta}\right)^2\tilde{p}-\tilde{\delta}^2\right]\widetilde{L}}{\left[\tilde{\delta}+\left(1-\tilde{\delta}\right)\tilde{p}\right]^2\tilde{\delta}},\\
			\rho^i&\le 1 - \left(1-\frac{\left(\bar{\zeta}_\theta/8 - \epsilon\right)(1-\bar{\gamma})-2b}{\widetilde{\Gamma}}\right)^{\frac{1}{N-1}}:=\rho,\\
			\zeta^i_\theta &= \frac{\bar{\zeta}_\theta}{2},
		\end{align}
	\end{subequations} 
where $\widetilde{\Gamma}$ is some absolute constant which depends only on the game parameters (formally defined in~\eqref{eq:Gammat}), $\tilde{p}:=\hat{p}^{\widetilde{L}/L}$, ${\epsilon}:=\min\left\{\frac{\bar{\zeta}_\theta}{16}-\frac{b}{1-\bar{\gamma}}, \frac{1}{2(1-\underline{\gamma})}\right\}$, 
and $\tilde{\delta}$ is such that
\begin{align*}
	\delta = 1-\left(\frac{\left(1-\tilde{\delta}\right)\tilde{p}}{\tilde{\delta}+\left(1-\tilde{\delta}\right)\tilde{p}} - \tilde{\delta}\right)\left(1-\tilde{\delta}\right).
\end{align*}
\end{theorem}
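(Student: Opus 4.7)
\noindent\textbf{Proof proposal for Theorem~\ref{thm:2}.}  The plan is to mirror the architecture of the proof of Theorem~\ref{thm:1}, but with every occurrence of the tabular $Q$-function replaced by its linear projection, and with an additional bookkeeping of the irreducible projection error that is controlled by Assumption~\ref{as:bound}.  Concretely, for each agent~$i$ and each exploration phase $k$, let $\theta^{i}_{\bar{\pi}_k^{-i}}$ denote the unique fixed point of the projected Bellman equation~\eqref{eq:projbellman} (existence and uniqueness come from Assumption~\ref{as:conv}), and let $\theta^{i}_{\pi^{-i}}$ be as defined in~\eqref{eq:thetapi-i}.  I would introduce the event
\begin{align*}
\widetilde{E}_k := \Big\{\omega\in\Omega: \big\|Q^{i}_{\theta^{i}_{t_{k+1}}} - Q^{i}_{\theta^{i}_{\pi_k^{-i}}}\big\|_\infty < \tfrac{1}{4}\min\{\zeta^{i}_\theta,\bar{\zeta}_\theta-\zeta^{i}_\theta\}, \ \forall i\in[N]\Big\},
\end{align*}
which plays the same role as $E_k$ in the tabular case.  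The proof then decomposes the error $\|Q^{i}_{\theta^{i}_{t_{k+1}}} - Q^{i}_{\theta^{i}_{\pi_k^{-i}}}\|_\infty$ through the triangle inequality into (i) a sample error $\|Q^{i}_{\theta^{i}_{t_{k+1}}} - Q^{i}_{\theta^{i}_{\bar{\pi}_k^{-i}}}\|_\infty$, (ii) a perturbation error $\|Q^{i}_{\theta^{i}_{\bar{\pi}_k^{-i}}} - Q^{i}_{\theta^{i}_{\pi_k^{-i}}}\|_\infty$, and I would bound each term separately.

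First I would invoke the finite-sample bound of~\citet{chen2019finite} for single-agent $Q$-learning with linear function approximation against a stationary Markov chain.  Under Assumptions~\ref{as:alpha},~\ref{as:aperiodic},~\ref{as:lin},~\ref{as:conv} the chain induced by $\bar{\pi}_k$ is uniformly ergodic, so for the stated constant step size $\eta^i$ and horizon $T_k$ (chosen to include both a burn-in of $t^{i}_{{\rm mix},k}(\eta^i)$ and $\mathcal{O}(\log(1/\epsilon^2)/(\xi^i\eta^i))$ additional iterations) one obtains $\|\theta^{i}_{t_{k+1}} - \theta^{i}_{\bar{\pi}_k^{-i}}\|_2^2 \le \epsilon^2$ with probability at least $1-\tilde\delta/(N\widetilde L)$; Assumption~\ref{as:lin} converts this into a uniform bound on $\|Q^{i}_{\theta^{i}_{t_{k+1}}} - Q^{i}_{\theta^{i}_{\bar{\pi}_k^{-i}}}\|_\infty$.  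A union bound over $N$ agents and $\widetilde L$ consecutive phases gives the analogue of Lemma~\ref{lm:EkL}.  For the perturbation error, I would mimic Lemma~\ref{lm:Qexp}: decompose $\bar{\pi}_k^{-i}$ into a mixture over $\bar{\Delta}^{-i}$, apply the Bellman fixed-point identity to $Q^{i}_{\pi_k^{-i}}$ (which satisfies~\eqref{eq:Qfp} up to the minimum Bellman error $b$ after projection, by~\eqref{eq:b} and Assumption~\ref{as:bound}), and exploit the $\gamma^i$-contraction of $\mathcal{T}^{i}_{\bar{\pi}_k^{-i}}$ to obtain a bound of the form $\bigl(1-\prod_{j\ne i}(1-\rho^j)\bigr)\widetilde\Gamma/(1-\bar\gamma) + 2b/(1-\bar\gamma)$, where $\widetilde\Gamma$ is defined exactly as in~\eqref{eq:Gamma} but with $Q^{i}_{\theta^{i}_{\pi^{-i}}}$ in place of $Q^{i}_{\pi^{-i}}$.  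The stated choice of $\rho$ then makes the combined error strictly smaller than $\bar{\zeta}_\theta/8$, which is exactly $\tfrac{1}{4}\min\{\zeta^{i}_\theta,\bar{\zeta}_\theta-\zeta^{i}_\theta\}$ under $\zeta^i_\theta = \bar{\zeta}_\theta/2$.

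Once $P[\widetilde E_k,\ldots,\widetilde E_{k+\widetilde L-1}] \ge 1-\tilde\delta$ is established, the remainder of the proof reproduces the tabular argument verbatim with $Q^i_{\pi_k^{-i}}$ replaced by $Q^i_{\theta^i_{\pi_k^{-i}}}$.  In particular, I would prove the analogues of Lemmas~\ref{lm:pik1} and~\ref{lm:p0}: conditional on $\widetilde E_k$, the ordering of entries of $Q^{i}_{\theta^{i}_{t_{k+1}}}(s,\cdot)$ coincides with the ordering of entries of $Q^{i}_{\theta^{i}_{\pi_k^{-i}}}(s,\cdot)$ (because the approximation gap $\tfrac{1}{8}\bar{\zeta}_\theta$ is strictly less than the minimum separation $\bar{\zeta}_\theta$), so the set $\widetilde{\Pi}^i_{k+1}$ of line~10 of Algorithm~\ref{al:2} coincides with $\widetilde{\Pi}^i_{\pi_k^{-i}}$, and thus Algorithm~\ref{al:2} emulates the BRPI on the linear-approximated best-reply graph.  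Lemma~\ref{lm:pmin} applied to the path length $\widetilde L$ then yields the bound $\tilde p = \hat{p}^{\widetilde L/L}$, and the telescoping recursion $p_{(n+1)\widetilde L} \ge p_{n\widetilde L}(1-\tilde\delta) + (1-p_{n\widetilde L})\tilde p(1-\tilde\delta)$ leads to $P[\pi_k\in\widetilde\Pi_{\rm eq}]\ge 1-\delta$ for $k\ge K$ via the exact same fixed-point-style manipulation as in equations~\eqref{eq:geometricConv}--\eqref{eq:deltatilde}.

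The main obstacle, in my view, is the handling of the minimum Bellman error $b$ in the perturbation lemma.  In the tabular case~$Q^i_{\pi_k^{-i}}$ is itself the exact Bellman fixed point, so the argument reduces to a single contraction inequality; here~$Q^{i}_{\theta^{i}_{\pi_k^{-i}}}$ is only an $\ell_2$-optimal approximant, and one must carefully account for the excess $b/(1-\bar\gamma)$ that leaks into the bound when substituting $Q^{i}_{\theta^{i}_{\pi_k^{-i}}}$ for the true fixed point inside $\mathcal{T}^{i}_{\bar{\pi}_k^{-i}}$.  This is precisely where Assumption~\ref{as:bound} is used: it guarantees that after subtracting $2b/(1-\bar\gamma)$ from the slack $\bar{\zeta}_\theta/8$, a strictly positive budget remains for the sample error $\epsilon$ and for the perturbation term controlled by $\rho$, which is what forces the specific form of the bound on $\rho$ in the theorem statement.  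A secondary subtlety is that the mixing-time and stationary-distribution bounds of Proposition~\ref{prop:bounds} remain valid here because they depend only on the behavior policy $\bar{\pi}_k$ and on Assumption~\ref{as:alpha}, neither of which involves $Q$-functions; I would briefly remark on this in the proof to justify reusing those estimates.
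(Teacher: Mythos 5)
Your proposal is correct and follows essentially the same route as the paper's proof in Appendix~\ref{sec:appa}: the same event $\widetilde{E}_k$, the same triangle-inequality split into a sample error (bounded via \citet{chen2019finite} plus Markov's inequality and a union bound over $N$ agents and $\widetilde{L}$ phases) and a perturbation error carrying the extra $2b/(1-\bar{\gamma})$ term (Lemma~\ref{lm:Qexptheta}), followed by the linear-approximated analogues of Lemmas~\ref{lm:pik1} and~\ref{lm:p0} and the identical geometric recursion. Your identification of where Assumption~\ref{as:bound} enters — guaranteeing a strictly positive budget after subtracting the Bellman-error leakage — matches the paper's treatment exactly.
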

\begin{corollary}
	Recall from Proposition~\ref{prop:bounds} that $t_{{\rm mix},k}(\alpha)\le (H+1)\left((-\log \alpha)\frac{A^{NH}}{\kappa\rho^{NH}}+1\right)$. By applying this upper bound on the mixing time	to Theorem~\ref{thm:2}, we may express the $\eta^i_t$ and $T_k$ in Theorem~\ref{thm:2} as
	\begin{subequations}
		\begin{align}
			\eta^i_t& \le \frac{ \epsilon^2\tilde{\delta}\xi^i\kappa\rho^{NH}}{456N\widetilde{L}\left(1+\gamma^i+r_{\max}^i\right)^2(D^i+1)^2(H+1)\left(\kappa\rho^{NH}-{A^{NH}}\log \eta^i\right)}, \quad\forall t=t_{{k}},\ldots,t_{{k}+1}-1,\\
			T_{{k}}&\ge (H+1)\left((-\log \eta_{\min})\frac{A^{NH}}{\kappa\rho^{NH}}+1\right) + \frac{\log \frac{\epsilon^2\tilde{\delta}}{2N\widetilde{L}(2D+1)^2}}{\log\left(1-\xi_{\min}\eta_{\min}/2\right)}.
		\end{align}
	\end{subequations}
\end{corollary}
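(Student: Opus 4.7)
The plan is to derive this corollary by direct substitution of the mixing-time upper bound from Proposition~\ref{prop:bounds} into the two relevant conditions in Theorem~\ref{thm:2}, using monotonicity of the bounds in $t_{{\rm mix},k}$. First I would observe that Proposition~\ref{prop:bounds} depends only on Assumption~\ref{as:alpha} and the structural form of the behavior policy $\bar{\pi}_k$ in~\eqref{eq:pibar}. Since Algorithm~\ref{al:2} uses exactly the same behavior policy during each exploration phase (line~5), the bound
\[
t_{{\rm mix},k}(\alpha) \;\le\; (H+1)\!\left((-\log\alpha)\,\tfrac{A^{NH}}{\kappa \rho^{NH}}+1\right)
\;=\; (H+1)\,\tfrac{\kappa \rho^{NH} - A^{NH}\log\alpha}{\kappa \rho^{NH}}
\]
transfers verbatim to Algorithm~\ref{al:2}, and in particular this holds for $\alpha=\eta^i$ and for $\alpha=\eta_{\min}$.

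Next, I would exploit the monotonicity of each condition with respect to $t_{{\rm mix},k}$. The bound on the stepsize in Theorem~\ref{thm:2} places $t^{i}_{{\rm mix},k}(\eta^i)$ in the denominator, so replacing $t^{i}_{{\rm mix},k}(\eta^i)$ by its explicit upper bound $(H+1)(\kappa\rho^{NH}-A^{NH}\log\eta^i)/(\kappa\rho^{NH})$ yields a smaller, hence sufficient, upper bound on $\eta^i$; any $\eta^i$ satisfying this tightened inequality automatically satisfies the original. This gives the first displayed line of the corollary after straightforward algebraic simplification. Analogously, $T_k$ is lower-bounded by $t_{{\rm mix},k}(\eta_{\min})$ plus a nonnegative additive term; replacing $t_{{\rm mix},k}(\eta_{\min})$ by its explicit upper bound yields a larger, hence sufficient, lower bound on $T_k$, producing the second displayed line.

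The only subtlety is that the resulting stepsize condition is implicit in $\eta^i$ (the quantity $-\log \eta^i$ appears on the right-hand side). I would note that this is well-posed: because $-\log\eta^i$ grows only logarithmically as $\eta^i\to 0^+$ while the left-hand side decreases linearly, a strictly positive $\eta^i$ satisfying the inequality always exists, and one may in principle solve it explicitly via the Lambert $W$ function if desired. No other step is nontrivial; the main bookkeeping obstacle is keeping the direction of each substitution consistent (upper bound of $t_{{\rm mix},k}$ yields sufficient conditions on both $\eta^i_t$ and $T_k$ because of the sign of the coefficient in each expression), and verifying that all other parameters in Theorem~\ref{thm:2} ($\epsilon$, $\tilde{\delta}$, $\xi^i$, $D^i$, $\widetilde{L}$, etc.) are unaffected by the substitution and therefore carry over unchanged.
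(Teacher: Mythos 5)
Your proposal is correct and follows essentially the same route as the paper, which treats this corollary as a direct substitution of the Proposition~\ref{prop:bounds} upper bound on $t_{{\rm mix},k}$ into the stepsize and phase-length conditions of Theorem~\ref{thm:2}, with the monotonicity directions exactly as you describe (upper-bounding the denominator tightens the $\eta^i$ condition; upper-bounding the additive term tightens the $T_k$ condition, both yielding sufficient conditions). Your additional remark on the well-posedness of the implicit inequality in $\eta^i$ is a valid observation that the paper leaves unstated.
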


%

We note that Theorem~\ref{thm:2} provides the sample complexity for the joint deterministic baseline policy $\pi_k$ to converge to a \emph{linear approximated equilibrium} in $\widetilde{\Pi}_{\rm eq}$, as defined in Definition~\ref{def:laequi}, which may or may not be the equilibrium as defined in Definition~\ref{def:equi} due to linear approximation. However, for the special case when each agent's optimal $Q$-function $Q^i_{\bar{\pi}_k^{-i}}$ is realizable in $\Wcal^i$ for all joint behavior policies $\bar{\pi}_k\in\bar{\Pi}$, we will be able to show the convergence of $\pi_k$ to an equilibrium in $\Pi_{\rm eq}$. Formally, we have the following realizability assumption.
\begin{assumption}\label{as:realizable}
	For any joint policy $\bar{\pi}_k$ as in~\eqref{eq:pibar}, agent~$i$'s optimal $Q$-function $Q^i_{\bar{\pi}_k^{-i}}$ is realizable in $\Wcal^i$, i.e., there exists $\theta^i_{\bar{\pi}_k^{-i}}\in\Theta^i\subset \mathbb{R}^d$ such that $Q^i_{\bar{\pi}_k^{-i}}(s,a^i) = \phi^i(s,a^i)^\top \theta^i_{\bar{\pi}_k^{-i}},\forall (s,a^i)\in\Scal\times \Acal^i$.
\end{assumption}

With this additional assumption, we arrive at the following result, whose proof can be found in Appendix~\ref{sec:appb}.
\begin{theorem}\label{thm:3}
	Consider a discounted stochastic game that is weakly acyclic under strict best replies~\eqref{eq:pi-i}. Suppose that  each agent updates its policies by Algorithm~\ref{al:2}. Let Assumptions~\ref{as:alpha},~\ref{as:aperiodic},~\ref{as:lin},~\ref{as:conv}, and~\ref{as:realizable} hold. Then, for any $0<\delta<1$, one has for all $k\ge K$,
	\begin{align*}
		P\left[\pi_k\in\Pi_{\rm eq}\right] \ge 1-\delta,
	\end{align*}
	provided that for all $i\in[N]$ and $k\in[K]$,
	\begin{subequations}
		\begin{align}
			\eta^i_t&=\eta^i \le \frac{ \epsilon^2\tilde{\delta}\xi^i}{456NL\left(1+\gamma^i+r_{\max}^i\right)^2(D^i+1)^2t_{{\rm mix},k}^i(\eta^i)}, \quad\forall t=t_k,\ldots,t_{k+1}-1,\\
			T_k&\ge t_{{\rm mix},k}(\eta_{\min}) + \frac{\log \frac{\epsilon^2\tilde{\delta}}{2NL(2D+1)^2}}{\log\left(1-\xi_{\min}\eta_{\min}/2\right)},\\
			K &\ge \frac{\left[\left(1-\tilde{\delta}\right)^2\hat{p}-\tilde{\delta}^2\right]L}{\left[\tilde{\delta}+\left(1-\tilde{\delta}\right)\hat{p}\right]^2\tilde{\delta}}\\
			\rho^i &=  1 - \left(1-\frac{(\bar{\zeta}/2-{\epsilon})(1-\bar{\gamma})}{\Gamma}\right)^{\frac{1}{N-1}}\\
			\zeta^i &= \frac{\bar{\zeta}}{2}
		\end{align}
	\end{subequations}
	where $\Gamma$, $\bar{\zeta}$ and $\hat{p}$ are {absolute constants} as defined in~\eqref{eq:Gamma},~\eqref{eq:zetabar} and~\eqref{eq:phat}, respectively (which depend only on the game parameters), ${\epsilon}:={\min\left\{\frac{\bar{\zeta}}{16}, \frac{1}{2(1-\underline{\gamma})}\right\}}$, 
	and $\tilde{\delta}$ is such that
	\begin{align*}
		\delta = 1-\left(\frac{\left(1-\tilde{\delta}\right)\tilde{p}}{\tilde{\delta}+\left(1-\tilde{\delta}\right)\tilde{p}} - \tilde{\delta}\right)\left(1-\tilde{\delta}\right).
	\end{align*} 
\end{theorem}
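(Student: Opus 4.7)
The plan is to derive Theorem~\ref{thm:3} as a clean reduction to the structure of Theorem~\ref{thm:1}, once Assumption~\ref{as:realizable} (realizability) is invoked to eliminate the approximation bias introduced by the linear function class. First, I would observe that under realizability the optimizer $\theta^i_{\pi^{-i}}$ in~\eqref{eq:thetapi-i} satisfies $\phi^{i\top}\theta^i_{\pi^{-i}} = Q^i_{\pi^{-i}}$ exactly, so the minimum Bellman error bound $b$ in~\eqref{eq:b} is $0$. As a consequence, $\widetilde{\Pi}^i_{\pi^{-i}} = \Pi^i_{\pi^{-i}}$ for every $\pi^{-i}\in\Pi^{-i}$, which means the strict best reply graph under linear function approximation coincides with the tabular one. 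In particular, $\widetilde{L}=L$, $\widetilde{\Pi}_{\rm eq}=\Pi_{\rm eq}$, $\bar{\zeta}_\theta=\bar{\zeta}$, $\widetilde{\Gamma}=\Gamma$, and $\tilde{p}=\hat{p}$, so Assumption~\ref{as:bound} and Assumption~\ref{as:exist} are automatically satisfied, and the parameter choices in the statement align with those in Theorems~\ref{thm:1} and~\ref{thm:2}.

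The main technical step is to establish the appropriate analogue of Lemma~\ref{lm:Qappx}, this time for the linear function approximation iterates~\eqref{eq:theta}. Under Assumptions~\ref{as:alpha},~\ref{as:aperiodic},~\ref{as:lin}, and~\ref{as:conv}, I would invoke the single-agent finite-sample bound of \citet{chen2019finite} for constant step size Q-learning with linear function approximation on a uniformly ergodic Markov chain. With the stated choice of $\eta^i$ and $T_k$, this yields $\|\theta^i_{t_{k+1}} - \theta^i_{\bar{\pi}_k^{-i}}\|_2\le \epsilon$ with probability at least $1-\tilde{\delta}/(NL)$, and by Assumption~\ref{as:lin} the bound $\|\phi^i(s,a^i)\|_2\le 1$ transfers this directly to $\bigl|Q^i_{\theta^i_{t_{k+1}}} - Q^i_{\bar{\pi}_k^{-i}}\bigr|_\infty \le \epsilon$, where I use realizability to identify $\phi^{i\top}\theta^i_{\bar{\pi}_k^{-i}}$ with $Q^i_{\bar{\pi}_k^{-i}}$. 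A union bound over the $N$ agents, combined with Lemma~\ref{lm:Qexp} and the triangle inequality applied with $\tilde{\epsilon}=\bar{\zeta}/8-\epsilon$, produces the analogue of Lemma~\ref{lm:Pi}: for the chosen $\rho^i$ and $\zeta^i$, the event
\begin{equation*}
E_k := \Bigl\{\bigl|Q^i_{\theta^i_{t_{k+1}}} - Q^i_{\pi_k^{-i}}\bigr|_\infty < \tfrac14\min\{\zeta^i,\bar{\zeta}-\zeta^i\},\ \forall i\in[N]\Bigr\}
\end{equation*}
holds with probability at least $1-\tilde{\delta}/L$ for each fixed $\pi_k$.

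Once $E_k$ is in place, the remainder of the argument is identical to the tabular proof of Theorem~\ref{thm:1}, because the construction of $\widetilde{\Pi}^i_{k+1}$ in line~10 of Algorithm~\ref{al:2} depends only on maximizers of $\phi^{i\top}\theta^i_{t_{k+1}}$, and under realizability these are determined by the true $Q$-function gaps. Lemma~\ref{lm:pik1} (conditioned on $E_k$ and $\pi_k\in\Pi_{\rm eq}$, $\pi_{k+1}=\pi_k$ almost surely), Lemma~\ref{lm:p0} (lower bound $\hat{p}$ on reaching an equilibrium in $L$ steps when all $E_{k},\ldots,E_{k+L-1}$ hold), and Lemma~\ref{lm:EkL} (union bound over $L$ phases) all apply verbatim, using $L$ and $\hat{p}$ instead of $\widetilde{L}$ and $\tilde{p}$. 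The geometric recursion~\eqref{eq:geometricConv}--\eqref{eq:deltatilde} then closes out the proof with the stated lower bound on $K$ and the prescribed relation between $\delta$ and $\tilde{\delta}$.

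The main obstacle I anticipate is the careful invocation of the \citet{chen2019finite} bound, since (i)~the mixing time $t_{{\rm mix},k}^i(\eta^i)$ depends self-referentially on the step size, so one must check that the displayed choice of $\eta^i$ is compatible with the mixing condition of that theorem, and (ii)~translating a Euclidean bound on $\theta^i$ into a uniform bound on $Q^i_{\theta^i}$ with the factors $(D^i+1)$ and $(1+\gamma^i+r^i_{\max})$ requires reconciling the projection step onto $\Theta^i$ in line~16 of Algorithm~\ref{al:2} with the boundedness assumptions in Chen et al.\ Beyond this, the proof is a transparent reduction: realizability collapses the ``tilde'' quantities to their tabular counterparts, and the linear-approximation Q-learning bound plays precisely the role that Li et al.'s tabular bound played in Theorem~\ref{thm:1}.
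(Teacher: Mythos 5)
Your proposal is correct and follows essentially the same route as the paper: invoke the \citet{chen2019finite} bound on $\big\|\theta^i_{t_{k+1}}-\theta^i_{\bar{\pi}_k^{-i}}\big\|$ (the paper's Lemma~\ref{lem:theta}), use Assumption~\ref{as:realizable} to identify ${\phi^i}^\top\theta^i_{\bar{\pi}_k^{-i}}$ with $Q^i_{\bar{\pi}_k^{-i}}$, and then note that this recovers exactly the conclusion of Lemma~\ref{lm:Qappx} so the tabular proof of Theorem~\ref{thm:1} applies verbatim. The only minor imprecision is your opening claim that realizability forces $b=0$ and $\widetilde{\Pi}^i_{\pi^{-i}}=\Pi^i_{\pi^{-i}}$ for all deterministic $\pi^{-i}$ (Assumption~\ref{as:realizable} is stated only for behavior policies $\bar{\pi}_k\in\bar{\Pi}$), but that paragraph is not load-bearing since the theorem and your main argument are phrased directly in terms of $\Pi_{\rm eq}$ and the tabular constants.
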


\subsection{Numerical Experiments.}
Similar to the tabular cases, we use again the grid-world stochastic game to demonstrate the effectiveness of Algorithm~\ref{al:2}. All of the details about the agents, states, actions, reward, and parameters in this setting are the same as those in Section~\ref{sec: experiments tabular}. We construct the feature vectors using a polynomial basis~(\citet{sutton2018reinforcement}). Specifically, we use order-$3$ polynomial-basis features, where each feature can be written as
\[\phi_j^i(s, a) = s(1)^{c_{1, j}}s(2)^{c_{2, j}}a^{c_{3, j}},  \]
where $j = 1, \ldots, d$, each $c_{k, j}$ is an integer in the set $\{0, 1, 2, 3\}$. Since $|\mathcal{S}\times \mathcal{A}^i| = 21$ (for both agents) in this stochastic game, we choose $d = 18$. To obtain the set of linear approximated equilibria, for each $\pi^{-i}\in\Pi^{-i}$ ($i = 1, 2$), we first compute $\theta^i_{\pi^{-i}}$ by solving the quadratic programming problem \eqref{eq:thetapi-i}, and deduce the set of best reply policies under linear function approximation $\widetilde{\Pi}_{\pi^{-i}}^i$ from~\eqref{eq:pipi-i}. Then, for each best reply policy $\tilde{\pi}^i\in\widetilde{\Pi}^i_{\pi^{-i}}$, we check if $\tilde{\pi}^{i}$ is also a best reply policy of~$\pi^{-i}$ under linear function approximation. If so, then the joint policy $(\pi^{-i}, \tilde{\pi}^i)$ is a linear approximated equilibrium. 

As in Section~\ref{sec: experiments tabular}, we also run experiments with different parameters of $K$ and $T$, and the process is repeated ~$50$ times.
The experimental results are presented in Figure~\ref{fig: linear}. We can see a similar phenomenon as in the  tabular case. It can be seen from Figure~\ref{fig: linear}a that the minimum of the fraction $\frac{1}{K}\sum_{k= 1}^K \mathbf{I}_{\{\pi_k \in \Pi_{eq}\}}$ of the~$50$ repeated runs is greater than~$0$ when $K \geq 100$. It can also be seen from Figure~\ref{fig: linear}b that the minimum of $\frac{1}{K}\sum_{k= 1}^K \mathbf{I}_{\{\pi_k \in \Pi_{eq}\}}$ is greater than~$0$ when $T \geq 200$. This implies that the joint policy converges to a linear approximated equilibrium at the end of the algorithm for all given initial policies. Moreover, it can be seen that $\pi_k$ visits a linear approximated equilibrium more often as $K$ and $T$ increase. This is consistent with Theorem~\ref{thm:2}, as~$\pi_K$ is expected to be at a linear approximated equilibrium with higher probability as~$K$ and~$T$ increase.

\begin{figure}[ht]
	\includegraphics[width=0.85\linewidth]{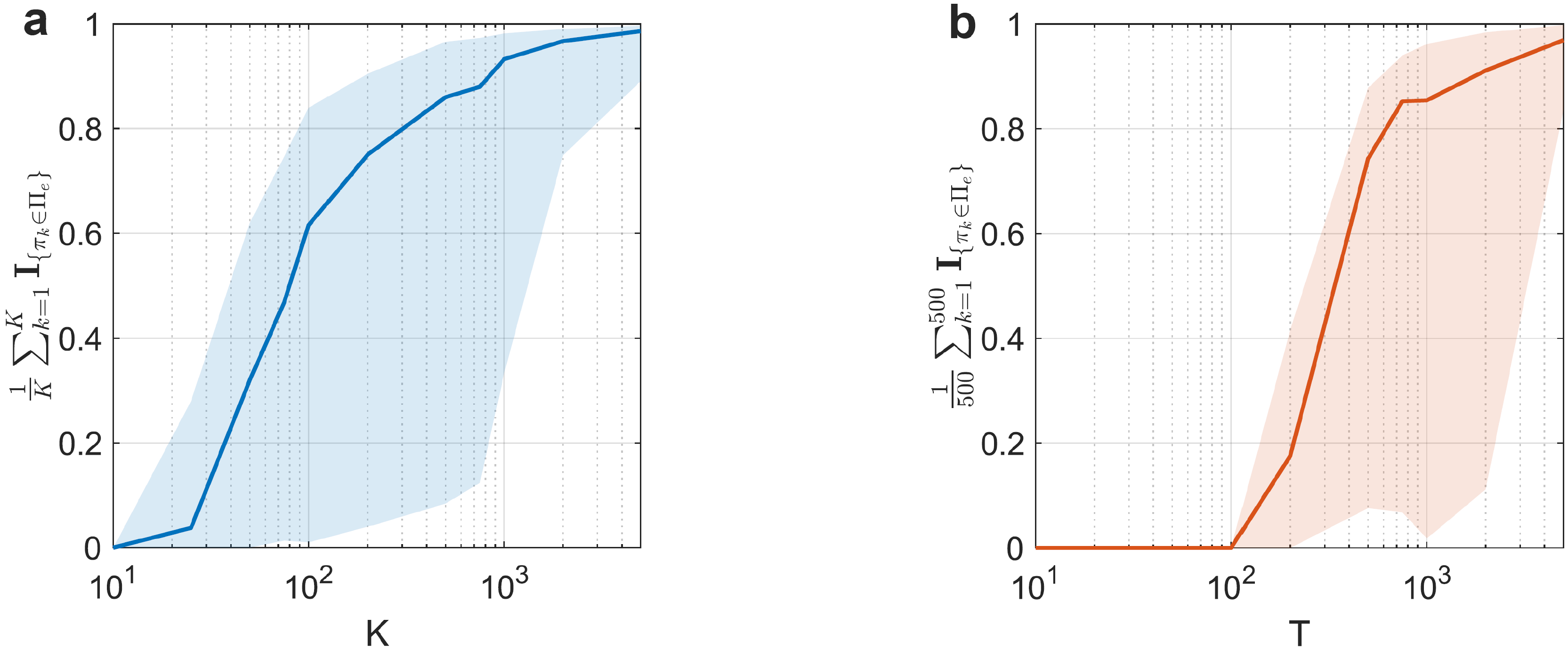}
	\centering
	\caption{Experimental results of grid-world when {\bf Algorithm~\ref{al:2}} is applied. {\bf a}, The fraction of times at which~$\pi_k$ visits an equilibrium when $K$ ranges in the interval $[10,~ 5000]$. {\bf b}, The fraction of times at which $\pi_k$ visits an equilibrium when $T$ ranges in the interval $[10,~ 5000]$. In~{\bf a}, we fix the value of~$T$ as~$1000$. In~{\bf b}, we fix the value of $K$ as~$500$. The solid lines are the average of~$50$ runs, and the shaded regions represent the min-max interval.
	}\label{fig: linear}
\end{figure}

\section{Conclusions.}\label{sec:con}


This paper was aimed at deriving sample complexity results on decentralized Q-learning algorithms for a class of general-sum stochastic games -- weakly acyclic games. The main takeaways of this paper can be summarized as follows. First, we have established finite-sample guarantees for Algorithm~\ref{al:1}, whose asymptotic convergence was shown earlier in~\citet{arslan2016decentralized}. Second, we have imposed linear function approximation to the algorithm (as Algorithm~\ref{al:2}), and provided finite-sample analysis for its convergence to a linear approximated equilibrium -- a new notion of equilibrium that we have introduced.

Regarding the first point, we note that there are some nontrivial generalizations from the asymptotic convergence to finite-sample analysis of Algorithm~\ref{al:1}. One example is in the proof of Lemma~\ref{lm:p0}, where we have to ensure that $\Pi^i_{k+1}$ well approximates $\Pi^i_{\pi^{-i}_k}$, so that any non-equilibrium policy would not stop updating before converging to equilibrium (the ``if-else'' statement in Algorithm~1). Under the linear approximation setting, the behavior that $Q^i_{\theta^i_{k+1}}$ and $Q^i_{\theta^i_{\pi_k^{-i}}}$ are separated by a distance related to the minimum Bellman error (Lemma~11) adds considerable complications to the analysis, as it may hinder the algorithm from updating a linear approximated non-equilibrium policy. We address this issue in the proof of Lemma~\ref{lm:p0t}. In addition, Lemmas~\ref{lm:p0} and~\ref{lm:p0t} provide closed-form expressions for the lower bounds $\hat{p}$ and $\tilde{p}$, which did not exist in~\citet{arslan2016decentralized} but are crucial for developing the finite-sample guarantees.
Another example of new development lies in Proposition~\ref{prop:bounds}. The notions of the minimum probability of stationary distribution~$\mu_{\min}$ and the mixing time~$t_{\rm mix}$ that appeared in the sample complexity results seem implicit, while Proposition~\ref{prop:bounds} bounds~$\mu_{\min}$ and~$t_{\rm mix}$ under the current game set up, and thus the results in both theorems can be expressed explicitly in terms of the game parameters. 
We further note that, even with these (and other) developments, there is room for improvement on the results. For instance, in the paper we picked~$\epsilon$ as the middle point of its possible range, while it remains open to optimize over~$\epsilon$ in~\eqref{eq:eps} while keeping it simple to obtain a tighter bound on~$T_k$.

Regarding the second point, we note that in the linear approximated equilibrium, each agent's policy is a best reply (to other agents' joint policy) within the linear space spanned by the set of features~(basis functions). When the dimension of the feature set is large enough so that the original Q-functions can be fully recovered for all state-action pairs, each linear approximated equilibrium is naturally also a Markov perfect equilibrium and vice versa. When we have a smaller feature set, the relationship between linear approximated equilibria and Markov perfect equilibria can be general: a joint policy can be both a linear approximated equilibrium and a Markov perfect equilibrium, or it can only be one of these equilibria but not the other one. Fixing a certain (small) number of features, it would be interesting to investigate the question of how to select features so that the set of linear approximated equilibria overlaps the most with the set of Markov perfect equilibria, which could be a promising future research direction.


\section*{Acknowledgments.}
This work was supported in part by the National Science Foundation (NSF) Award No.~1832230 and in part by the Air Force Office of Scientific Research (AFOSR) Grant No. FA9550-19-1-0353. Zuguang Gao and John R. Birge would like to acknowledge the support from the University of Chicago Booth
School of Business. The authors would like to thank Kaiqing Zhang for fruitful discussions.

\bibliographystyle{informs2014} 
\bibliography{example_paper.bib} 


%
\clearpage
%
%
\begin{APPENDICES}

\section{Proof of Theorem~\ref{thm:2}}\label{sec:appa}
We first introduce the following lemma, which is an application of the sample complexity result on the convergence of~$\theta$ for single agent Q-learning~(\citet{chen2019finite}).
\begin{lemma}\label{lem:theta}
	Fix any arbitrary $\pi_k\in\Pi$. For any  $\epsilon\ge 0$ and $0<\hat{\delta}<1$, we have  that
	$$P\left[  \big|\theta_{t_{k+1}}^i - \theta_{\bar{\pi}_k^{-i}}^{i}\big|_{\infty} \le \epsilon, \ \forall i\in[N]  \right] \geq 1-\hat{\delta},$$
	provided that the iteration number $T_k$ and the learning rates $\eta_t^i$ obey
	\begin{subequations}\label{eq:Tetatheta}
		\begin{align}
			\eta^i_t&=\eta^i \le \frac{ \epsilon^2\hat{\delta}\xi^i}{456N\left(1+\gamma^i+r_{\max}^i\right)^2(D^i+1)^2t_{{\rm mix},k}^i(\eta^i)}, \quad\forall t=t_k,\ldots,t_{k+1}-1,\forall i\in[N],\label{eq:etatitheta}\\
			T_k&\ge t_{{\rm mix},k}(\eta_{\min}) + \frac{\log \frac{\epsilon^2\hat{\delta}}{2N(2D+1)^2}}{\log\left(1-\xi_{\min}\eta_{\min}/2\right)},
		\end{align}
	\end{subequations}
	where $D = \max_iD^i$, $\eta_{\min}=\min_i\eta^i$, and $\xi_{\min} = \min_{i}\xi^i$.
\end{lemma}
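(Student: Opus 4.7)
The plan is to reduce Lemma~\ref{lem:theta} to the single-agent finite-sample analysis of Q-learning with linear function approximation from \citet{chen2019finite}, applied in parallel to each agent, and then to combine the $N$ per-agent statements by a union bound. The key observation is that during the $k$-th exploration phase every agent commits to the randomized behavior policy $\bar\pi_k$ of~\eqref{eq:pibar}, so from agent~$i$'s local perspective the joint action $a^{-i}$ is drawn i.i.d.\ from a fixed distribution $\bar\pi_k^{-i}(s_t)$ at each state. Hence the process $(s_t,a_t^i)$ evolves as a stationary Markov chain on $\mathcal S\times\mathcal A^i$, and the iteration~\eqref{eq:theta} coincides exactly with the standard Markovian-noise stochastic-approximation scheme for the projected Bellman equation~\eqref{eq:projbellman} whose unique fixed point is $\theta^i_{\bar\pi_k^{-i}}$.

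The verification of the hypotheses of \citet{chen2019finite} is then routine: Assumptions~\ref{as:alpha}--\ref{as:aperiodic} give a uniformly ergodic chain so that $t_{\mathrm{mix},k}^i(\alpha)$ is well-defined; Assumption~\ref{as:lin} supplies the feature-normalization $\|\phi^i(s,a^i)\|\le 1$ and linear independence; and Assumption~\ref{as:conv} is \emph{precisely} Assumption~3.3 of \citet{chen2019finite}, which is what furnishes the negative-drift / strong-monotonicity constant $\xi^i$ needed to control the squared error $\E\|\theta_t^i-\theta^i_{\bar\pi_k^{-i}}\|_2^2$. Invoking their Theorem with any tolerance $\delta_0\in(0,1)$ then gives: if
\[
\eta^i \le \frac{\epsilon^2\delta_0\xi^i}{456(1+\gamma^i+r^i_{\max})^2(D^i+1)^2\, t^i_{\mathrm{mix},k}(\eta^i)}
\quad\text{and}\quad
T_k \ge t^i_{\mathrm{mix},k}(\eta^i) + \frac{\log\frac{\epsilon^2\delta_0}{2(2D^i+1)^2}}{\log(1-\xi^i\eta^i/2)},
\]
then $P\bigl[\|\theta_{t_{k+1}}^i-\theta_{\bar\pi_k^{-i}}^i\|_2\le\epsilon\bigr]\ge 1-\delta_0$, and since $\|\cdot\|_\infty\le\|\cdot\|_2$ the same bound holds in $\ell_\infty$.

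The final step is to set $\delta_0=\hat\delta/N$ and take a union bound across $i\in[N]$, exactly as in the derivation of Lemma~\ref{lm:Qappx}. To replace the per-agent thresholds by a single uniform pair $(\eta^i,T_k)$, I would use monotonicity: replacing every $\xi^i$ by $\xi_{\min}=\min_i\xi^i$, every $D^i$ by $D=\max_i D^i$, every $(1+\gamma^i+r^i_{\max})(D^i+1)$ by its maximum, and every $t^i_{\mathrm{mix},k}$ by $t_{\mathrm{mix},k}=\max_i t^i_{\mathrm{mix},k}$ only tightens the required step-size and lengthens $T_k$, so the bounds displayed in~\eqref{eq:Tetatheta} are sufficient. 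The $\log$ factor $\log\bigl(\epsilon^2\hat\delta/(2N(2D+1)^2)\bigr)$ in the statement of $T_k$ arises exactly from substituting $\delta_0=\hat\delta/N$ into the corresponding $\log$ term of the single-agent bound.

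The main obstacle will be bookkeeping rather than conceptual: one has to check that the structural constants (the $456$, the $(D^i+1)^2$ and $(2D^i+1)^2$ factors, and the particular appearance of $t^i_{\mathrm{mix},k}(\eta^i)$ in both the numerator of the step-size constraint and the first additive term of $T_k$) match Chen--Zhang--Maguluri--Shakkottai's statement after translating their notation (they work in $\ell_2$ with an explicit projection onto $\Theta^i$, which is provided here by line~16 of Algorithm~\ref{al:2}). A secondary subtlety is that Assumption~\ref{as:conv} is stated with respect to the stationary distribution $\mu$ of the joint behavior chain; since for any fixed $\bar\pi_k$ this is the invariant measure of agent~$i$'s induced chain, the assumption directly supplies the per-phase contraction constant $\xi^i$ uniformly in $k$, so no phase-dependent constants enter the final bound.
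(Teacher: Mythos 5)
Your proposal follows essentially the same route as the paper: stationarity of the induced chain within an exploration phase, an application of the finite-sample result of \citet{chen2019finite} to each agent separately, and a union bound with $\delta_0=\hat\delta/N$ followed by $|\cdot|_\infty\le\|\cdot\|_2$. The only step you leave implicit is that the cited theorem bounds the mean-squared error $\E\big[\|\theta^i_{t_{k+1}}-\theta^i_{\bar\pi_k^{-i}}\|_2^2\big]$ rather than a tail probability, so the high-probability statement is obtained by tuning $\eta^i$ and $T_k$ to make this expectation at most $\epsilon^2\delta_0$ and then applying Markov's inequality --- which is precisely why the factor $\epsilon^2\delta_0$ appears in your (correct) step-size and $T_k$ conditions.
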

\begin{proof}[Proof of Lemma~\ref{lem:theta}]\
	Note that in the $k$th exploration phase, agents adopt the joint policy $\bar{\pi}_k$ as defined in~\eqref{eq:pibar}. Under Assumptions~\ref{as:alpha},~\ref{as:aperiodic},~\ref{as:lin}, and~\ref{as:conv}, Theorem~3.1 of~\citet{chen2019finite} implies that for any agent~$i$, 
	\begin{align*}
		\E\left[\left\|\theta_{t_{k+1}}^i-\theta_{\bar{\pi}_k^{-i}}^i\right\|^2_2\right]\le \beta_1\left(1-\xi^i\eta_t^i/2\right)^{T_k-t_{{\rm mix},k}^i(\eta_t^i)} + 2\beta_2\eta_t^it_{{\rm mix},k}^i(\eta_t^i)/\xi^i,
	\end{align*}
	where $\beta_1 = (2D^i+1)^2$, $\beta_2 = 114\left(1+\gamma^i+r_{\max}^i\right)^2(D^i+1)^2$, and $\eta_t^i = \eta^i$ are such that $\eta^i\le \frac{\xi^i}{228\left(1+\gamma^i+r_{\max}^i\right)^2t_{{\rm mix},k}^i(\eta^i)}$. 
	
	For any given $\epsilon > 0$ and $0<\delta_0<1$, let $\eta^i_t$ and $T_k$ be such that 
	\begin{subequations}
		\begin{align}
			\eta^i_t&=\eta^i \le \frac{ \epsilon^2\delta_0\xi^i}{456\left(1+\gamma^i+r_{\max}^i\right)^2(D^i+1)^2t_{{\rm mix},k}^i(\eta^i)}, \quad\forall t=t_k,\ldots,t_{k+1}-1 \label{eq:etait}\\
			T_k&\ge t_{{\rm mix},k}^i(\eta^i) + \frac{\log \frac{\epsilon^2\delta_0}{2(2D^i+1)^2}}{\log\left(1-\xi^i\eta^i/2\right)},\quad\forall i\in[N].
		\end{align}
	\end{subequations}
	Then, we have that
	\begin{align*}
		\E\left[\left\|\theta_{t_{k+1}}^i-\theta_{\bar{\pi}_k^{-i}}^i\right\|^2_2\right]&\le \beta_1\left(1-\xi^i\eta_t^i/2\right)^{T_k-t_{{\rm mix},k}^i(\eta_t^i)} + 2\beta_2\eta_t^it_{{\rm mix},k}^i(\eta_t^i)/\xi^i
		\le \frac{\epsilon^2\delta_0}{2} + \frac{\epsilon^2\delta_0}{2} = \epsilon^2\delta_0,\quad \forall i\in[N].
	\end{align*}
	By Markov inequality, this further implies that
	\begin{align*}
		P\left[  \left\|\theta_{t_{k+1}}^i - \theta_{\bar{\pi}_k^{-i}}^{i}\right\|_2^2 \ge \epsilon^2 \right] \le \frac{\E\left[\left\|\theta_{t_{k+1}}^i-\theta_{\bar{\pi}_k^{-i}}^i\right\|^2_2\right]}{\epsilon^2}\le {\delta_0}, \quad \forall i\in[N],
	\end{align*}
	which is equivalent to 
	\begin{align*}
		P\left[  \left\|\theta_{t_{k+1}}^i - \theta_{\bar{\pi}_k^{-i}}^{i}\right\|_2 \ge \epsilon \right] \le \frac{\E\left[\left\|\theta_{t_{k+1}}^i-\theta_{\bar{\pi}_k^{-i}}^i\right\|^2_2\right]}{\epsilon^2}\le {\delta_0}, \quad \forall i\in[N].
	\end{align*}
	From the union bound, 
	\begin{align*}
		P\left[  \left\|\theta_{t_{k+1}}^i - \theta_{\bar{\pi}_k^{-i}}^{i}\right\|_{2} \ge \epsilon,\ \exists i\in[N]\right] \le \sum_{i\in[N]} P\left[  \left\|\theta_{t_{k+1}}^i - \theta_{\bar{\pi}_k^{-i}}^{i}\right\|_{2} \ge \epsilon\right]  \le   N\delta_0.
	\end{align*}
	Therefore,
	\begin{align*}
		P\left[\left\|\theta_{t_{k+1}}^i - \theta_{\bar{\pi}_k^{-i}}^{i}\right\|_{2}\le \epsilon,\ \forall i\in[N]\right] &\ge 1 - P\left[\left\|\theta_{t_{k+1}}^i - \theta_{\bar{\pi}_k^{-i}}^{i}\right\|_{2}\ge \epsilon,\ \exists i\in[N]\right]\ge 1-N\delta_0.
	\end{align*}
	Furthermore, since $\big|\theta_{t_{k+1}}^i - \theta_{\bar{\pi}_k^{-i}}^{i}\big|_{\infty}\le \left\|\theta_{t_{k+1}}^i - \theta_{\bar{\pi}_k^{-i}}^{i}\right\|_{2}$, we have that 
	\begin{align*}
		P\left[  \big|\theta_{t_{k+1}}^i - \theta_{\bar{\pi}_k^{-i}}^{i}\big|_{\infty} \le \epsilon, \ \forall i\in[N]  \right]\ge P\left[\left\|\theta_{t_{k+1}}^i - \theta_{\bar{\pi}_k^{-i}}^{i}\right\|_{2}\le \epsilon,\ \forall i\in[N]\right] \ge 1-N\delta_0.
	\end{align*}
	By taking $\hat{\delta} = N\delta_0$, $\eta^i_t$ as in~\eqref{eq:etait}, and 
	$$
	T_k \ge t_{{\rm mix},k}(\eta_{\min}) + \frac{\log \frac{\epsilon^2\hat{\delta}}{2N(2D+1)^2}}{\log\left(1-\xi_{\min}\eta_{\min}/2\right)} \ge t_{{\rm mix},k}^i(\eta^i) + \frac{\log \frac{\epsilon^2\delta_0}{2(2D^i+1)^2}}{\log\left(1-\xi^i\eta^i/2\right)},\quad\forall i\in[N],
	$$
	the proof is completed.
	%
	%
	%
\end{proof}

Lemma~\ref{lem:theta} bounds the approximation error of $\theta^i_{t_{k+1}}$ for each agent. By noting that $|\phi^i|_\infty \le \|\phi^i\|_2 \le 1$, Lemma~\ref{lem:theta} implies that for an arbitrary $\bar{\pi}_k$ as in~\eqref{eq:pibar} and for any $\epsilon>0$ and $0<\hat{\delta}<1$,
\begin{align}\label{eq:thetatoq}
	P\left[  \left|Q_{\theta_{t_{k+1}}^i}^i - Q^i_{\theta_{\bar{\pi}_k^{-i}}^{i}}\right|_{\infty} \le \epsilon, \ \forall i\in[N]  \right] = P\left[  \left|{\phi^i}^\top\theta_{t_{k+1}}^i - {\phi^i}^\top\theta_{\bar{\pi}_k^{-i}}^{i}\right|_{\infty} \le \epsilon, \ \forall i\in[N]  \right] \geq 1-\hat{\delta}
\end{align}
when the same conditions~\eqref{eq:Tetatheta} are satisfied.

Our next goal is to bound the approximation error of policy perturbation. 
Recall the definition of the randomized policy in~\eqref{eq:pibar}, and consider the joint policies of all agents except~$i$. With probability $\prod_{j\ne i}(1-\rho^j)$, all agents $j\ne i$ end up playing their baseline policies, which results in $\left|Q_{\theta_{\pi_k^{-i}}^i}^i - Q_{\theta_{\bar{\pi}_k^{-i}}^i}^{i}\right|_{\infty} = 0$, i.e. the approximation error of policy perturbation becomes zero in this case. When not all agents play their baseline policies, let $\varphi^{-i}\in\Delta^{-i}$ be some convex combination of the policies in $\Delta^{-i}$ of the form where each agent~$j\ne i$ either uses a baseline policy $\pi^j\in\Pi^j$ or the uniform distribution. More precisely, let $J$ denote the subset of agents choosing the baseline policies, and let 
\begin{align}\label{eq:phi-it}
	\varphi^{-i}=\sum_{J\subset \{1,\ldots,N\}\setminus \{i\}}a_J\varphi_J^{-i},
\end{align}
where $a_J:= \frac{\prod_{j\in J}(1-\rho^j)\prod_{j\notin J\cup\{i\}}\rho^j}{1-\prod_{j\ne i}(1-\rho^j)}$ and $\varphi_J\in\Delta^{-i}$ is such that $\varphi_J^j = \pi^j$ for $j\in J$ and $\varphi_J^j=\nu^j$ for $j\notin J\cup\{i\}$. Denote by $\bar{\Delta}^{-i}\subset\Delta^{-i}$ the set of all policies in the form of~\eqref{eq:phi-it}. Note that $\bar{\Delta}^{-i}$ is a finite set. We then define 
\begin{align}\label{eq:Gammat}
	\widetilde{\Gamma}:=\max_{\left(\pi^{-i},\varphi^{-i}\right)\in \Pi^{-i}\times \bar{\Delta}^{-i}}\left| \mathcal{T}_{\pi^{-i}}^i\Big(Q_{\theta_{\pi^{-i}}^i}^i\Big)-\mathcal{T}_{\varphi^{-i}}^i\Big(Q_{\theta_{\pi^{-i}}^i}^i\Big)\right|_{\infty}.
\end{align}
We next have the following lemma on the approximation error due to policy perturbation.
\begin{lemma}
	\label{lm:Qexptheta}
	Fix any arbitrary $\pi_k\in\Pi$. For any $\tilde{\epsilon}>0$, if $\rho^i$ satisfies
	\begin{align}\label{eq:rhoitheta}
		\rho^i\le 1 - \left(1-\frac{\tilde{\epsilon}(1-\bar{\gamma})}{\widetilde{\Gamma}}\right)^{\frac{1}{N-1}},\quad\forall i\in[N],
	\end{align}
	then, we have that
	$$\left|Q_{\theta_{\pi_k^{-i}}^i}^i - Q_{\theta_{\bar{\pi}_k^{-i}}^i}^{i}\right|_{\infty} = \left|{\phi^i}^\top \theta^i_{\pi_k^{-i}}-{\phi^i}^\top \theta^i_{\bar{\pi}_k^{-i}}\right|_\infty \leq \tilde{\epsilon} + \frac{2b}{1-\bar{\gamma}}, \quad \forall i\in[N], k\in[K].$$
\end{lemma}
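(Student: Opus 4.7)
The plan is to mimic the proof of Lemma~\ref{lm:Qexp}, but with two modifications: (i) account for the fact that $Q^i_{\theta^i_{\pi^{-i}}}$ is \emph{not} a true Bellman fixed point but only its best linear approximation (with minimum Bellman error at most $b$), and (ii) use the definition~\eqref{eq:Gammat} of $\widetilde{\Gamma}$ in place of $\Gamma$.

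First I would insert $\Tcal^i_{\pi_k^{-i}}(Q^i_{\theta^i_{\pi_k^{-i}}})$, $\Tcal^i_{\bar{\pi}_k^{-i}}(Q^i_{\theta^i_{\pi_k^{-i}}})$, and $\Tcal^i_{\bar{\pi}_k^{-i}}(Q^i_{\theta^i_{\bar{\pi}_k^{-i}}})$ and apply the triangle inequality, obtaining
\begin{align*}
\left|Q^i_{\theta^i_{\pi_k^{-i}}}-Q^i_{\theta^i_{\bar{\pi}_k^{-i}}}\right|_\infty &\le \left|Q^i_{\theta^i_{\pi_k^{-i}}}-\Tcal^i_{\pi_k^{-i}}\!\left(Q^i_{\theta^i_{\pi_k^{-i}}}\right)\right|_\infty + \left|\Tcal^i_{\pi_k^{-i}}\!\left(Q^i_{\theta^i_{\pi_k^{-i}}}\right)-\Tcal^i_{\bar{\pi}_k^{-i}}\!\left(Q^i_{\theta^i_{\pi_k^{-i}}}\right)\right|_\infty \\
&\quad + \left|\Tcal^i_{\bar{\pi}_k^{-i}}\!\left(Q^i_{\theta^i_{\pi_k^{-i}}}\right)-\Tcal^i_{\bar{\pi}_k^{-i}}\!\left(Q^i_{\theta^i_{\bar{\pi}_k^{-i}}}\right)\right|_\infty + \left|\Tcal^i_{\bar{\pi}_k^{-i}}\!\left(Q^i_{\theta^i_{\bar{\pi}_k^{-i}}}\right)-Q^i_{\theta^i_{\bar{\pi}_k^{-i}}}\right|_\infty.
\end{align*}

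Next I would bound each of the four pieces. The first and fourth terms are exactly Bellman residuals of the best linear fit, and by the definition of the minimum Bellman error bound~\eqref{eq:b} each is at most $b$ (applied to $\pi_k^{-i}\in\Pi^{-i}$ and $\bar{\pi}_k^{-i}\in\bar{\Pi}^{-i}$ respectively). For the second term, the same decomposition as in the proof of Lemma~\ref{lm:Qexp} applies: with probability $\prod_{j\ne i}(1-\rho^j)$ we have $\bar{\pi}_k^{-i}=\pi_k^{-i}$ (zero contribution), and otherwise $\bar{\pi}_k^{-i}\in\bar{\Delta}^{-i}$, so this term is at most $\left(1-\prod_{j\ne i}(1-\rho^j)\right)\widetilde{\Gamma}$. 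The third term is handled by the $\gamma^i$-contraction of $\Tcal^i_{\bar{\pi}_k^{-i}}$, giving $\gamma^i\,\big|Q^i_{\theta^i_{\pi_k^{-i}}}-Q^i_{\theta^i_{\bar{\pi}_k^{-i}}}\big|_\infty$.

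Moving the contraction term to the left-hand side and dividing by $(1-\gamma^i)\ge (1-\bar{\gamma})$ yields
\begin{align*}
\left|Q^i_{\theta^i_{\pi_k^{-i}}}-Q^i_{\theta^i_{\bar{\pi}_k^{-i}}}\right|_\infty \le \frac{2b}{1-\bar{\gamma}}+\frac{\left(1-\prod_{j\ne i}(1-\rho^j)\right)\widetilde{\Gamma}}{1-\bar{\gamma}}.
\end{align*}
The hypothesis~\eqref{eq:rhoitheta} implies $(1-\rho^j)\ge \left(1-\tilde{\epsilon}(1-\bar{\gamma})/\widetilde{\Gamma}\right)^{1/(N-1)}$ for every $j$, so $\prod_{j\ne i}(1-\rho^j)\ge 1-\tilde{\epsilon}(1-\bar{\gamma})/\widetilde{\Gamma}$, and the second term above collapses to $\tilde{\epsilon}$, proving the claim.

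The main obstacle here is just bookkeeping the Bellman-error contributions cleanly: unlike the tabular Lemma~\ref{lm:Qexp}, both endpoints are approximations, which is why two copies of $b$ (one for each endpoint) appear and ultimately give the additive $2b/(1-\bar{\gamma})$ slack. Everything else transfers verbatim from the tabular case.
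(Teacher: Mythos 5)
Your proposal is correct and follows essentially the same route as the paper's proof: the same insertion of Bellman operators, the same bounding of the two Bellman residuals by $b$ each, the same splitting of the operator difference into a policy-perturbation term bounded by $\left(1-\prod_{j\ne i}(1-\rho^j)\right)\widetilde{\Gamma}$ and a $\gamma^i$-contraction term, and the same rearrangement. The only (cosmetic) difference is that you apply a single four-term triangle inequality where the paper does the decomposition in two stages.
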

\begin{proof}[Proof of Lemma~\ref{lm:Qexptheta}]\
	First note that, for all $i\in[N]$ and $k\in[K]$,
	\begin{align}
		\left|Q_{\theta_{\pi_k^{-i}}^i}^i -  Q_{\theta_{\bar{\pi}_k^{-i}}^i}^{i}\right|_{\infty}
		&= \left|Q_{\theta_{\pi_k^{-i}}^i}^i-\mathcal{T}_{\pi_k^{-i}}^i\Big(Q_{\theta_{\pi_k^{-i}}^i}^i\Big) + \mathcal{T}_{\pi_k^{-i}}^i\Big(Q_{\theta_{\pi_k^{-i}}^i}^i\Big) - \mathcal{T}_{\bar{\pi}_k^{-i}}^i\Big(Q_{\theta_{\bar{\pi}_k^{-i}}^i}^{i}\Big)+ \mathcal{T}_{\bar{\pi}_k^{-i}}^i\Big(Q_{\theta_{\bar{\pi}_k^{-i}}^i}^{i}\Big)-Q_{\theta_{\bar{\pi}_k^{-i}}^i}^{i}\right|_\infty \nonumber\\
		&= \left|Q_{\theta_{\pi_k^{-i}}^i}^i-\mathcal{T}_{\pi_k^{-i}}^i\Big(Q_{\theta_{\pi_k^{-i}}^i}^i\Big)\right|_\infty + \left| \mathcal{T}_{\pi_k^{-i}}^i\Big(Q_{\theta_{\pi_k^{-i}}^i}^i\Big) - \mathcal{T}_{\bar{\pi}_k^{-i}}^i\Big(Q_{\theta_{\bar{\pi}_k^{-i}}^i}^{i}\Big)\right|_\infty + \left| \mathcal{T}_{\bar{\pi}_k^{-i}}^i\Big(Q_{\theta_{\bar{\pi}_k^{-i}}^i}^{i}\Big)- Q_{\theta_{\bar{\pi}_k^{-i}}^i}^{i}\right|_\infty \nonumber\\
		&\le b + \left| \mathcal{T}_{\pi_k^{-i}}^i\Big(Q_{\theta_{\pi_k^{-i}}^i}^i\Big) - \mathcal{T}_{\bar{\pi}_k^{-i}}^i\Big(Q_{\theta_{\bar{\pi}_k^{-i}}^i}^{i}\Big)\right|_\infty + b \nonumber\\
		&\leq \left| \mathcal{T}_{\pi_k^{-i}}^i\Big(Q_{\theta_{\pi_k^{-i}}^i}^i\Big) - \mathcal{T}_{\bar{\pi}_k^{-i}}^i\Big(Q_{\theta_{{\pi}_k^{-i}}^i}^{i}\Big)\right|_\infty + \left| \mathcal{T}_{\bar{\pi}_k^{-i}}^i\Big(Q_{\theta_{\pi_k^{-i}}^i}^i\Big) - \mathcal{T}_{\bar{\pi}_k^{-i}}^i\Big(Q_{\theta_{\bar{\pi}_k^{-i}}^i}^{i}\Big)\right|_\infty+ 2b.\label{eq:TQtheta}
	\end{align}
	By definition of $\bar{\pi}_k^{-i}$, we have that $P\left[\bar{\pi}_k^{-i} = \pi_k^{-i}\right] = \prod_{j\ne i}(1-\rho^j)$. With probability $1-\prod_{j\ne i}(1-\rho^j)$, $\bar{\pi}_k^{-i}\ne \pi_k^{-i}$ and $\bar{\pi}_k^{-i}\in \bar{\Delta}^{-i}$. Thus, the first term of~\eqref{eq:TQtheta} can be bounded by
	\begin{align}\label{eq:TQ1theta}
		\left| \mathcal{T}_{\pi_k^{-i}}^i\Big(Q_{\theta_{\pi_k^{-i}}^i}^i\Big) - \mathcal{T}_{\bar{\pi}_k^{-i}}^i\Big(Q_{\theta_{{\pi}_k^{-i}}^i}^{i}\Big)\right|_\infty\le \left(1-\prod_{j\ne i}(1-\rho^j)\right)\times \left| \mathcal{T}_{\pi_k^{-i}}^i\Big(Q_{\theta_{\pi_k^{-i}}^i}^i\Big)-\mathcal{T}_{\varphi_k^{-i}}^i\Big(Q_{\theta_{\pi_k^{-i}}^i}^i\Big)\right|_{\infty},
	\end{align}
	for some $\phi_k^{-i}\in\bar{\Delta}^{-i}$.
	On the other hand, by the contraction mapping of the Bellman operator, we have that
	\begin{align}\label{eq:TQ2theta}
		\left| \mathcal{T}_{\bar{\pi}_k^{-i}}^i\Big(Q_{\theta_{\pi_k^{-i}}^i}^i\Big) - \mathcal{T}_{\bar{\pi}_k^{-i}}^i\Big(Q_{\theta_{\bar{\pi}_k^{-i}}^i}^{i}\Big)\right|_\infty\le \gamma^i\left|Q_{\theta_{\pi_k^{-i}}^i}^i - Q_{\theta_{\bar{\pi}_k^{-i}}^i}^{i}\right|_{\infty}.
	\end{align}
	Substituting~\eqref{eq:TQ1theta} and~\eqref{eq:TQ2theta} back into~\eqref{eq:TQtheta}, we have that
	\begin{align*}
		\left| Q_{\theta^i_{\pi_k^{-i}}}^i - Q_{\theta^i_{\bar{\pi}_k^{-i}}}^{i}\right|_{\infty}&\le \left(1-\prod_{j\ne i}(1-\rho^j)\right)\times \left| \mathcal{T}_{\pi_k^{-i}}^i\Big(Q_{\theta_{\pi_k^{-i}}^i}^i\Big)-\mathcal{T}_{\varphi_k^{-i}}^i\Big(Q_{\theta_{\pi_k^{-i}}^i}^i\Big)\right|_{\infty} + \gamma^i\left|Q_{\theta_{\pi_k^{-i}}^i}^i - Q_{\theta_{\bar{\pi}_k^{-i}}^i}^{i}\right|_{\infty}+2b\\
		&\le \left(1-\prod_{j\ne i}(1-\rho^j)\right) \widetilde{\Gamma} + \gamma^i\left|Q_{\theta_{\pi_k^{-i}}^i}^i - Q_{\theta_{\bar{\pi}_k^{-i}}^i}^{i}\right|_{\infty} + 2b,
	\end{align*}
	which implies that
	\begin{align*}
			\left| Q_{\theta^i_{\pi_k^{-i}}}^i - Q_{\theta^i_{\bar{\pi}_k^{-i}}}^{i}\right|_{\infty}&\le \frac{\left(1-\prod_{j\ne i}(1-\rho^j)\right) \widetilde{\Gamma}+2b}{1-\gamma^i}\le \frac{\left(1-\prod_{j\ne i}(1-\rho^j)\right) \widetilde{\Gamma}+2b}{1-\bar{\gamma}}.
	\end{align*}
	If for all~$i\in[N]$, $\rho^i\le 1 - \left(1-\frac{\tilde{\epsilon}(1-\bar{\gamma})}{\widetilde{\Gamma}}\right)^{\frac{1}{N-1}}$, then, we have that $1-\rho^j\ge \left(1-\frac{\tilde{\epsilon}(1-\bar{\gamma})}{\widetilde{\Gamma}}\right)^{\frac{1}{N-1}}$, which implies that $\prod_{j\ne i}(1-\rho^j) \ge 1-\frac{\tilde{\epsilon}(1-\bar{\gamma})}{\widetilde{\Gamma}}$, and thus
	\begin{align*}
		\left| Q_{\theta^i_{\pi_k^{-i}}}^i - Q_{\theta^i_{\bar{\pi}_k^{-i}}}^{i}\right|_{\infty}\le \frac{\left(1-\prod_{j\ne i}(1-\rho^j)\right) \widetilde{\Gamma}+2b}{1-\bar{\gamma}}\le\tilde{\epsilon} + \frac{2b}{1-\bar{\gamma}}.
	\end{align*}
	The above holds for all $i\in[N]$ and $k\in[K]$, which completes the proof.
\end{proof}

Recall from~\eqref{eq:zetabart} that $\bar{\zeta}_\theta$ is the minimum separation of agents' optimal linear approximated Q-factors.
By Assumption~\ref{as:bound}, we have that $\bar{\zeta}_\theta>\frac{8b}{1-\bar{\gamma}}$. We consider $\bar{\zeta}_\theta$ as an upper bound on $\zeta^i_\theta$ for all $i$. We next define the following random event for any arbitrary $\pi_k\in\Pi$:
\begin{align*}
	\widetilde{E}_k:=\left\{\omega\in\Omega:  \left|Q_{\theta^i_{t_{k+1}}}^i - Q_{\theta^i_{\pi_k^{-i}}}^{i}\right|_{\infty} <  \frac{1}{4}\min\{\zeta^i_\theta,\bar{\zeta}_\theta-\zeta^i_\theta\},  \forall i   \right\}.
\end{align*}
With this definition of $\widetilde{E}_k$, we show that, if $\widetilde{E}_k$ is not empty and $\pi_k\in\widetilde{\Pi}_{\rm eq}$, then $\pi_{k+1} = \pi_k$ with probability~$1$. 
\begin{lemma}\label{lm:pik1t}
	Given any $\pi_k\in\Pi$ and the corresponding $\widetilde{E}_k$, for all $k$, we have that
	$$P\left[\pi_{k+1} = \pi_k\mid \widetilde{E}_k,\ \pi_k\in\widetilde{\Pi}_{\rm eq}\right] = 1.$$
\end{lemma}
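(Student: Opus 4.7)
The plan is to mirror the proof of Lemma~\ref{lm:pik1} almost verbatim, replacing the tabular $Q$-functions by their linear surrogates $Q^i_{\theta^i}$. The essential observation is that the algorithmic rule for deciding whether $\pi_k^i\in\widetilde{\Pi}_{k+1}^i$ (lines~10--13 of Algorithm~\ref{al:2}) is structurally the same as in Algorithm~\ref{al:1}: we only need to show that, conditioned on $\widetilde{E}_k$ and $\pi_k\in\widetilde{\Pi}_{\rm eq}$, the quantity $\phi^i(s,\pi_k^i(s))^\top\theta^i_{t_{k+1}}$ sits within $\tfrac{1}{2}\zeta^i_\theta$ of $\max_{a^i}\phi^i(s,a^i)^\top\theta^i_{t_{k+1}}$ for every $s$ and every $i$. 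Once this is established, lines~11--13 force $\pi_{k+1}^i=\pi_k^i$ deterministically for each $i$, giving $\pi_{k+1}=\pi_k$ with probability one.

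To carry this out, I would fix an agent $i$ and a state $s$, and let $\hat{a}^{i*}:=\argmax_{a^i\in\mathcal{A}^i}Q^i_{\theta^i_{t_{k+1}}}(s,a^i)$. Then I would write the suboptimality telescopingly as
\begin{align*}
&\max_{a^i}Q^i_{\theta^i_{t_{k+1}}}(s,a^i)-Q^i_{\theta^i_{t_{k+1}}}(s,\pi_k^i(s))\\
&\quad=\Bigl[Q^i_{\theta^i_{t_{k+1}}}(s,\hat{a}^{i*})-Q^i_{\theta^i_{\pi_k^{-i}}}(s,\hat{a}^{i*})\Bigr]+\Bigl[Q^i_{\theta^i_{\pi_k^{-i}}}(s,\hat{a}^{i*})-Q^i_{\theta^i_{\pi_k^{-i}}}(s,\pi_k^i(s))\Bigr]+\Bigl[Q^i_{\theta^i_{\pi_k^{-i}}}(s,\pi_k^i(s))-Q^i_{\theta^i_{t_{k+1}}}(s,\pi_k^i(s))\Bigr].
\end{align*}
The outer two bracketed terms are each bounded in absolute value by $\tfrac{1}{4}\min\{\zeta^i_\theta,\bar{\zeta}_\theta-\zeta^i_\theta\}$ by the very definition of the event $\widetilde{E}_k$. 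The middle bracketed term is nonpositive: because $\pi_k\in\widetilde{\Pi}_{\rm eq}$, the characterization~\eqref{eq:pipi-i} gives $Q^i_{\theta^i_{\pi_k^{-i}}}(s,\pi_k^i(s))=\max_{a^i}Q^i_{\theta^i_{\pi_k^{-i}}}(s,a^i)\ge Q^i_{\theta^i_{\pi_k^{-i}}}(s,\hat{a}^{i*})$. Summing, the total gap is strictly less than $\tfrac{1}{2}\min\{\zeta^i_\theta,\bar{\zeta}_\theta-\zeta^i_\theta\}\le\tfrac{1}{2}\zeta^i_\theta$, which is exactly the condition appearing in the definition of $\widetilde{\Pi}_{k+1}^i$ (line~10 of Algorithm~\ref{al:2}). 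Hence $\pi_k^i\in\widetilde{\Pi}_{k+1}^i$.

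Finally, because the above holds for every agent $i$, the ``if'' branch of lines~11--13 of Algorithm~\ref{al:2} triggers for each agent simultaneously, forcing $\pi_{k+1}^i=\pi_k^i$ deterministically, and therefore $\pi_{k+1}=\pi_k$ almost surely. I do not anticipate any genuine obstacle here: the argument is a direct transcription of Lemma~\ref{lm:pik1} under the substitution $Q^i_{\pi_k^{-i}}\mapsto Q^i_{\theta^i_{\pi_k^{-i}}}$ and $Q^i_{t_{k+1}}\mapsto Q^i_{\theta^i_{t_{k+1}}}$, and no Bellman-error term enters because the definition of $\widetilde{\Pi}_{\rm eq}$ is phrased directly in terms of the projected $Q$-functions rather than the true ones. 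The minimum-separation parameter $\bar{\zeta}_\theta$ plays the same bookkeeping role that $\bar{\zeta}$ did in the tabular case and is not used at this stage (it will matter later, in the analog of Lemma~\ref{lm:p0}, to separate non-equilibrium policies from their out-neighbors).
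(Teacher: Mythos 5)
Your proposal is correct and is essentially the paper's own argument: the same telescoping around $\hat{a}^{i*}$, the same use of the event $\widetilde{E}_k$ to bound the two approximation terms by $\tfrac{1}{4}\min\{\zeta^i_\theta,\bar{\zeta}_\theta-\zeta^i_\theta\}$ each, and the same observation that the equilibrium condition makes the middle term nonpositive, so that line~10 of Algorithm~\ref{al:2} places $\pi_k^i$ in $\widetilde{\Pi}_{k+1}^i$ for every $i$. Your side remark that no Bellman-error term enters here (because $\widetilde{\Pi}_{\rm eq}$ is defined directly via the projected $Q$-functions) matches the paper, which likewise defers all Bellman-error considerations to the analog of Lemma~\ref{lm:p0}.
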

\begin{proof}[Proof of Lemma~\ref{lm:pik1t}]\
	Let $\hat{a}^{i*} := \argmax_{\hat{a}^i}Q^i_{t_{k+1}}\left(s,\hat{a}^i\right)$. Then, conditioned on $E_k$ and $\pi_k\in\widetilde{\Pi}_{\rm eq}$, we have that
	\begin{align*}
		&\qquad \max_{\hat{a}^i}Q^i_{\theta^i_{t_{k+1}}}\left(s,\hat{a}^i\right) -Q^i_{\theta^i_{t_{k+1}}}\left(s,\pi_k^i(s)\right)\\
		 &= Q^i_{\theta^i_{t_{k+1}}}\left(s,\hat{a}^{i*}\right) -Q^i_{\theta^i_{t_{k+1}}}\left(s,\pi_k^i(s)\right)\\ &=\left[Q^i_{\theta^i_{t_{k+1}}}\left(s,\hat{a}^{i*}\right) - Q^i_{\theta^i_{\pi_k^{-i}}}\left(s,\pi_k^i(s)\right)\right] + \left[Q^i_{\theta^i_{\pi_k^{-i}}}\left(s,\pi_k^i(s)\right)- Q^i_{\theta^i_{t_{k+1}}}\left(s,\pi_k^i(s)\right)\right] \\
		&<Q^i_{\theta^i_{t_{k+1}}}\left(s,\hat{a}^{i*}\right) - Q^i_{\theta^i_{\pi_k^{-i}}}\left(s,\pi_k^i(s)\right)+\frac{1}{4}\min\left\{\zeta^i_\theta,\bar{\zeta}_\theta-\zeta^i_\theta\right\}\\
		&<\left[Q^i_{\theta^i_{t_{k+1}}}\left(s,\hat{a}^{i*}\right) - Q^i_{\theta^i_{\pi_k^{-i}}}\left(s,\hat{a}^{i*}\right)\right] + \left[Q^i_{\theta^i_{\pi_k^{-i}}}\left(s,\hat{a}^{i*}\right)- Q^i_{\theta^i_{\pi_k^{-i}}}\left(s,\pi_k^i(s)\right)\right]+\frac{1}{4}\min\left\{\zeta^i_\theta,\bar{\zeta}_\theta-\zeta^i_\theta\right\}\\
		&< \frac{1}{4}\min\left\{\zeta^i_\theta,\bar{\zeta}_\theta-\zeta^i_\theta\right\} + \frac{1}{4}\min\left\{\zeta^i_\theta,\bar{\zeta}_\theta-\zeta^i_\theta\right\}\\
		&\le \frac{1}{2}\min\left\{\zeta^i_\theta,\bar{\zeta}_\theta-\zeta^i_\theta\right\},
	\end{align*}
	where the second-to-last inequality follows since $Q^i_{\theta^i_{\pi_k^{-i}}}\left(s,\hat{a}^{i*}\right)- Q^i_{\theta^i_{\pi_k^{-i}}}\left(s,\pi_k^i(s)\right)<0$, which follows from $\pi_k\in\widetilde{\Pi}_{\rm eq}$. 
	It follows that $Q^i_{\theta^i_{t_{k+1}}}\left(s,\pi_k^i(s)\right)\ge \max_{\hat{a}^i}Q^i_{\theta^i_{t_{k+1}}}\left(s,\hat{a}^i\right) - \frac{1}{2}\zeta^i_\theta$ for all~$i$. Then, by Algorithm~\ref{al:2} (lines~10-12), we have that $\pi_{k+1} = \pi_k$ with probability~$1$.
\end{proof}
Recall that $\widetilde{L}$ is the maximum length of the shortest strict best reply path from any policy to a linear approximated equilibrium policy. Our next lemma lower bounds the conditional probability of $\pi_{k+L}$ being a linear approximated equilibrium policy, given that $\pi_k$ is not a linear approximated equilibrium policy and given $\widetilde{E}_k,\ldots,\widetilde{E}_{k+\widetilde{L}-1}$.
\begin{lemma}\label{lm:p0t}
	Let 
	\begin{align}\label{eq:phatt}
		\tilde{p} := \left(\min_{j\in \{1,\ldots,N\}}\left\{ \frac{1-\lambda^j}{\left|\Pi^{j}\right|}\cdot \prod_{i\ne j}\lambda^i\right\}\right)^{\widetilde{L}} = \hat{p}^{\widetilde{L}/L}.
	\end{align}
	We have that
	\begin{align}
		P\left[  \pi_{k+\widetilde{L}} \in \widetilde{\Pi}_{\rm eq} \ \big| \ \widetilde{E}_k,\dots,\widetilde{E}_{k+\widetilde{L}-1},   \pi_k\not\in\widetilde{\Pi}_{\rm eq} \right] \geq \tilde{p}.\label{eq:ne2et}
	\end{align}
\end{lemma}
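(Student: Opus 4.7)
The proof plan will closely track that of Lemma~\ref{lm:p0}, with each tabular object replaced by its linear-approximation counterpart: the best-reply set $\widetilde{\Pi}^i_{\pi^{-i}}$ from \eqref{eq:pipi-i} in place of $\Pi^i_{\pi^{-i}}$, the linearly approximated optimal $Q$-function $Q^i_{\theta^i_{\pi^{-i}}}$ in place of $Q^i_{\pi^{-i}}$, the separation $\bar{\zeta}_\theta$ from \eqref{eq:zetabart} in place of $\bar{\zeta}$, and the length $\widetilde{L}$ of the longest shortest strict best-reply path in place of $L$. The exponent $\widetilde{L}$ appearing in $\tilde{p}$ will enter exactly the same way as $L$ does in $\hat{p}$, producing the identity $\tilde{p} = \hat{p}^{\widetilde{L}/L}$.

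The first step will be to establish the key observation: conditioned on $\widetilde{E}_k$, for any agent $i$ with $\pi_k^i \notin \widetilde{\Pi}^i_{\pi_k^{-i}}$, one also has $\pi_k^i \notin \widetilde{\Pi}^i_{k+1}$, forcing the ``else'' branch of Algorithm~\ref{al:2} (line~14) to be executed. By definition of $\bar{\zeta}_\theta$ in \eqref{eq:zetabart}, any two distinct values in $\{Q^i_{\theta^i_{\pi_k^{-i}}}(s,a^i):a^i \in \mathcal{A}^i\}$ differ by at least $\bar{\zeta}_\theta$, whereas $\widetilde{E}_k$ guarantees $\bigl|Q^i_{\theta^i_{t_{k+1}}} - Q^i_{\theta^i_{\pi_k^{-i}}}\bigr|_\infty \le \bar{\zeta}_\theta/8$ under the choice $\zeta^i_\theta=\bar{\zeta}_\theta/2$. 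A two-sided triangle inequality then yields, for some $s$, a gap exceeding $3\bar{\zeta}_\theta/4 > \zeta^i_\theta/2$ between $\max_{\hat{a}^i}\phi^i(s,\hat{a}^i)^\top \theta^i_{t_{k+1}}$ and $\phi^i(s,\pi_k^i(s))^\top \theta^i_{t_{k+1}}$, mirroring the tabular argument. An analogous one-sided estimate will establish the matching containment $\widetilde{\Pi}^i_{\pi_k^{-i}} \subseteq \widetilde{\Pi}^i_{k+1}$, so that the randomized update in line~15 assigns mass at least $(1-\lambda^i)/|\Pi^i|$ to each element of $\widetilde{\Pi}^i_{\pi_k^{-i}}$.

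Third, I will chain these single-step transitions along a shortest strict best-reply path $\pi_k = \hat{\pi}_0,\hat{\pi}_1,\ldots,\hat{\pi}_l$ of length $l \le \widetilde{L}$ ending in $\hat{\pi}_l \in \widetilde{\Pi}_{\rm eq}$, whose existence is guaranteed by weak acyclicity under \eqref{eq:pipi-i} together with Assumption~\ref{as:exist}. Letting $i_n$ denote the updating agent at step $n$, the probability of following this specific path across the $l$ steps is at least $\prod_{n=1}^{l}\left[(1-\lambda^{i_n})/|\Pi^{i_n}|\cdot \prod_{j\ne i_n}\lambda^j\right]$, using that each non-updating agent $j$ keeps her policy with probability at least $\lambda^j$ (either via the ``if'' branch when her current policy lies in her estimated best-reply set, or via the inertia term in the ``else'' branch). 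For the remaining $\widetilde{L}-l$ steps, Lemma~\ref{lm:pik1t} (applied inductively) ensures that, conditioned on $\widetilde{E}_{k+l},\ldots,\widetilde{E}_{k+\widetilde{L}-1}$ and on having reached $\hat{\pi}_l \in \widetilde{\Pi}_{\rm eq}$, the joint policy remains at $\hat{\pi}_l$ with probability one. Taking the minimum over agents $j\in\{1,\ldots,N\}$ and raising to the $\widetilde{L}$-th power yields the claimed lower bound $\tilde{p}$.

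The main obstacle, and the place where the analysis departs from the tabular proof in substance, is Step~1: the event $\widetilde{E}_k$ compares $Q^i_{\theta^i_{t_{k+1}}}$ only to the projected surrogate $Q^i_{\theta^i_{\pi_k^{-i}}}$, not to the true optimal $Q^i_{\pi_k^{-i}}$, and the representation error $b$ enters only indirectly, through Lemma~\ref{lm:Qexptheta}, when one bridges to the behavior-policy TD fixed point. The role of Assumption~\ref{as:bound} (which enforces $b < (1-\bar{\gamma})\bar{\zeta}_\theta/8$) is exactly to guarantee that $\bar{\zeta}_\theta$ remains large enough, after absorbing both the stochastic-approximation and representation-error slacks, for the separation argument of Step~1 to remain meaningful; without it, a non-equilibrium policy could fail to be flagged by the criterion in line~10 of Algorithm~\ref{al:2}, and the algorithm could stall away from $\widetilde{\Pi}_{\rm eq}$, breaking the chain constructed in Step~3.
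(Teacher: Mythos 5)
Your proposal is correct and follows essentially the same route as the paper's proof: the same key observation that conditioning on $\widetilde{E}_k$ together with the $\bar{\zeta}_\theta$-separation forces any non-best-reply $\pi_k^i$ out of $\widetilde{\Pi}^i_{k+1}$ (and preserves the action ordering, hence $\widetilde{\Pi}^i_{\pi_k^{-i}}\subseteq\widetilde{\Pi}^i_{k+1}$), the same chaining of inertia probabilities along a shortest strict best-reply path of length $l\le\widetilde{L}$, and the same appeal to Lemma~\ref{lm:pik1t} for the remaining $\widetilde{L}-l$ steps. Your explicit statement of the containment $\widetilde{\Pi}^i_{\pi_k^{-i}}\subseteq\widetilde{\Pi}^i_{k+1}$ is a small but welcome clarification of a step the paper leaves implicit.
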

\begin{proof}[Proof of Lemma~\ref{lm:p0t}]\
	The proof is similar to that of Lemma~\ref{lm:p0}. Consider some $\pi_{k}\notin \widetilde{\Pi}_{\rm eq}$; there must exist at least one agent, say $i$, whose policy $\pi^i_k$ is not the best reply to $\pi_k^{-i}$, i.e., $\pi^i_k\notin \widetilde{\Pi}^i_{\pi_k^{-i}}$. In this case, we claim that $\pi_k^i \notin \widetilde{\Pi}^i_{k+1}$, where $\widetilde{\Pi}^i_{k+1}$ is as defined in Algorithm~\ref{al:2} (line~10). In other words, the ``else'' statement in Algorithm~\ref{al:2} (line~14) will be executed. To see this, it suffices to show that $\phi_{t_{k+1}}^i(s,\pi_k^i(s))^\top \theta^i_{t_{k+1}}< \max_{a^i\in\mathcal{A}^i}\phi_{t_{k+1}}^i(s,a^i)^\top \theta^i_{t_{k+1}}-\frac{1}{2}\zeta^i_{\theta}$, for some $s\in \mathcal{S}$. 
	Conditioned on $\widetilde{E}_k$, we have that
	$$
	Q^i_{\theta^i_{\pi_k^{-i}}}(s,a^i) - \frac{1}{4}\min\{\zeta^i_{\theta}, \bar{\zeta}_{\theta}-\zeta^i_{\theta}\} < Q^i_{\theta^i_{t_{k+1}}}(s,a^i) < Q^i_{\theta^i_{\pi_k^{-i}}}(s,a^i) + \frac{1}{4}\min\{\zeta^i_{\theta}, \bar{\zeta}_{\theta}-\zeta^i_{\theta}\},
	$$ 
	i.e., $Q^i_{\theta^i_{t_{k+1}}}(s,a^i)$ lies within a distance of $\frac{1}{4}\min\{\zeta^i_{\theta}, \bar{\zeta}_{\theta}-\zeta^i_{\theta}\}$ to $Q^i_{\theta^i_{\pi_k^{-i}}}(s,a^i)$. Moreover, we note that $\frac{1}{4}\min\{\zeta^i_{\theta}, \bar{\zeta}_{\theta}-\zeta^i_{\theta}\} \leq \frac{1}{8} \bar{\zeta}_{\theta}$. 
	Recall that $\{Q^i_{\theta^i_{\pi_k^{-i}}}(s,a^i): a^i\in \mathcal{A}^i\}$ are dispersed with spacing being at least $\bar{\zeta}_{\theta}$, where $\bar{\zeta}_{\theta}$ is as defined in~\eqref{eq:zetabart} as the minimum separation between the approximated $Q$-factors. Thus, it follows that the possible ranges of $Q^i_{\theta^i_{t_{k+1}}}(s,a^i)$ for all $a^i\in\mathcal{A}^i$ are mutually exclusive, which implies that the $\tau$-th best action under $Q^i_{\theta^i_{\pi_k^{-i}}}$ is identical to that under $Q^i_{\theta^i_{t_{k+1}}}$, i.e., $$
	\argmax_{a^i\in\mathcal{A}^i}\Big(Q^i_{\theta^i_{\pi_k^{-i}}}(s,a^i)\Big)_{(\tau)} = \argmax_{a^i\in\mathcal{A}^i}\Big(Q^i_{\theta^i_{t_{k+1}}}(s,a^i)\Big)_{(\tau)},
	$$ 
	where $(\cdot)_{(\tau)}$ represents the $\tau$-th largest value. For instance, when $\tau=1$, we have that $\argmax_{a^i\in\mathcal{A}^i}Q^i_{\theta^i_{\pi_k^{-i}}}(s,a^i)= \argmax_{a^i\in\mathcal{A}^i}Q^i_{\theta^i_{t_{k+1}}}(s,a^i)$, which are denoted by $a^{i*}_{\theta^i_{\pi_k^{-i}}}(s)$ and $a^{i*}_{\theta^i_{t_{k+1}}}(s)$, respectively. 
	
	Since $\pi^i_k\notin \widetilde{\Pi}^i_{\pi_k^{-i}}$, it follows that $\pi^i_k(s)\neq \argmax_{a^i\in\mathcal{A}^i} Q^i_{\theta^i_{\pi_k^{-i}}}(s,a^i)=:a^{i*}_{\theta^i_{\pi_k^{-1}}}(s)$ for some $s\in\mathcal{S}$. Then, we have that
	\begin{align*}
		\max_{a^i\in\mathcal{A}^i}Q_{\theta^i_{t_{k+1}}}^i(s,a^i) - Q_{\theta^i_{t_{k+1}}}^i(s,\pi^i_k(s)) &> \left(\max_{a^i\in\mathcal{A}^i}Q_{\theta^i_{\pi_k^{-i}}}^i(s,a^i) - \frac{1}{8}\bar{\zeta}_{\theta}\right) - \left(Q_{\theta^i_{\pi_k^{-i}}}^i(s,\pi^i_k(s)) + \frac{1}{8}\bar{\zeta}_{\theta}\right)\\
		&= \left(Q_{\theta^i_{\pi_k^{-i}}}^i\Big(s,a^{i*}_{\theta^i_{\pi_k^{-i}}}(s)\Big) - Q_{\theta^i_{\pi_k^{-i}}}^i(s,\pi^i_k(s))\right) -  \frac{1}{4}\bar{\zeta}_{\theta}\\
		&\geq \bar{\zeta}_{\theta} - \frac{1}{4}\bar{\zeta}_{\theta} =\frac{3}{4}\bar{\zeta}_{\theta} \geq \frac{3}{4}\zeta^i_{\theta} >  \frac{1}{2}\zeta^i_{\theta}
	\end{align*}
	as desired. Now, we are ready to prove the statement.
	
	Let $l$ be the length of the shortest strict best reply path from $\pi_k$ to a linear approximated equilibrium policy. Then, $l\le \widetilde{L}$. Let the sequence of policies along the path be $\pi_0,\pi_1,\ldots, \pi_l$, with $\pi_0 = \pi_k\notin \widetilde{\Pi}_{\rm eq}$ and $\pi_l\in\widetilde{\Pi}_{\rm eq}$. Further, let $i_1,\ldots,i_l$ be the agent that changes its policy at each update, i.e., $\pi_{n-1}$ and $\pi_{n}$ differ only at agent $i_n$, for all $n=1,\ldots,l$. Then, we  use the two probabilities in the policy update rule in Algorithm~\ref{al:2}~(line~14) to yield 
	\begin{align*}
		&\quad P\left[  \pi_{k+\widetilde{L}} \in \widetilde{\Pi}{\rm eq} \ \big| \ E_k,\dots,E_{k+\widetilde{L}-1},   \pi_k\not\in\widetilde{\Pi}_{\rm eq} \right]\ge  P\left[  \pi_{k+\widetilde{L}} =\pi_l \ \big| \ E_k,\dots,E_{k+\widetilde{L}-1},   \pi_k\not\in\widetilde{\Pi}_{\rm eq} \right]\\
		&\ge P\left[  \pi_{k+1} =\pi_1,\pi_{k+2} =\pi_2,\ldots,\pi_{k+l}=\pi_l, \pi_{k+l+1}=\cdots=\pi_{k+\widetilde{L}}=\pi_l \ \big| \ E_k,\dots,E_{k+\widetilde{L}-1},   \pi_k\not\in\widetilde{\Pi}_{\rm eq} \right]\\
		&\ge 
		P\left[  \pi_{k+1} =\pi_1 \ \big| \ E_k,\dots,E_{k+\widetilde{L}-1},   \pi_k=\pi_0 \right]\cdot P\left[  \pi_{k+2} =\pi_2 \ \big| \ E_k,\dots,E_{k+\widetilde{L}-1},   \pi_k=\pi_0, \pi_{k+1}=\pi_1 \right]\\
		&\quad \cdot P\left[  \pi_{k+3} =\pi_3 \ \big| \ E_k,\dots,E_{k+\widetilde{L}-1},   \pi_k=\pi_0, \pi_{k+1}=\pi_1, \pi_{k+2} = \pi_2 \right] \cdot \cdots \\
		&\quad \cdot P\left[  \pi_{k+l} =\pi_l \ \big| \ E_k,\dots,E_{k+\widetilde{L}-1},   \pi_k=\pi_0, \pi_{k+1}=\pi_1,\ldots,\pi_{k+l-1} = \pi_{l-1} \right]\\
		&\quad \cdot P\left[  \pi_{k+l+1} =\pi_l \ \big| \ E_k,\dots,E_{k+\widetilde{L}-1},   \pi_k=\pi_0, \pi_{k+1}=\pi_1,\ldots,\pi_{k+l} = \pi_{l} \right] \cdot \cdots\\
		&\quad \cdot P\left[  \pi_{k+\widetilde{L}} =\pi_l \ \big| \ E_k,\dots,E_{k+\widetilde{L}-1},   \pi_k=\pi_0, \pi_{k+1}=\pi_1,\ldots,\pi_{k+l} = \pi_{l},\ldots,\pi_{k+\widetilde{L}-1} = \pi_l \right]\\
		&\ge \prod_{j\in\{i_1,\ldots,i_l\}}\left(\frac{1-\lambda^j}{\left|\Pi^{j}\right|}\cdot \prod_{i\ne j}\lambda^i\right)\ge \left(\min_{j\in \{1,\ldots,N\}} \left\{\frac{1-\lambda^j}{\left|\Pi^{j}\right|}\cdot \prod_{i\ne j}\lambda^i\right\}\right)^l \ge \left(\min_{j\in \{1,\ldots,N\}}\left\{ \frac{1-\lambda^j}{\left|\Pi^{j}\right|}\cdot \prod_{i\ne j}\lambda^i\right\}\right)^{\widetilde{L}},
	\end{align*}
	where we have used the fact from Lemma~\ref{lm:pik1t}: given $\pi_l\in\widetilde{\Pi}_{\rm eq}$ and the events $E_k,\ldots,E_{k+\widetilde{L}-1}$, the conditional probability that $\pi_s\in\widetilde{\Pi}_{\rm eq}$ is~$1$ for all $s\ge l$.
\end{proof}

We will then bound $P\left[\widetilde{E}_k,\ldots,\widetilde{E}_{k+\widetilde{L}-1}\right]$. Before doing that, we first look at $P[\widetilde{E}_k]$. We would like $P[\widetilde{E}_k]$ to be as large as possible. Note that $\frac{1}{4}\min\{\zeta^i_\theta,\bar{\zeta}_\theta-\zeta^i_\theta\}\le \frac{1}{8}\bar{\zeta}_\theta$, with equality holding when $\zeta^i_\theta = \frac{1}{2}\bar{\zeta}_\theta$. We next have the following lemma. 
\begin{lemma}
	\label{lm:Pit}
	Let $\zeta^i_\theta = \frac{\bar{\zeta}_\theta}{2}$ for all $i\in[N]$. Fix an arbitrary $\pi_k\in\Pi$. For any $0<\hat{\delta}<1$,  we have that
	$$P\left[ \widetilde{E}_k  \right] \geq 1-\hat{\delta},$$
	provided that
	$	\rho^i\le 1 - \left(1-\frac{\left(\bar{\zeta}_\theta/8 - \epsilon\right)(1-\bar{\gamma})-2b}{\widetilde{\Gamma}}\right)^{\frac{1}{N-1}}$, and $T_k$ and $\eta_t^i$ satisfy~\eqref{eq:Tetatheta}, where $\epsilon$ can take any value in $0<\epsilon<\min\left\{\frac{\bar{\zeta}_\theta}{8}-\frac{2b}{1-\bar{\gamma}}, \frac{1}{1-{\gamma}_{\min}}\right\}$.
\end{lemma}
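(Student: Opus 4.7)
The plan is to combine Lemma~\ref{lem:theta} (via its implication~\eqref{eq:thetatoq}) with Lemma~\ref{lm:Qexptheta} through a triangle inequality argument, exactly mirroring the structure of Lemma~\ref{lm:Pi} in the tabular setting, but now carrying an extra $\frac{2b}{1-\bar{\gamma}}$ term to account for the non-zero minimum Bellman error. Specifically, for any agent $i$ I would write
\begin{align*}
\left|Q^i_{\theta^i_{t_{k+1}}} - Q^i_{\theta^i_{\pi_k^{-i}}}\right|_\infty \le \left|Q^i_{\theta^i_{t_{k+1}}} - Q^i_{\theta^i_{\bar{\pi}_k^{-i}}}\right|_\infty + \left|Q^i_{\theta^i_{\bar{\pi}_k^{-i}}} - Q^i_{\theta^i_{\pi_k^{-i}}}\right|_\infty.
\end{align*}
The first term measures the Q-learning approximation error during the $k$-th exploration phase, and by~\eqref{eq:thetatoq} is bounded by $\epsilon$ uniformly over $i$ with probability at least $1-\hat{\delta}$ provided $T_k$ and $\eta_t^i$ satisfy~\eqref{eq:Tetatheta}. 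The second term is the policy-perturbation error, which by Lemma~\ref{lm:Qexptheta} is bounded by $\tilde{\epsilon} + \frac{2b}{1-\bar{\gamma}}$ deterministically whenever $\rho^i$ satisfies~\eqref{eq:rhoitheta} with parameter $\tilde{\epsilon}$.

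Next, I would choose $\tilde{\epsilon}$ so that the two bounds add up to the desired threshold. Since $\zeta^i_\theta = \bar{\zeta}_\theta/2$ makes $\tfrac{1}{4}\min\{\zeta^i_\theta,\bar{\zeta}_\theta-\zeta^i_\theta\} = \bar{\zeta}_\theta/8$, I need
\begin{align*}
\epsilon + \tilde{\epsilon} + \frac{2b}{1-\bar{\gamma}} \;\le\; \frac{\bar{\zeta}_\theta}{8},
\end{align*}
so I would set $\tilde{\epsilon} = \bar{\zeta}_\theta/8 - \epsilon - 2b/(1-\bar{\gamma})$, which is strictly positive precisely because the admissible range of $\epsilon$ given in the lemma enforces $\epsilon < \bar{\zeta}_\theta/8 - 2b/(1-\bar{\gamma})$ (and Assumption~\ref{as:bound} guarantees the right-hand side is positive). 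Substituting this choice of $\tilde{\epsilon}$ into condition~\eqref{eq:rhoitheta} yields exactly the stated upper bound on $\rho^i$,
\begin{align*}
\rho^i \le 1 - \left(1-\frac{(\bar{\zeta}_\theta/8-\epsilon)(1-\bar{\gamma})-2b}{\widetilde{\Gamma}}\right)^{\frac{1}{N-1}}.
\end{align*}

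Putting the pieces together, under the hypotheses of the lemma one has $|Q^i_{\theta^i_{t_{k+1}}} - Q^i_{\theta^i_{\pi_k^{-i}}}|_\infty < \bar{\zeta}_\theta/8$ for every $i\in[N]$ on the same high-probability event of measure at least $1-\hat{\delta}$ (the event supplied by Lemma~\ref{lem:theta}; the policy-perturbation bound is deterministic and hence does not consume probability mass). This is exactly $\widetilde{E}_k$, finishing the proof. No step is truly difficult here; the only subtle bookkeeping is verifying that the admissible range of $\epsilon$ stated in the lemma makes $\tilde{\epsilon}>0$ so that the condition on $\rho^i$ is non-vacuous, which in turn requires Assumption~\ref{as:bound}'s quantitative control $b<(1-\bar{\gamma})\bar{\zeta}_\theta/8$. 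This is the mild analogue of the obstacle we already addressed in the tabular case, and it is what forces the sharper $b$-dependent ranges to appear throughout Theorem~\ref{thm:2}.
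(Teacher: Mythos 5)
Your proposal is correct and is essentially identical to the paper's own proof: both combine the high-probability bound $\epsilon$ from Lemma~\ref{lem:theta} (via~\eqref{eq:thetatoq}) with the deterministic perturbation bound $\tilde{\epsilon}+\frac{2b}{1-\bar{\gamma}}$ from Lemma~\ref{lm:Qexptheta} by the triangle inequality, and then set $\tilde{\epsilon}=\frac{\bar{\zeta}_\theta}{8}-\epsilon-\frac{2b}{1-\bar{\gamma}}$ so that condition~\eqref{eq:rhoitheta} becomes the stated bound on $\rho^i$. One small caveat: your parenthetical claim that Assumption~\ref{as:bound} (i.e., $b<(1-\bar{\gamma})\bar{\zeta}_\theta/8$) guarantees $\frac{\bar{\zeta}_\theta}{8}-\frac{2b}{1-\bar{\gamma}}>0$ is not quite right — that would require $b<(1-\bar{\gamma})\bar{\zeta}_\theta/16$ — but this imprecision is inherited from the paper itself and does not affect the structure of the argument.
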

\begin{proof}[Proof of Lemma~\ref{lm:Pit}]\
	A direct implication of Lemma~\ref{lem:theta} with \eqref{eq:thetatoq} and Lemma~\ref{lm:Qexptheta} is that when $T_k$ and $\eta_t^i$ satisfy~\eqref{eq:Tetatheta}, and $\rho^i$ satisfies~\eqref{eq:rhoitheta}, then, by triangle inequality, we have that
	\begin{align}
		P\left[\left|Q^i_{\theta^i_{t_{k+1}}}-Q^i_{\theta^i_{\pi_k^{-i}}}\right|_\infty\le \epsilon+\tilde{\epsilon} + \frac{2b}{1-\bar{\gamma}},\quad \forall i\in[N]\right]\ge 1-\hat{\delta}.
	\end{align}
	The lemma then follows by taking $\tilde{\epsilon} = \frac{1}{8}\bar{\zeta}_\theta-\epsilon - \frac{2b}{1-\bar{\gamma}}$.
\end{proof}
We then have the following lemma which bounds $P\left[\widetilde{E}_k,\ldots,\widetilde{E}_{k+\widetilde{L}-1}\right]$.
\begin{lemma}\label{lm:EkLt}
	For any arbitrary sequence of joint policies $\pi_k,\ldots,\pi_{k+\widetilde{L}-1}\in\Pi$, and for any $0<\tilde{\delta}<1$, we have that
	\begin{align*}
		P\left[\widetilde{E}_k,\ldots,\widetilde{E}_{k+\widetilde{L}-1}\right]\ge 1-\tilde{\delta},
	\end{align*}
	provided that for all $i\in[N]$ and for all $\hat{k}\in\{k,\ldots,k+\widetilde{L}-1\}$,
	\begin{subequations}\label{eq:Teta2t}
		\begin{align}
			\eta^i_t&=\eta^i \le \frac{ \epsilon^2\tilde{\delta}\xi^i}{456N\widetilde{L}\left(1+\gamma^i+r_{\max}^i\right)^2(D^i+1)^2t_{{\rm mix},k}^i(\eta^i)}, \quad\forall t=t_{\hat{k}},\ldots,t_{\hat{k}+1}-1,\forall i\in[N],\label{eq:etati2t}\\
			T_{\hat{k}}&\ge t_{{\rm mix},k}(\eta_{\min}) + \frac{\log \frac{\epsilon^2\tilde{\delta}}{2N\widetilde{L}(2D+1)^2}}{\log\left(1-\xi_{\min}\eta_{\min}/2\right)},\\
			\rho^i&\le 1 - \left(1-\frac{\left(\bar{\zeta}_\theta/8 - \epsilon\right)(1-\bar{\gamma})-2b}{\widetilde{\Gamma}}\right)^{\frac{1}{N-1}},\\
			\zeta^i &= \frac{\bar{\zeta}}{2},
		\end{align}
	\end{subequations} 
	where $\epsilon$ can take any value in $0<\epsilon<\min\left\{\frac{\bar{\zeta}_\theta}{8}-\frac{2b}{1-\bar{\gamma}}, \frac{1}{1-{\gamma}_{\min}}\right\}$.
\end{lemma}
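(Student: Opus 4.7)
The plan is to mimic the proof of Lemma~\ref{lm:EkL} almost verbatim, with Lemma~\ref{lm:Pit} playing the role that Lemma~\ref{lm:Pi} played in the tabular case, and with the union bound now taken over $\widetilde{L}$ events rather than $L$ events. The reason this works cleanly is that Lemma~\ref{lm:Pit} already holds for an arbitrary fixed $\pi_{\hat{k}}\in\Pi$, so we may invoke it separately for each index $\hat{k}\in\{k,\ldots,k+\widetilde{L}-1\}$ irrespective of how the sequence of joint policies is realized.

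First I would set $\hat{\delta}:=\tilde{\delta}/\widetilde{L}$ inside Lemma~\ref{lm:Pit}. The conditions on $\rho^i$ and $\zeta^i_\theta$ in Lemma~\ref{lm:Pit} do not involve $\hat{\delta}$, so they carry over directly and match the conditions stated in~\eqref{eq:Teta2t}. The conditions on $\eta^i_t$ and $T_{\hat{k}}$ come from~\eqref{eq:Tetatheta} in Lemma~\ref{lem:theta}, which Lemma~\ref{lm:Pit} inherits; substituting $\hat{\delta}=\tilde{\delta}/\widetilde{L}$ into those conditions yields precisely the bound~\eqref{eq:etati2t} on $\eta^i_t$ and the stated lower bound on $T_{\hat{k}}$. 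Consequently, under the hypotheses of Lemma~\ref{lm:EkLt}, we obtain
\begin{align*}
P\left[\widetilde{E}_{\hat{k}}^c\right] \le \tilde{\delta}/\widetilde{L}, \quad \forall\, \hat{k}\in\{k,\ldots,k+\widetilde{L}-1\}.
\end{align*}

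Next I would apply the union bound together with De Morgan's law:
\begin{align*}
P\left[\widetilde{E}_k,\ldots,\widetilde{E}_{k+\widetilde{L}-1}\right]
= 1 - P\left[\widetilde{E}_k^c\cup\cdots\cup\widetilde{E}_{k+\widetilde{L}-1}^c\right]
\ge 1 - \sum_{\hat{k}=k}^{k+\widetilde{L}-1} P\left[\widetilde{E}_{\hat{k}}^c\right]
\ge 1 - \widetilde{L}\cdot \frac{\tilde{\delta}}{\widetilde{L}} = 1-\tilde{\delta},
\end{align*}
which is exactly the claim.

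I do not anticipate any genuine obstacle here; the argument is essentially bookkeeping. The only point to be careful about is that the per-phase guarantee from Lemma~\ref{lm:Pit} is a pointwise statement over $\pi_{\hat{k}}\in\Pi$, so the union bound is valid even though the baseline policy evolves across phases, and the per-phase step-size and sample-complexity bounds must be enforced uniformly over $\hat{k}$ (which is why the bounds in~\eqref{eq:Teta2t} are written with $\tilde{\delta}/\widetilde{L}$ absorbed inside). This then permits the two subsequent conditional statements~\eqref{eq:condPEin}--\eqref{eq:condPEnot}, needed downstream in the proof of Theorem~\ref{thm:2}, to be obtained by the same argument applied under the appropriate conditioning event.
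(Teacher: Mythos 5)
Your proposal is correct and follows essentially the same route as the paper's proof: invoke Lemma~\ref{lm:Pit} per phase with failure probability $\hat{\delta}=\tilde{\delta}/\widetilde{L}$, apply the union bound over the $\widetilde{L}$ complement events, and observe that this substitution turns the conditions~\eqref{eq:Tetatheta} into~\eqref{eq:Teta2t}. No gaps.
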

\begin{proof}[Proof of Lemma~\ref{lm:EkLt}]\
	When the conditions of Lemma~\ref{lm:Pit} are satisfied, we have that $P\left[\widetilde{E}_k^c\right] < \hat{\delta}$, where $\widetilde{E}_k^c$ is the complement of $\widetilde{E}_k$. Then,
	\begin{align*}
		P\left[\widetilde{E}_k,\ldots,\widetilde{E}_{k+\widetilde{L}-1}\right]&= 1-P\left[\left(\widetilde{E}_k,\ldots,\widetilde{E}_{k+\widetilde{L}-1}\right)^c\right]=1-P\left[\widetilde{E}_k^c\cup\cdots\cup \widetilde{E}_{k+\widetilde{L}-1}^c\right]\\
		&\ge 1-\left(P\left[\widetilde{E}_k^c\right]+P\left[\widetilde{E}_{k+1}^c\right]+\cdots +P\left[\widetilde{E}_{k+\widetilde{L}-1}^c\right]\right) = 1-\widetilde{L}\hat{\delta}.
	\end{align*}
	By taking $\tilde{\delta} = \widetilde{L}\hat{\delta}$, it follows that the conditions~\eqref{eq:Tetatheta} now become~\eqref{eq:Teta2t}, and the lemma is proved.
\end{proof}


For simplicity, we choose $\epsilon =\frac{1}{2}\min\left\{\frac{\bar{\zeta}_\theta}{8}-\frac{2b}{1-\bar{\gamma}}, \frac{1}{1-{\gamma}_{\min}}\right\}$  in Theorem~\ref{thm:2}.
Note that the result of Lemma~\ref{lm:EkLt} holds for any realization of $\pi_k\in\Pi$. Therefore, under the same conditions, we in fact have that
\begin{subequations}
	\begin{align}
		P\left[\widetilde{E}_k,\ldots,\widetilde{E}_{k+\widetilde{L}-1} \ \big|\ \pi_k\in\widetilde{\Pi}_{\rm eq}\right]\ge 1-\tilde{\delta},\label{eq:condPEint}\\
		P\left[\widetilde{E}_k,\ldots,\widetilde{E}_{k+\widetilde{L}-1} \ \big|\ \pi_k\notin\widetilde{\Pi}_{\rm eq}\right]\ge 1-\tilde{\delta}.\label{eq:condPEnott}
	\end{align}
\end{subequations}
By Lemma~\ref{lm:pik1t} and~\eqref{eq:condPEint},  under conditions~\eqref{eq:Teta2t}, we have that for all~$k$,
\begin{align}\label{eq:int}
	P\left[\pi_{k}=\pi_{k+1}=\cdots=\pi_{k+\widetilde{L}}\ \big|\ \pi_k\in\widetilde{\Pi}_{\rm eq}\right]\ge 1-\tilde{\delta}.
\end{align}
By Lemma~\ref{lm:p0t} and~\eqref{eq:condPEnott}, under conditions~\eqref{eq:Teta2t}, we have that for all~$k$,
\begin{align}\label{eq:outt}
	P\left[\pi_{k+\widetilde{L}}\in\widetilde{\Pi}_{\rm eq}\ \big|\  \pi_k\notin\widetilde{\Pi}_{\rm eq}\right]\ge \tilde{p}\left(1-\tilde{\delta}\right).
\end{align}
As a notation, let ${p}_k := P\left[\pi_k\in\widetilde{\Pi}_{\rm eq}\right]$. Then,~\eqref{eq:int} and~\eqref{eq:outt} together imply that
\begin{align}
	p_{(n+1)L} \geq p_{nL}\left(1-\tilde{\delta}\right)  + (1-p_{nL})\tilde{p}\left(1-\tilde{\delta}\right). \label{eq:geometricConvt}
\end{align}
Rearranging the above, we obtain that
\begin{align}
	p_{(n+1)L} - p_{nL} &\ge    \left(1-\tilde{\delta}\right)\tilde{p} - \tilde{\delta}p_{nL}-\left(1-\tilde{\delta}\right)\tilde{p}p_{nL} =  \left[\tilde{\delta}+\left(1-\tilde{\delta}\right)\tilde{p}\right]\left[\frac{\left(1-\tilde{\delta}\right)\tilde{p}}{\tilde{\delta}+\left(1-\tilde{\delta}\right)\tilde{p}} - p_{nL}\right]\label{eq:nL1t}\\
	&\ge -\tilde{\delta}\label{eq:nL2t}
\end{align}
Note that $p_{(n+1)L}-p_{nL}\ge 0$ as long as $p_{nL}\le \frac{\left(1-\tilde{\delta}\right)\tilde{p}}{\tilde{\delta}+\left(1-\tilde{\delta}\right)\tilde{p}}$. Further,  if $p_{nL}\le \frac{\left(1-\tilde{\delta}\right)\tilde{p}}{\tilde{\delta}+\left(1-\tilde{\delta}\right)\tilde{p}} - \tilde{\delta}$, then from~\eqref{eq:nL1t}, we have that $p_{(n+1)L} - p_{nL}\ge \left[\tilde{\delta}+\left(1-\tilde{\delta}\right)\tilde{p}\right]\tilde{\delta}$; if $p_{nL} > \frac{\left(1-\tilde{\delta}\right)\tilde{p}}{\tilde{\delta}+\left(1-\tilde{\delta}\right)\tilde{p}}$, then $p_{(n+1)L}-p_{nL}\ge -\tilde{\delta}$ from~\eqref{eq:nL2t}. Therefore, we have that
\begin{align}
	p_{nL}\ge \frac{\left(1-\tilde{\delta}\right)\tilde{p}}{\tilde{\delta}+\left(1-\tilde{\delta}\right)\tilde{p}} - \tilde{\delta}, \quad \forall n\ge \tilde{n},
\end{align}
where
\begin{align}
	\tilde{n}:= \frac{\frac{\left(1-\tilde{\delta}\right)\tilde{p}}{\tilde{\delta}+\left(1-\tilde{\delta}\right)\tilde{p}} - \tilde{\delta}}{\left[\tilde{\delta}+\left(1-\tilde{\delta}\right)\tilde{p}\right]\tilde{\delta}} = \frac{\left(1-\tilde{\delta}\right)^2\tilde{p}-\tilde{\delta}^2}{\left[\tilde{\delta}+\left(1-\tilde{\delta}\right)\tilde{p}\right]^2\tilde{\delta}}.
\end{align}
This, together with~\eqref{eq:int}, implies that for all $n\ge \tilde{n}$,
\begin{align}\label{eq:deltatildet}
	P\left[\pi_{nL}=\pi_{nL+1}=\cdots=\pi_{nL+L}\in\widetilde{\Pi}_{\rm eq}\right] \ge \left(\frac{\left(1-\tilde{\delta}\right)\tilde{p}}{\tilde{\delta}+\left(1-\tilde{\delta}\right)\tilde{p}} - \tilde{\delta}\right)\left(1-\tilde{\delta}\right) := f(\tilde{\delta}).
\end{align}
Therefore, if the number of exploration phases $k\ge K:=\tilde{n}\widetilde{L}$, then $P\left[\pi_k\in\widetilde{\Pi}_{\rm eq}\right]\ge f(\tilde{\delta})$. Note that $f(\tilde{\delta})$ is continuous, decreasing in $\tilde{\delta}$, and $f(0) = 1$, $f(\delta) < 1-\delta$ for any $0<\delta<1$. Thus, we can take $\tilde{\delta}\in (0,\delta)$ such that 
\begin{align}
	\left(\frac{\left(1-\tilde{\delta}\right)\tilde{p}}{\tilde{\delta}+\left(1-\tilde{\delta}\right)\tilde{p}} - \tilde{\delta}\right)\left(1-\tilde{\delta}\right) =1-\delta,
\end{align}
which leads to $P\left[\pi_k\in\widetilde{\Pi}_{\rm eq}\right]\ge 1-\delta$ for all $k\ge K$, and this completes the proof of Theorem~\ref{thm:2}.

\section{Proof of Theorem~\ref{thm:3}}\label{sec:appb}

Following Lemma~\ref{lem:theta} and by noting that $|\phi^i|_\infty \le \|\phi^i\|_2 \le 1$, Lemma~\ref{lem:theta} implies that for an arbitrary $\bar{\pi}_k$ as in~\eqref{eq:pibar} and for any $\epsilon>0$ and $0<\hat{\delta}<1$,
\begin{align*}\label{eq:thetatoq1}
	P\left[  \left|Q_{\theta_{t_{k+1}}^i}^i - Q^i_{\theta_{\bar{\pi}_k^{-i}}^{i}}\right|_{\infty} \le \epsilon, \ \forall i\in[N]  \right] = P\left[  \left|{\phi^i}^\top\theta_{t_{k+1}}^i - {\phi^i}^\top\theta_{\bar{\pi}_k^{-i}}^{i}\right|_{\infty} \le \epsilon, \ \forall i\in[N]  \right] \geq 1-\hat{\delta}
\end{align*}
when the conditions~\eqref{eq:Tetatheta} are satisfied.
Under Assumption~\ref{as:realizable}, we have that $Q^i_{\theta_{\bar{\pi}_k^{-i}}^{i}} = Q^i_{\bar{\pi}_k^{-i}}$, which leads to
\begin{align*}
	P\left[  \left|Q_{\theta_{t_{k+1}}^i}^i - Q^i_{{\bar{\pi}_k^{-i}}}\right|_{\infty} \le \epsilon, \ \forall i\in[N]  \right]  \geq 1-\hat{\delta}. 
\end{align*}
The above is exactly the same as the result in Lemma~\ref{lm:Qappx}, and the rest of the proof follows the proof of Theorem~\ref{thm:1}.
\end{APPENDICES}
\end{document}